\newif\ifsubmit     %
\newif\ifllncs      %
\newif\ifexabs      %
\newif\ifblind      %
  \spnewtheorem{claim}{Claim}{\bfseries}{\rmfamily}
  \crefname{claim}{claim}{claims}
  \Crefname{claim}{Claim}{Claims}
  \newtheorem{theorem}{Theorem}[section]
  \newtheorem{definition}[theorem]{Definition}
  \newtheorem{remark}[theorem]{Remark}
  \newtheorem{lemma}[theorem]{Lemma}
  \newtheorem{corollary}[theorem]{Corollary}
  \newtheorem{example}[theorem]{Example}
  \newtheorem*{remark*}{Remark}
  \newtheorem*{theorem*}{Theorem}
  \newtheorem*{lemma*}{Lemma}
\newcommand{\email}[1]{\href{mailto:#1}{\texttt{#1}}}
\setlist[description]{noitemsep}
\setlist[enumerate]{noitemsep}
\setlist[itemize]{noitemsep}
    \NewDocumentCommand{\whiten}{ m }
    {
      \int_step_function:nnnN {1}{1}{#1} \white_text:n
    }
  \NewDocumentCommand{ \varul }{ D<>{5} O{0.2ex} O{0.1ex} +m } {%
    \begingroup
    \setul{#2}{#3}%
    \def\SOUL@uleverysyllable{%
      \setbox0=\hbox{\the\SOUL@syllable}%
      \ifdim\dp0>\z@
      \SOUL@ulunderline{\phantom{\the\SOUL@syllable}}%
      \whiten{#1}%
      \llap{%
        \the\SOUL@syllable
        \SOUL@setkern\SOUL@charkern
      }%
      \else
      \SOUL@ulunderline{%
        \the\SOUL@syllable
        \SOUL@setkern\SOUL@charkern
      }%
      \fi}%
    \ul{#4}%
    \endgroup
  }
\newcommand{\As}{\mathcal{A}}
\newcommand{\Bs}{\mathcal{B}}
\newcommand{\Cs}{\mathcal{C}}
\newcommand{\samp}{{\sf Samp}}
\newcommand{\ver}{{\sf Verify}}
\newcommand{\ch}{{\sf ch}}
\newcommand{\ans}{{\sf ans}}
\renewcommand{\pi}{{\sf PI}}
\newenvironment{gamespec}{
  \begin{mdframed}[style=figstyle]}{
  \end{mdframed}}
\renewcommand{\kappa}{\ell}
\newcommand{\poly}{{\sf poly}}
\newcommand{\negl}{{\sf negl}}
    \newcommand{\luowen}[1]{}
    \newcommand{\qipeng}[1]{}
    \newcommand{\jiahui}[1]{}
    \newcommand{\markz}[1]{}
    \newcommand{\luowen}[1]{{\color{magenta} Luowen: #1}}
    \newcommand{\qipeng}[1]{{\color{red} Qipeng: #1}}
    \newcommand{\jiahui}[1]{{\color{blue} Jiahui: #1}}
    \newcommand{\markz}[1]{{\color{olive} Mark: #1}}
\newcommand{\p}{\mathbf{p}}
\newcommand{\CP}{{\mathsf{CP}}}
\newcommand{\IsUniform}{{\mathsf{IsUniform}}}
\newcommand{\oneproject}{{\ket{\mathbbm{1}_{\mc{R}}}\bra{\mathbbm{1}_{\mc{R}}}}}
\newcommand{\identity}{\mathbf{I}}
\newcommand{\rR}{\mathbf{R}}
\newcommand{\rA}{\mathbf{A}}
\newcommand{\inp}{{\sf inp}}
\newcommand{\mc}[1]{{\mathcal{#1}}}
\newcommand{\mb}[1]{{\mathbf{#1}}}
\newcommand{\mbb}[1]{{\mathbb{#1}}}
\title{
Non-uniformity and Quantum Advice in the Quantum Random Oracle Model
}
\author{
}
\institute{
}
\author{
  Qipeng Liu\footnote{Simons Institute for the Theory of Computing. Email: \email{qipengliu0@gmail.com}}
}
\date{}
\begin{document}

\maketitle

\ifllncs 
\begin{abstract}
QROM (quantum random oracle model), introduced by Boneh et al. (Asiacrypt 2011), captures all generic algorithms. However, it fails to describe non-uniform quantum algorithms with preprocessing power, which receives a piece of bounded classical or quantum advice. 

As non-uniform algorithms are largely believed to be the right model for attackers, starting from the work by Nayebi, Aaronson, Belovs, and Trevisan (QIC 2015), a line of works investigates non-uniform security in the random oracle model. Chung, Guo, Liu, and Qian (FOCS 2020) provide a framework and establish non-uniform security for many cryptographic applications. %
Although they achieve nearly optimal bounds for many applications with classical advice, their bounds for quantum advice are far from tight. 

In this work, we continue the study on quantum advice in the QROM. We provide a new idea that generalizes the previous multi-instance framework, which we believe is more quantum-friendly and should be the quantum analog of multi-instance games. To this end, we {\emph {match}} the bounds with \emph{quantum advice} to those with \emph{classical advice} by Chung et al., showing quantum advice is almost as good/bad as classical advice for many natural security games in the QROM.

Finally, we show that for some contrived games in the QROM, quantum advice can be exponentially better than classical advice for some parameter regimes. To our best knowledge, it provides an evidence of a general separation between quantum and classical advice relative to an unstructured oracle. \end{abstract}
\else
\begin{abstract}In the quantum random oracle model (QROM) introduced by Boneh et al. (Asiacrypt 2011), a hash function is modeled as a uniformly random oracle, and a quantum algorithm can only interact with the hash function in a black-box manner. 
QRO methodology captures all generic algorithms. However, they fail to describe non-uniform quantum algorithms with preprocessing power, which receives a piece of bounded classical or quantum advice. 

As non-uniform algorithms are largely believed to be the right model for attackers, starting from the work by Nayebi, Aaronson, Belovs, and Trevisan (QIC 2015), a line of works investigates non-uniform security in the random oracle model. Chung, Guo, Liu, and Qian (FOCS 2020) provide a framework and establish non-uniform security for many cryptographic applications. %
Although they achieve nearly optimal bounds for many applications with classical advice, their bounds for quantum advice are far from tight. 

In this work, we continue the study on quantum advice in the QROM. We provide a new idea that generalizes the previous multi-instance framework, which we believe is more quantum-friendly and should be the quantum analog of multi-instance games. To this end, we {\emph {match}} the bounds with \emph{quantum advice} to those with \emph{classical advice} by Chung et al., showing quantum advice is almost as good/bad as classical advice for many natural security games in the QROM. More formally, 
\begin{itemize}
    \item {\bf OWFs:} Even with $S$-{\emph{qubits of quantum advice}}, a $T$-query quantum algorithm has advantage $O((ST+T^2)/N)$ to invert a random function with domain and range size $N$. As shown by Corrigan-Gibbs and Kogan (TCC 2019), any further improvement will lead to new classical circuit lower bounds.
    \item {\bf PRGs:} An $S$-qubit, $T$-query quantum algorithm can distinguish between a random image and a random element in the range, with an winning probability at most $1/2+O(T^2/N)^{1/2}+O(ST/N)^{1/3}$, in contrast to $1/2+O((S^5T+S^4T^2)/N)^{1/19}$ by Chung et al.
    \item {\bf Salting:} A commonly used mechanism in cryptography called salting defeats preprocessing, even with quantum advice, improved the bounds by Chung et al.  
\end{itemize}
Finally, we show that for some contrived games in the QROM, quantum advice can be exponentially better than classical advice for some parameter regimes. To our best knowledge, it provides the first evidence of a general separation between quantum and classical advice relative to an unstructured oracle. 

\end{abstract}
\fi

\section{Introduction}

Many practical cryptographic constructions are analyzed in idealized models, for example, the random oracle model which treats an underlying hash function as a uniformly random oracle (ROM) \cite{bellare1993random}. On a high level, the random oracle model captures all algorithms that use the underlying hash function in a generic (black-box) way; often, the best attacks are generic. 
Whereas the random oracle methodology guides the actual security of practical constructions, it fails to describe non-uniform security: that is, an algorithm consists of two parts, the offline and the online part; the offline part can take forever, and at the end of the day, it produces a piece of bounded advice for its online part; the online part given the advice, tries to attack cryptographic constructions efficiently.

Non-uniform algorithms are largely believed to be the right model for attackers and usually show advantages over uniform algorithms \cite{C:Unruh07,coretti2018random,coretti2018non}. The famous non-uniform example is Hellman's algorithm \cite{hellman1980cryptanalytic} for inverting permutations or functions. When a permutation of range and domain size $N$ is given, Hellman's algorithm can invert any image (with certainty) with roughly advice size $\sqrt{N}$ and running time $\sqrt{N}$. In contrast, uniform algorithms require running time $N$ to achieve constant success probability. Another more straightforward example is collision resistance. When non-uniform algorithms are presented, no single fixed hash function is collision-resistant as an algorithm can hardcode a pair of collisions in its advice. %

Non-uniform security in idealized models has been studied extensively in the literature. Let us take the two most simple yet fundamental security games as examples: one search game and one decision game. The first one is one-way function inversion (or OWFs) as mentioned above. The goal is to invert a random image of the random oracle. The study was initialized by Yao \cite{yao1990coherent} and later improved by a line of works \cite{de2010time,C:Unruh07,dodis2017fixing,coretti2018random}. They show that any $T$-query algorithm with arbitrary $S$-bit advice, can win this game with probability at most $\tilde{O}(ST/N)$, assuming the random oracle has equal domain and range size. The other example is pseudorandom generators (or PRG). The task is to distinguish between a random image $H(x)$ ($x$ is uniformly at random and $H$ is the hash function) or a random element $y$ in its range. Since it is a decision game, some techniques for OWFs may not apply to PRGs, which we will see later. Its non-uniform security is $O(1/2+T/N + \sqrt{ST/N})$ by
Coretti et al. \cite{coretti2018random}, and later improved 
by Garvin et al. \cite{gravin2021concentration}. 

The quantum setting is very similar to the classical one, except an algorithm can query the random oracle in superposition. Boneh et al. \cite{AC:BDFLSZ11} justify the ability to make superposition queries since a quantum computer can always learn the description of a hash function and compute it coherently. Besides, advice can be either a sequence of {\bf bits} or {\bf qubits}. We should carefully distinguish between the two different models. Indeed, we believe non-uniform quantum algorithms with quantum advice are important to understand and should be considered the ``right'' attacker model when full-scale quantum computers are widely viable and quantum memory is affordable. 

Nayebi, Aaronson, Belovs, and Trevisan \cite{nayebi2014quantum} initiated the study of quantum non-uniform security with classical advice of OWFs and PRGs. Hhan, Xagawa and Yamakawa \cite{hxy19}, Chung, Liao and Qian \cite{chung2019lower} extended the study to quantum advice. Most recently, Chung, Guo, Liu and Qian \cite{chung2020tight} improved the bounds for both examples. For OWFs, their bounds are almost optimal in terms of query complexity for both classical and quantum advice. They show that to invert a random image with at least constant probability, advice size $S$ and the number of queries $T$ should satisfy $S T + T^2 \geq \tilde{\Omega}(N)$. However, a gap between classical and quantum advice appears when we choose security parameters for practical hash functions against non-uniform attacks. In practice, we ensure that an adversary with bounded resources (for example, $S = T = 2^{128}$) only has probability smaller than $2^{-128}$. The bounds in \cite{chung2020tight} suggest that for OWF, the security parameter needs to be $n = 384$ (and $N = 2^{384}$) for classical advice and $n = 640$ for quantum advice, leaving a big gap between two types of advice. Even worse, when it comes to PRGs, the security parameters are $n = 640$ for classical advice v.s. $n = 3200$ for quantum advice; not to mention a large gap between their query complexity, unlike OWFs. 

As understanding quantum advice is beneficial to both practical cryptography efficiency and may inspire general computation theory (such as, ${\sf QMA}$ v.s. ${\sf QCMA}$ \cite{aaronson2007quantum,aaronson2021open} and ${\sf BQP/poly}$ v.s. ${\sf BQP/qpoly}$ \cite{aaronson2004limitations}), we raise the following natural question: 

\begin{quote}
{\it Can quantum advice outperform classical advice in the QROM?}
\end{quote}

In this work, we provide a new technique for analyzing quantum advice in the QROM and show that for many games, the non-uniform security with quantum advice matches the best-known security with classical advice, including OWFs and PRGs. It gives strong evidence that for many cryptographic games in the QROM, quantum advice provides no or little advantage over classical one. 

So far, we have seen no advantage of quantum advice in the QROM for common cryptographic games. We then ask the second question: 

\begin{quote}
{\it Is there any (contrived) game in the QROM, in which quantum advice is ``exponentially better'' than classical advice? }
\end{quote}
We give an affirmative answer to this question, for some parameters of $S, T$. We show that when algorithms can not make online queries (i.e., $T = 0$), there is an exponential separation between quantum and classical advice for certain games. This result is inspired by the recent work by Yamakawa and Zhandry \cite{yamakawa2022verifiable} on verifiable quantum advantages in the QROM. We elaborate on both results now.

\subsection{Our Results}

Our first result is to give a quantum analog of ``multi-instance games'' via ``alternating measurement games'' (introduced in \Cref{sec:main_theorem}) and develop a new technique for analyzing non-uniform bounds with quantum advice.
Our techniques do not need to rewind a non-uniform quantum algorithm and completely avoid the rewinding issues/difficulties in the prior work \cite{chung2020tight}. We delay the technical details in \Cref{sec:overview} and give other results below. 

To show the power of our technique, we incorporate it into three important applications: OWFs, PRGs, and salted cryptography. 
Note that our result below is a non-exhaustive list of applications. With little effort, we can show improved non-uniform security with quantum advice of Merkle-Damg\r ard \cite{guo2021unifying}, Yao's box \cite{chung2020tight} and other games. 

\paragraph{One-Way Functions.} 

In this application, a random oracle is interpreted as a one-way function. A (non-uniform) algorithm needs to win the OWF security game with the random oracle as a OWF. Formally, let $H: [N] \to [M]$ be a random oracle. 
\begin{enumerate}
    \item A challenger samples a uniformly random input $x \in [N]$ and sends $y = H(x)$ to the algorithm.
    \item The algorithm returns $x'$ and it wins if and only if $H(x') = y$. 
\end{enumerate}

When both advice and queries are classical, the best lower bound is $\tilde{O}(ST/\alpha)$ by \cite{coretti2018random},  where $\alpha = \min\{N, M\}$ and $N$, $M$ are the domain and range size of the random oracle. In other words, no algorithm with $S$ bits of advice and $T$ classical queries can win with probability more than $\tilde{O}(ST/\alpha)$. There is a gap between this lower bound and the upper bound $\approx T/\alpha + (S^2 T/\alpha^2)^{1/3}$ provided by Hellman's algorithm\footnote{Hellman's algorithm on functions does not behave as well as on permutations. Upper and lower bounds meet at $ST/\alpha$ only when we consider permutations.}. Later, Corrigan-Gibbs and Kogan \cite{corrigan2019function} study the possible improvement on the lower bound and conclude that any improvement will lead to improved results in circuit lower bounds. Thus, $\tilde{O}(ST/\alpha)$ is the best one can hope for in light of the barrier. 

Chung et al. \cite{chung2020tight} show that if $S$ bits of classical advice and $T$ quantum queries are given, the maximum winning probability is bounded by $\tilde{O}\left( \frac{S T + T^2}{\alpha} \right)$.
They further argue that this bound is almost optimal. Intuitively, one can think of this as $T^2/\alpha$ comes from a brute-force Grover's algorithm \cite{grover1996fast}, without using any advice, and $ST/\alpha$ comes from classical advice and hits the classical barrier by \cite{corrigan2019function}. 

For quantum advice and quantum queries, they show the maximum success probability is $\tilde{O}\left( \frac{S T + T^2}{\alpha} \right)^{1/3}$. As mentioned early, although the bound is optimal regarding query complexity, the exponent seems non-tight. Thus, they ask the following question:
\begin{quote}
\it ... Can this loss (of the exponent) be avoided, or is there any speed up in terms of $S$ and $T$ for sub-constant success probability?. 
\end{quote}

Our first result gives a positive answer to the above question and proves that the loss on exponent can be avoided. 
\begin{theorem}\label{thm:simple_OWF}
Let $H$ be a random oracle $[N] \to [M]$ and $\alpha = \min\{N, M\}$. 
One-way function games in the QROM have security $O\left( \frac{ST + T^2}{\alpha}\right)$ against non-uniform quantum algorithms with $S$-qubits of advice and $T$ quantum queries. 
\end{theorem}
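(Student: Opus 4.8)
The plan is to reduce the single-instance one-way function game with $S$ qubits of quantum advice to a \emph{no-advice} multi-instance game, and then to bound the latter by a direct quantum query argument. Write $\rho_H$ for the (oracle-dependent) $S$-qubit advice state and let $\delta$ denote the quantity we want to bound, namely the success probability of the $T$-query online algorithm averaged over $H$ and the challenge $x$. The natural first move, used for \emph{classical} advice, is to pass to a $k$-fold game in which a no-advice algorithm must answer $k$ fresh, independent OWF challenges (each with $T$ queries) against the same oracle, and to use the relation $\delta^{k} \le 2^{S}\,\delta_k$, where $\delta_k$ is the $k$-instance value: classically one simply guesses the $S$-bit advice, paying a factor $2^{-S}$, then answers all $k$ challenges, and convexity of $x \mapsto x^k$ gives $\delta_k \ge 2^{-S}\delta^{k}$. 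The obstruction is that an $S$-qubit quantum state can be neither guessed nor copied, so this step must be replaced. I would invoke the alternating-measurement framework (\Cref{sec:main_theorem}) in place of guessing: view the ``did the algorithm win this round'' event as a binary measurement (a POVM averaged over the random challenge), apply it $k$ times with fresh challenges to the same evolving copy of $\rho_H$, and use the projective/threshold-implementation machinery so that the advice state behaves like an eigenstate of the repeated win-measurement. This should yield the quantum analogue $\delta \lesssim (2^{S}\,\delta_k)^{1/k}$, in which the role of the $2^{S}$ advice strings is played by the $2^{S}$-dimensionality of the advice Hilbert space, and, crucially, no rewinding of the online algorithm is needed.

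Second, I would bound the no-advice $k$-instance value $\delta_k$ purely as a quantum query lower bound in the QROM. With a fixed, advice-free initial state the rounds can be analyzed with the compressed-oracle technique: in each round the algorithm must produce a preimage of a fresh uniform image $y_i = H(x_i)$ using at most $T$ new queries while having recorded at most $kT$ oracle points so far. The per-round winning probability is then $O\!\left((kT + T^2)/\alpha\right)$, where the $T^2/\alpha$ term is the Grover-optimal cost of inverting from scratch within a round \cite{grover1996fast} and the $kT/\alpha$ term bounds the chance that a freshly sampled challenge already has a preimage among the at most $kT$ recorded points. Multiplying over the $k$ independent rounds gives $\delta_k \le \big(C(kT+T^2)/\alpha\big)^{k}$ for an absolute constant $C$.

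Combining the two steps gives $\delta \lesssim 2^{S/k}\cdot C(kT + T^2)/\alpha$; choosing $k = S$ makes the advice factor $2^{S/k}$ a constant and yields $\delta = O\!\left((ST + T^2)/\alpha\right)$, which is \Cref{thm:simple_OWF}. The main obstacle is the first step: turning the classical ``guess the advice'' argument into a genuinely quantum statement. The difficulty is that applying the win-measurement to extract information about $\rho_H$ disturbs it, and naively re-preparing or rewinding (as in \cite{chung2020tight}) loses a cube-root in the exponent. The alternating-measurement viewpoint is exactly what avoids this: by tracking the joint Jordan/eigenvalue structure of the win-measurement across rounds rather than re-running the algorithm, the $k$-instance value should degrade only multiplicatively, with the same dimension penalty $2^{S}$ one would pay classically, which is what recovers the tight exponent. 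A secondary point to get right is that the $k$ challenges are genuinely independent and that the recorded-point count stays $O(kT)$ across all rounds, so that the clean product bound on $\delta_k$ holds.
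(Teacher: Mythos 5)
Your proposal is correct and follows essentially the same route as the paper: the alternating-measurement game of \Cref{sec:main_theorem} replaces rewinding, the quantum advice is removed by substituting a maximally mixed state at a $2^{S}$ dimension cost, the resulting no-advice game is bounded round-by-round by $O\left((kT+T^2)/\alpha\right)$, and setting $k=S$ closes the argument exactly as in the paper. The one difference is bookkeeping in the last step: rather than redoing a compressed-oracle analysis over ``independent'' rounds (the rounds of the alternating game are in fact correlated, and the even rounds are $\IsUniform$ projections with no direct OWF interpretation), the paper bounds the product of \emph{conditional} winning probabilities via a monotonicity lemma (\Cref{lem:prob_monotone}) together with a reduction that packages the first $k-1$ alternating rounds as a $P$-query bit-fixing stage with $P \approx kT$ and then cites the known $P$-BF-QROM bound $O\left((P+T^2)/\alpha\right)$ for inversion --- which is precisely the rigorous form of your ``at most $kT$ recorded points per round'' estimate.
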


The theorem guides security parameter choices of hash functions to be secure against non-uniform attacks. The security parameter $n$ should be $384$ to have security $2^{-128}$ against non-uniform quantum attacks with $S = T = 2^{128}$. 
Another direct implication of our theorem is that, when quantum advice $S = O(\sqrt{\alpha})$, quantum advice is useless for speeding up function inversion. To put it in another way, Grover's algorithm can not be sped up and only has probability $T^2/\alpha$ to succeed even with quantum advice of size $O(\sqrt{\alpha})$, relative to a random oracle. We list a comparison of best-known bounds and our result below. 

\begin{table}[!hbt]
\begin{center}
\begin{tabular}{| c | c | c |}
\hline
 Classical Advice in \cite{chung2020tight} & Quantum Advice in \cite{chung2020tight} &  Quantum Advice in This Work \\
 $\tilde{O}\left( \frac{S T + T^2}{\alpha} \right)$ & $\tilde{O}\left( \frac{S T + T^2}{\alpha} \right)^{1/3}$  & ${O}\left( \frac{S T + T^2}{\alpha} \right)$ \\  
\hline
\end{tabular}
\caption{Non-uniform security for OWFs with $T$ queries and $S$ bits (qubits) of advice, where $\alpha = \min\{N, M\}$ and $N$, $M$ are the domain and range size of the random oracle. Our bound is a ``big-$O$'' instead of ``big-$\tilde{O}$'' as we also remove the dependence on $\log N$ and $\log M$.}
\label{fig:OWF_result}
\end{center}
\end{table}

\paragraph{Pseudorandom Generators.}

Another important application we will focus on is pseudorandom generators.
One fundamental difference from one-way functions is its being a decision game. We will later see that publicly verifiable games such as one-way functions are easy to deal with in the previous work \cite{chung2020tight}. For games that can not be publicly verified, such as decision games, \cite{chung2020tight} often gives worse bounds. 

In this game, an algorithm tries to distinguish between an image of a random input, and a uniformly random element in the range. Let $H: [N] \to [M]$ be a random oracle.  
\begin{itemize}
    \item A challenger samples a uniformly random bit $b$. If $b = 0$, it samples a uniformly random $x \in [N]$ and outputs $y = H(x)$; otherwise, it samples a uniform $y \in [M]$ and outputs $y$. 
    \item The algorithm is given $y$ and returns $b'$. It wins if and only if $b' = b$. 
\end{itemize}

Our new technique demonstrates the following theorem about PRGs. 
\begin{theorem}
Let $H$ be a random oracle $[N] \to [M]$. 
PRG games in the QROM have security $1/2 +  O\left( \frac{T^2}{N}\right)^{1/2} + O\left( \frac{ST}{N}\right)^{1/3}$ against non-uniform quantum algorithms with $S$-qubits of advice and $T$ quantum queries. 
\end{theorem}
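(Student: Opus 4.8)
The plan is to run the $\mathsf{PRG}$ distinguisher through the alternating-measurement machinery of \Cref{sec:main_theorem}, exactly as the one-way function bound (\Cref{thm:simple_OWF}) is obtained, and to absorb the extra difficulty that $\mathsf{PRG}$ is a decision rather than a publicly verifiable search game by passing to a search-type quantity. Write the distinguishing advantage as $\eps = \tfrac12(p_0 - p_1)$, where $p_b = \Pr[b' = 0 \mid \text{challenge bit} = b]$ is the probability the algorithm outputs $0$ on bit $b$. The first step is to recast ``winning a single fresh $\mathsf{PRG}$ instance'' as a measurement acting on the $S$-qubit advice register: using the projective/threshold implementation ($\projimp$/$\thresimp$), I would turn the conditional output bias into an observable on the advice state whose expectation is, up to estimation error, the per-instance distinguishing bias. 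This is what lets me talk about ``re-playing'' the game against a quantum state without collapsing it, and so avoids the rewinding that obstructed the prior bound.

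Next I would invoke the main alternating-measurement theorem to trade the $S$ qubits of advice against the no-advice security of the $k$-fold game $\Gmi{k}$. Because the advice state lives in a $2^S$-dimensional space, its overlap with the ``consistently-winning'' subspace across $k$ freshly sampled instances is controlled by the no-advice multi-instance value together with a factor accounting for $S$; the optimization over the number of instances $k$ is where the final exponents are produced. Concretely, balancing the dimension ($S$) term against the per-instance bound gives the $O(ST/N)^{1/3}$ contribution, while the pure query term gives $O(T^2/N)^{1/2}$. The two roots, rather than the linear bound of \Cref{thm:simple_OWF}, are the price of the decision-to-search passage, since one can only access the bias through an estimate whose error enters quadratically.

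The remaining ingredient is the no-advice security of $\Gmi{k}$ for $\mathsf{PRG}$, which I would establish with Zhandry's compressed-oracle technique. Detecting whether $y$ is a random image of $H$ or a uniformly random range point amounts, in the compressed picture, to the algorithm forcing a database entry mapping to $y$; the query-magnitude/progress-measure argument then bounds the single-instance detection bias by $O(T^2/N)$, and the $k$-fold (alternating) game by an appropriate product up to cross-instance corrections. Feeding this $\Gmi{k}$ bound back through the framework and optimizing over $k$ yields the stated security.

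The main obstacle is precisely the lack of public verifiability. For one-way functions the reduction can apply one extra query to test $H(x') = y$, giving a clean binary success projector that the alternating measurement can condition on and repeat; for $\mathsf{PRG}$ no such test exists, so I cannot define a ``win/lose'' projector that the reduction is able to evaluate. The fix, and the delicate heart of the argument, is to replace the inaccessible win-projector by the $\thresimp$-based estimate of the two-sided bias $p_0 - p_1$ on the advice state, and to control the disturbance this estimation induces as it is applied across the $2^S$-dimensional register and the $k$ instances. Bounding this accumulated error, rather than any single calculation, is what forces the exponent loss from $1$ to $1/2$ and $1/3$, and getting the error accounting tight enough to match the classical-advice bound of Chung et al.\ is the crux of the proof.
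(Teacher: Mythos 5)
There is a genuine gap, and it is precisely at what you call ``the delicate heart of the argument.'' You assume that, because PRG is not publicly verifiable, the reduction must gain access to the win condition indirectly --- via a $\thresimp$/$\projimp$-style estimate of the bias $p_0 - p_1$ on the advice state --- and that the crux is controlling the disturbance this estimation causes across $k$ instances. But that is exactly the approach of Chung et al.\ that the paper is designed to escape, and it is what produced their weak $1/2+\tilde{O}\left((S^5T+S^4T^2)/N\right)^{1/19}$ bound: gentle estimation/tomography of the bias incurs losses polynomial in $S$ and in the estimation accuracy, and no amount of careful error accounting with that machinery recovers the stated exponents. The key idea you are missing is that in the alternating measurement game the win predicate is never evaluated by the adversary or the reduction at all: the \emph{challenger} holds the random coins coherently in a register $\rR$ and applies the controlled projection $\CP^H_0=\sum_r \ket{r}\bra{r}\otimes P^H_r$, where $P^H_r$ includes the verification unitary $V^H_r$. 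This makes the framework completely indifferent to public verifiability --- decision games and search games are treated by the \emph{same} game and the same reduction, with no estimation step, no majority vote, and no tomography anywhere.

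Consequently the actual proof is much shorter than what you outline, and the exponents arise differently than you claim. One plugs the known $P$-BF-QROM security of PRG, $\nu(P,T)=1/2+O\left(\sqrt{(P+T^2)/N}\right)$ (Lemma 5.13 of Chung et al., cited rather than re-proved via compressed oracles), into the decision-game branch of the main theorem: with $k=S/\gamma$ alternating rounds, the advice-guessing loss $2^{S}$ is absorbed using $2\le(1+\gamma)^{1/\gamma}$ into an \emph{additive} $\gamma$, giving
\begin{align*}
\delta(S,T)\;\le\;\tfrac12+\min_{\gamma>0}\left\{\,O\!\left(\sqrt{\frac{P/\gamma+T^2}{N}}\right)+\gamma\right\},\qquad P=S\,(T+T_\samp+T_\ver).
\end{align*}
The $(T^2/N)^{1/2}$ term is inherited directly from the BF-QROM bound (it has nothing to do with any ``estimate whose error enters quadratically''), and the $(ST/N)^{1/3}$ term comes from balancing the additive $\gamma$ against $\sqrt{P/(\gamma N)}$, i.e.\ $\gamma=O(P/N)^{1/3}$. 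So your instinct to route the statement through the alternating-measurement theorem and optimize a parameter is right, but the $\thresimp$-based bias-estimation layer you insert is not only unnecessary --- it would reintroduce the very losses that prevent matching the classical-advice bound.
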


\begin{table}[!hbt]
\begin{center}
\begin{tabular}{| c | c | c |}
\hline
 Classical Advice in \cite{chung2020tight} & Quantum Advice in \cite{chung2020tight} &  Quantum Advice in This Work \\
 $\frac{1}{2} + \tilde{O}\left( \frac{S T + T^2}{N} \right)^{1/3}$ & $\frac{1}{2} + \tilde{O}\left( \frac{S^5 T + S^4 T^2}{N} \right)^{1/19}$  & $\frac{1}{2} + O\left(\frac{T^2}{N}\right)^{1/2} + O\left( \frac{ST}{N}\right)^{1/3}$ \\  
\hline
\end{tabular}
\caption{Non-uniform security of PRGs with $T$ queries and $S$ bits (qubits) of advice. Our bound also improves the previous result on classical advice by reducing the exponent on $T^2/N$ from $1/3$ to $1/2$; we note that the improvement on the exponent only follows from a simple observation and can also be applied to the previous work as well.}
\label{fig:PRG_result}
\end{center}
\end{table}

\paragraph{``Salting Defeats Preprocessing''.}
Finally, instead of proving more concrete non-uniform bounds like Merkle-Damg\r ard \cite{guo2021unifying}, we demonstrate that the generic mechanism ``salting'' helps prevent quantum preprocessing attacks even with quantum advice. Maybe the most illustrating example is collision-resistant hash functions. As mentioned before, no single fixed hash function can be collision resistant against non-uniform attacks. A typical solution is to add ``salt'' to the hash function. A salt is a piece of random data that will be fed into a hash function as an additional input. To attack a salted collision resistant hash function, an adversary gets a salt $s$ and is required to come out with two input $m \ne m'$ such that the hash evaluation on $(s, m)$ equals that of $(s, m')$. Intuitively, since salt $s$ is chosen uniformly at random from a large space, advice is not long enough to include collisions for every possible salt. 
Thus, salting is a mechanism that compiles a game into another game, by adding a random extra input $s$ and restricting the execution of the game always under oracle access to $H(s,\cdot)$. 

Chung et al. \cite{chung2013power}, and Coretti et al. \cite{coretti2018random} formally proved the non-uniform security of salted collision-resistant hash in the classical ROM. Chung et al. \cite{chung2020tight} extended the statement in the quantum setting. For quantum advice, their result roughly says that if an underlying game $G$ is publicly verifiable or a decision game, then the salted version of $G$ is secure against non-uniform attacks.

Our third results improve the prior ones in two different aspects. First, our theorem works not only for publicly verifiable or decision games, but for any types of games (see our definition of games \Cref{def:game}). Second, our theorem is tighter and provides a more pictorial statement for ``salting defeats preprocessing'', elaborated below. Our bounds match those with classical advice in \cite{chung2020tight}. 

\begin{theorem}[Informal, \Cref{prop:salt}]\label{thm:salt_defeats}
For any game $G$ in the QROM, let $\nu(T)$ be its uniform security in the QROM. 
Let $G_S$ be the salted game with salt space $[K]$. Then $G_S$ has security $\delta(S, T)$ against non-uniform quantum adversaries with $T$ queries and $S$-qubits of advice,
\begin{enumerate}
    \item $\delta(S, T) \leq 4 \nu(T) + O(S T/K)$;
    \item If $G_S$ is a decision game, then $\delta(S, T) \leq \nu(T) + O({S T/K})^{1/3}$. 
\end{enumerate}
\end{theorem}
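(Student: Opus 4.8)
The plan is to derive both bounds from the alternating-measurement framework of \Cref{sec:main_theorem} by treating the salt as the instance index, so that the salted game becomes a ``single random instance out of many'' game. Fix the preprocessed advice state $\rho$ on $S$ qubits (which may be entangled with $H$). For each salt $s \in [K]$, running the online adversary with $T$ queries against the independent sub-oracle $H(s,\cdot)$ and then applying the win predicate of $G$ (per \Cref{def:game}) induces a binary-outcome measurement $M_s = \{P_s, I - P_s\}$ on $\rho$, with per-salt success $p_s = \Tr(P_s\,\rho)$. By definition of the salted game, the quantity to bound is $\delta(S,T) = \E_{H}\,\E_{s}[p_s]$, i.e. the expected success of a uniformly random one of the $K$ instance-measurements applied to a \emph{single} copy of the advice. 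This is exactly the object the alternating-measurement game is built to control.

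First I would fix the uniform baseline: for a fixed salt $s$, if the advice carried no information about $H(s,\cdot)$, then $G$ on the fresh random sub-oracle succeeds with probability at most $\nu(T)$. The entire difficulty is to bound the \emph{excess} success the $S$-qubit advice supplies, averaged over a random salt. Here I would run the alternating measurement over a sequence of freshly sampled salts $s_1, s_2, \dots$, using the projective implementation $\projimp(M_{s_i})$ (and its approximate/threshold variant $\ati$) to read off each per-salt success without collapsing the state, so that the same advice register is reused across rounds and no rewinding of the adversary is ever needed. The alternating measurement plays the role that classical presampling plays for classical advice: an $S$-qubit state combined with $T$ online queries behaves as if the oracle were fixed on only about $ST$ points, and these $O(ST)$ relevant points touch at most $O(ST)$ of the $K$ salts. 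Hence a uniformly random salt avoids all of them except with probability $O(ST/K)$, and on an untouched salt the advice is useless, bounding success there by $\nu(T)$; this is the origin of the additive $O(ST/K)$ term.

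The two cases then diverge according to whether wins are efficiently verifiable. For an arbitrary game (part 1), where a win on a fresh salt cannot be checked in place, I would use a Markov/averaging argument: declare a salt ``good'' if its $\projimp$ estimate exceeds a fixed multiple of $\nu(T)$, bound the number of good salts by the capacity argument above, and absorb the verification slack into the multiplicative constant, yielding $\delta(S,T) \le 4\nu(T) + O(ST/K)$. For a decision game (part 2) the uniform security satisfies $\nu(T) \approx 1/2$, so the multiplicative form $4\nu(T)$ is vacuous; instead I would apply the projective implementation directly to the distinguishing measurement to extract the bias $\abs{p_s - \nu(T)}$ without disturbance and track how the $S$-qubit capacity limits the total extractable bias across fresh salts. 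This route produces the additive form $\nu(T) + O(ST/K)^{1/3}$, the $1/3$ loss coming from the same quantum-advice reduction that drives the OWF and PRG theorems (trading a cube of error for the ability to \emph{estimate}, rather than merely \emph{verify}, success).

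The main obstacle I expect is the quantum-advice capacity step. For classical advice this is a clean presampling/compression argument, but a quantum advice state can be neither copied nor rewound, so everything must be routed through the approximate projective implementation, arguing that each round either matches the uniform baseline $\nu(T)$ or irreversibly consumes a quantum of the $S$-qubit capacity. The delicate part is controlling the accumulated disturbance of the $\ati$ measurements across many rounds and making its error terms compatible with the target $O(ST/K)$ (resp. $O(ST/K)^{1/3}$) scaling; getting the round-by-round errors to telescope benignly rather than blow up with the number of salts is the technical crux, and it is precisely what the alternating-measurement game is engineered to handle.
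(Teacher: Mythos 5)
Your high-level picture --- reduce the $S$-qubit advice to an effective bit-fixing of $P \approx ST$ oracle points, then argue that a uniformly random salt avoids those points except with probability $O(ST/K)$ and is ``fresh'' otherwise --- is indeed the skeleton of the paper's proof. But the paper realizes it as a clean two-step composition that your proposal does not reproduce: (i) the main theorem (\Cref{thm:bftononuniform}) converts non-uniform security with quantum advice into security in the $P$-BF-QROM with $P = S(T + T_\samp + T_\ver)$, and (ii) an imported lemma (\Cref{lem:saltinpbfqrom}, Lemma 7.2 of \cite{chung2020tight}) bounds the salted game's $P$-BF-QROM security by $2\nu(T) + O(P/K)$ in general and by $\nu(T) + O(\sqrt{P/K})$ for decision games. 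Part 1 is then $\delta \le 2\nu(P,T) \le 4\nu(T) + O(ST/K)$: the factor $4$ is exactly $2 \times 2$, one factor from the advice-guessing step in (i) and one from the salt-conditioning in (ii), not from a Markov argument over ``good'' salts. Part 2 is $\delta \le \min_{\gamma}\{\nu(P/\gamma,T)+\gamma\} \le \min_{\gamma}\{\nu(T) + O(\sqrt{P/(K\gamma)}) + \gamma\}$, optimized at $\gamma = O(P/K)^{1/3}$; that optimization is the sole origin of the exponent $1/3$ --- it arises because the multiplicative bound $\delta \le 2\nu$ is vacuous for decision games, forcing the $\gamma$-tradeoff against the $\sqrt{\cdot}$ bit-fixing bound. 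Your attribution of the $1/3$ to ``estimating rather than verifying success'' describes the mechanism behind the old bounds of \cite{chung2020tight}, not this paper's.

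The genuine gap is in your step (i): you assume the capacity statement for quantum advice (``an $S$-qubit state with $T$ queries behaves as if the oracle were fixed on $\approx ST$ points'') and propose to implement it by running $\projimp$/$\ati$ estimation round by round over freshly sampled salts while reusing the advice register. That is precisely the shadow-tomography-style route of \cite{chung2020tight} that this paper abandons, because the accumulated disturbance of approximate estimation is what produces the $1/19$-type exponents; you yourself flag controlling that accumulation as the crux, and there is no known way to make it telescope to $O(ST/K)$. The paper's replacement --- its main contribution --- is to never estimate anything: the alternating-measurement game's winning probability is \emph{exactly} the moment $\E_H[\p^g]$ (via the Marriott--Watrous evolution, \Cref{lem:amprogress}); non-uniformity is removed by substituting a maximally mixed state for the advice at a $2^{-S}$ multiplicative cost with $g = S$ (or $g = S/\gamma$); and the moment is bounded by $\nu(P,T)^g$ using the monotonicity of the conditional probabilities $\epsilon^{(t)}$ (\Cref{lem:prob_monotone}, \Cref{lem:mis_to_bf}), by packaging the first $g-1$ rounds of the alternating measurement as the $P$-query bit-fixing algorithm $f$. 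Without these three ingredients your argument has no mechanism for converting $S$ qubits of advice into a $P$-query bit-fixing constraint, so neither claimed bound follows.
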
 

That is to say, the non-uniform security of $G_S$ and uniform security of $G$ only differs by a term of $O(ST/K)$ or $O(ST/K)^{1/3}$ depending on the type of the game. 
When the game $G$ is a search game, $G_S$ has non-uniform security $4 \nu(T) + O(S T/K)$. We can choose $S$ to ensure $ST/K \leq \nu(T)$ so that the non-uniform security of $G_S$ is in the same order of $G$'s security $\nu(T)$. For decision games, we choose $S$ such that $(ST/K)^{1/3}$ is extremely small.

In \cite{chung2020tight}, they show that for publicly verifiable games, $\delta := \delta(S, T)$ satisfies 
$\delta \leq \tilde{O}\left(\nu(T/\delta)  + \frac{ST}{K\delta}\right)$ whereas ours works for any games and $\delta(S, T) \leq 4 \nu(T) + O(ST/K)$. For decision games, ours also significantly improves prior results (see \Cref{fig:salting_fig} and Theorem 7.6 in \cite{chung2020tight} for a comparison). 
The dependence in their theorems on uniform security $\nu$ is much more complicated and yields loose bounds. 
Most notably, for decision games, when the salt size  $K\to \infty$, the bound in \cite{chung2020tight} does not rule out the speed up from having $S$-qubits of advice (corresponding to the term $\nu'(S^2T/\epsilon^8)$); whereas our bound gives $\nu(T)$ --- exactly the security in the uniform case, completely ruling out the influence of quantum advice. 

\begin{table}[!hbt]
\begin{center}
\begin{tabular}{| c | c | c |}
\hline
  & Quantum Advice in \cite{chung2020tight} &  Quantum Advice in This Work \\
 Any Games  & $\delta \leq \tilde{O}\left( \nu(T/\delta) + ST/(K \delta) \right)$ & $\delta \leq 4 \nu(T) + O(ST/K)$  \\  
 & & \\
 \multirow{2}{*}{Decision Games} &  $\delta \leq 1/2 + \epsilon$   &   \multirow{2}{*}{$\delta \leq \nu(T) + O(ST/K)^{1/3}$}   \\  
   & {\small where $\epsilon \leq \tilde{O}\left( \nu'(S^2 T/\epsilon^8) + \sqrt{S^5 T/(K \epsilon^{17})} \right)$}  &   \\  
   & {\small and $\nu'(T) := \nu(T) - 1/2$}  &   \\  
\hline
\end{tabular}
\caption{Salting ``defeats'' preprocessing.}
\label{fig:salting_fig}
\end{center}
\end{table}

\paragraph{Separation of Quantum and Classical Advice in the QROM.}

So far, we have seen many examples that quantum advice is as good/bad as classical advice. Below, we show that it is not always the case in the QROM: there exists a game in the QROM such that quantum advice is exponentially better than classical advice. 

\begin{theorem}[Separation of Quantum and Classical Advice in the QROM] \label{thm:informal_advice}
Let $H$ be a random oracle $[2^{\poly(n)}] \to \{0,1\}$.
There exists a game $G$ in the QROM such that, 
\begin{itemize}
    \item $G$ has security $2^{-\Omega(n)}$ against non-uniform adversaries with $S$-bits of {\bf classical} advice and making no queries, for $S = 2^{n^c}/n$ and some constant $0 < c < 1$; 
    \item There is a non-uniform adversary with $S$-qubits of {\bf quantum} advice and making no queries, that achieves winning probability $1 - \negl(n)$, for $S = \tilde{O}(n)$. 
\end{itemize}
\end{theorem}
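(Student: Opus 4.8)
The plan is to realize $G$ as a one-way communication / advice-compression task in which the preprocessing party effectively ``holds'' the entire oracle $H$ and must hand the online adversary a short message (the advice) that, together with a freshly sampled challenge and \emph{no} further oracle access, determines a valid answer. Since $T = 0$, the separation reduces purely to how much information about $H$ can be squeezed into $S$ (qu)bits so that a single random challenge can still be answered. The engine behind the gap is the Yamakawa--Zhandry verifiable quantum advantage: from the random oracle $H$ one defines, via a suitable list-decodable code $C$ over an alphabet $\Sigma$, an exponentially large ``solution set'' $\mc{X}_H = \{x \in C : H(i,x_i) = 0 \text{ for all } i\}$, together with an efficient quantum procedure (using many preprocessing queries) that prepares the uniform superposition $\ket{\psi_H} \propto \sum_{x \in \mc{X}_H}\ket{x}$ on only $\ell = n\log|\Sigma| = \tilde{O}(n)$ qubits. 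I would take this state as the quantum advice and design the challenge so that $\ket{\psi_H}$ can be ``decoded'' in a challenge-dependent basis without ever touching $H$.

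For the quantum upper bound I would proceed in three steps. First, have the preprocessing stage run the Yamakawa--Zhandry solver to prepare $\ket{\psi_H}$ and store it as the $\tilde{O}(n)$-qubit advice. Second, take the challenge to be a random measurement specification of ``hidden-matching'' type (e.g.\ a random pairing or linear functional on the coordinates of $x$), and let the online adversary apply the corresponding no-oracle unitary $U_c$ to $\ket{\psi_H}$ and measure once; I would choose the verification predicate so that this single measurement outcome is a valid answer with probability $1 - \negl(n)$. A point I would be careful to verify is that a genuinely challenge-dependent measurement is \emph{essential}: simply measuring $\ket{\psi_H}$ in the computational basis to read off one solution cannot work, since a direct counting argument shows that if a uniformly random solution were a valid answer for most challenges, then some fixed solution would be valid for most challenges and could be hard-wired into the classical advice. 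Third, I would confirm that the challenger can check the returned answer using its own polynomially many queries to $H$.

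The classical lower bound is where I expect the real difficulty, and it is the main obstacle. Because the preprocessing is unbounded it can learn $H$ in full, so the hardness cannot come from query complexity (as in the original Yamakawa--Zhandry lower bound) but must come from \emph{incompressibility}: no $2^{n^c}/n$-bit summary of $H$ retains enough correlation with the exponentially large, pseudorandom set $\mc{X}_H$ to answer a random challenge. I would establish this as a classical one-way communication lower bound in which the sender holds $H$: fixing the advice string $a$, the online adversary becomes a deterministic map from challenges to answers, and I would bound --- via a counting / covering argument over the randomness of $H$ and of the challenge --- the fraction of challenges that any single $a$ can answer, showing that a $(1-\negl(n))$ success rate would force $|a| \geq 2^{\Omega(n^c)}$. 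The role of the random oracle is to guarantee that $\mc{X}_H$ behaves like a random exponentially large subset, so that the challenge--answer relation carries enough min-entropy for the covering bound to bite relative to an unstructured oracle. The two delicate points I anticipate are (i) calibrating the code $C$ and the challenge distribution so that the \emph{same} instance simultaneously yields the $\tilde{O}(n)$-qubit quantum success and the $2^{n^c}/n$-bit classical failure, and (ii) ensuring the classical bound is truly an advice/compression bound rather than a query bound, so that the separation is genuinely attributable to the quantumness of the advice.
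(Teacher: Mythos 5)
Your high-level instinct matches the paper: the separation does come from Yamakawa--Zhandry with $T=0$, quantum advice equal to the challenge-independent quantum queries of their solver, and online post-processing that needs no oracle calls. But there are two genuine gaps. First, you change the game. The paper's game is simply ``invert a uniformly random image $y \in \{0,1\}^n$ of $f^H_C$,'' for which the YZ theorem directly gives quantum success $1-\negl(n)$: their algorithm's queries are non-adaptive and independent of $y$, so the query state is the advice, and the $y$-dependent post-processing is the online phase. In that game your worry about hard-wiring a fixed solution evaporates automatically, since any codeword $c$ is a valid answer for exactly one challenge $y = f^H_C(c)$. Your hidden-matching-style game --- a challenge that specifies a random measurement to apply to a superposition over $\mc{X}_H$ --- is underspecified (no verification predicate, no concrete measurement), and its quantum completeness does not follow from the YZ theorem, which inverts images rather than answering measurement queries on a solution state; the paper's own remark notes that the hidden-matching route is delicate precisely because the known classical lower bounds there are only average-case against deterministic adversaries.

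Second, and more importantly, the classical lower bound --- which you correctly flag as the crux --- is left as a plan, and the target you state ($1-\negl(n)$ success forces advice length $2^{\Omega(n^c)}$) is weaker than what the theorem asserts (success at most $2^{-\Omega(n)}$ even with $S = 2^{n^c}/n$ bits of advice). The paper needs no new compression argument: since $T=0$, quantum adversaries with classical advice coincide with classical non-uniform adversaries, and the classical analog of the paper's main theorem (from Guo et al.) reduces $S$-bit non-uniform security to security in the $P$-BF-ROM with $P = S \cdot T_\ver = 2^{n^c}$. In that model the bound is a clean two-case analysis via the $(\zeta,\ell,L)$ list-recoverability of $C$: conditioned on the $P$ lazily fixed oracle positions, the set ${\sf Good}$ of codewords mostly covered by those positions has size at most $L = 2^{n^{c'}}$, so a random challenge $y$ has a preimage in ${\sf Good}$ with probability at most $L/2^n$; and any answer outside ${\sf Good}$ has at least $\zeta n$ coordinates whose oracle values are still unsampled, hence equals $y$ there with probability at most $2^{-\zeta n}$. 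The total is $2^{-\Omega(n)}$. Without this bit-fixing reduction plus list-recoverability step (or a fully worked-out substitute for it), your proposal does not yet establish the classical side of the separation.
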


Although the bound only works in the parameter regime $T = 0$, to our best knowledge, it is the first example of an exponential separation between quantum and classical advice in the QROM (or for inputs without structures). 

\begin{remark}
For the parameter regime $T = 0$, the above separation can be alternatively viewed as an exponential separation of quantum/classical one-way communication complexity for some relation $\mc{R} \subseteq \mc{X} \times \mc{Y} \times {Z}$. In the context of one-way communication complexity, there are two players, Alice and Bob. Alice gets an input $x \in \mc{X}$ and Bob gets an input $y \in \mc{Y}$; Alice sends one (classical or quantum) message to Bob and Bob tries to output $z \in \mc{Z}$ such that $(x, y, z) \in \mc{R}$. Our result in \Cref{thm:informal_advice} is a separation of quantum/classical one-way communication complexity when $\mc{X} = \{0,1\}^{2^{\poly(n)}}$, $\mc{Y} = \{0,1\}^n$, $\mc{Z} = \{0,1\}^{n \times \poly(n)}$; when the message is allow to be quantum, $\tilde{O}(n)$ qubits are sufficient; on the other hand, the classical communication complexity is $\Omega(2^{n^c}/n)$. 

Exponential separation of quantum/classical one-way communication complexity is already known, starting from the work by \cite{bar2004exponential} (later by \cite{gavinsky2008classical}) based on the so-called hidden matching problem. We believe the hidden matching problem can be also turned into a separation  of quantum/classical advice in the parameter regime $T = 0$, in the QROM. However, \cite{bar2004exponential} only proved \emph{average-case} hardness against \emph{deterministic} classical Bob. Therefore, we pick the recent result by Zhandry and Yamakawa for simplicity of presentation. 
\end{remark}

\subsection{Organization}
The rest of the paper is organized as follows. In \Cref{sec:overview}, we give an overview of our main technical contribution and achieve non-uniform bounds for OWFs. \Cref{sec:prelim} and \Cref{sec:gamesandalgorithms} recall the notations and backgrounds on quantum computing, random oracles models, non-uniform security and bit-fixing models. 
\Cref{sec:notations} introduces decomposition of advice with respect to a game, which helps the proof of our main theorem. 
\Cref{sec:main_theorem} proves the main theorem whereas \Cref{sec:application} applies the main theorem to various applications. Finally in \Cref{sec:separation}, we give the separation of quantum and classical advice.

\ifblind
\else
\section*{Acknowledgements}
We would like to thank Kai-Min Chung for his discussion on an early write-up and providing an intuitive explanation of the decomposition of quantum advice in our work; Jiahui Liu and Luowen Qian for their comments on an early draft of this paper; Luowen Qian and Makrand Sinha for mentioning the connections between our impossibility result and quantum one-way communication complexity. 

Qipeng Liu is supported in part by the Simons Institute for the Theory of Computing, through a Quantum Postdoctoral Fellowship, by the DARPA SIEVE-VESPA grant Np.HR00112020023 and by the NSF QLCI program through grant number OMA-2016245. Any opinions, findings and conclusions or
recommendations expressed in this material are those of the author(s) and do not necessarily reflect the views of the United States Government or DARPA.
\fi

\section{Technical Overview}
\label{sec:overview}

This overview will primarily focus on OWF games for the random oracle $H$ with the same domain and range. We will turn to PRGs when we discuss the difficulty of decision games compared to search games. The same ideas in OWF games will apply to other applications as well. 

\paragraph{Recap \cite{chung2020tight} for Classical Advice.}

We start by recalling the ideas for classical advice behind \cite{chung2020tight}. Let $\As$ be any $T$-query non-uniform algorithm with $S$-bits of classical advice for OWF games. For convenience, we call such algorithm $(S, T)$ algorithm with classical advice. 
Inspired from \cite{aaronson2004limitations}, \cite{chung2020tight} shows that if $A$ has $\delta$ success probability in winning the OWF game, then one can run $\As$ multiple times and win the following sequential $g$-multi-instance version\footnote{The case of $g$-instances being given in parallel was consider in \cite{aaronson2004limitations}. \cite{chung2020tight} improved over the idea and proposed the sequential version, which was shown to give better implications comparing to the parallel case.} of OWF games with probability roughly $\delta^g$: 
\begin{figure}[!hbt]
    \centering
    \begin{gamespec}
    \begin{itemize}
        \item For each round $i \in [g]$, a challenger samples a random image and gives it to an algorithm.
        \item The algorithm has $T$ queries in the $i$-the round and outputs an alleged preimage for the $i$-th image.
        \item The algorithm wins if and only if it is correct in all the rounds. 
    \end{itemize}
    \end{gamespec}
    \caption{Multi-Instance Games for OWFs.}
    \label{fig:mis-classical-advice}
\end{figure}

Since $\As$ has only classical advice, one can always reset the whole algorithm and start $\As$ from scratch for each round. It is easy to observe that running and rewinding $\As$ for each stage achieves advantage (winning probability) $\delta^g$. This reduction (step 1 in \Cref{fig:reduction_OWF}) is the main challenge for quantum advice, as resetting and rewinding a non-uniform quantum algorithm is generally very difficult. We will discuss it in the next section. In the last step, we can completely remove advice by replacing the advice with a random guess (step 2 in \Cref{fig:reduction_OWF}) and introduce a multiplicative loss $2^{-S}$. As a consequence, we obtain a uniform algorithm for $g$-multi-instance games with advantage $2^{-S} \delta^g$ from $\As$. 

Therefore, to upper bound the success probability $\delta$ for OWF games, we investigate the maximal advantage $\varepsilon^g$ of uniform algorithms in the $g$-multi-instance games for $g := S$. Clearly, $\delta \leq 2 \varepsilon$ from $2^{-S} \delta^g \leq \varepsilon^g$ for $g = S$: as there exists a uniform algorithm with winning probability $2^{-S} \delta^g$, but the chance can not be greater than $\varepsilon^g$. 
\cite{chung2020tight} show that for OWFs, the advantage of algorithms with {\bf classical} or {\bf quantum} advice and $T$ queries in each round is bounded by $\varepsilon^g$ for $\varepsilon \approx (ST + T^2)/N$. Therefore, $\delta$ is ${O}((ST+T^2)/N)$, concluding the proof of the main theorem in their work. We demonstrate the idea in \Cref{fig:reduction_OWF}.

We omit many details in the above discussion --- most notably, the analysis of uniform security in the sequential multi-instance games (step 3). The discussion is delayed to the end of this section, when it is needed. 

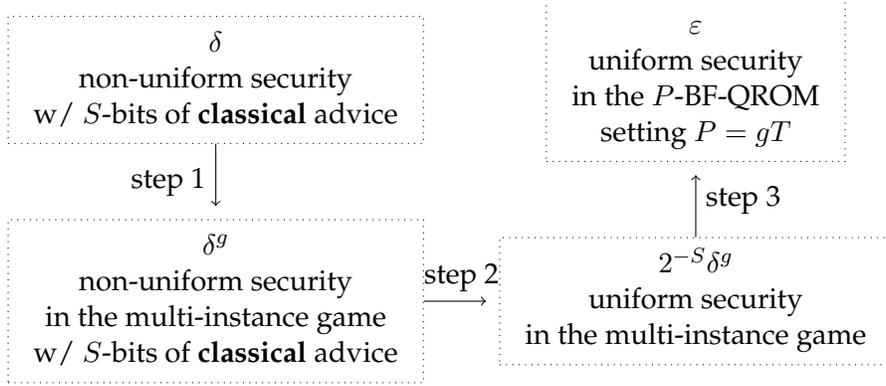
\begin{figure}[!hbt]
\centering
			\begin{tikzpicture}[scale=1,->,shorten >=5pt]
			    \node (first) [rectangle, dotted, draw=black] {\begin{tabular}{c} $\delta$ \\  non-uniform security \\ w/ $S$-bits of {\bf classical} advice \end{tabular}};
			    \node (second) [rectangle, dotted, draw=black, below=of first] {\begin{tabular}{c} $\delta^g$ \\  non-uniform security \\ in the multi-instance game \\ w/ $S$-bits of {\bf classical} advice \end{tabular}};
			    \node (third) [rectangle, dotted, draw=black, right=of second]{\begin{tabular}{c} $2^{-S} \delta^g$ \\ uniform security \\ in the multi-instance game \end{tabular}};
			    \draw[->] (first) -- (second) node [midway, left] {step 1};
			    \draw[->] (second) -- (third) node [midway, above] {step 2};
			    \node (fourth) [rectangle, dotted, draw=black, above=of third] {\begin{tabular}{c} $\varepsilon$ \\ uniform security \\ in the $P$-BF-QROM \\ setting $P = gT$ \end{tabular}};
			    \draw[->] (third) -- (fourth) node [midway, right] {step 3};
			\end{tikzpicture}
        \caption{From non-uniform security with classical advice to multi-instance security for OWFs, in \cite{chung2020tight}.}
        \label{fig:reduction_OWF}
\end{figure}

When it comes to PRGs, setting $g = S$ no longer works. The $2$ factor in $\delta \leq 2 \epsilon$ leads to a trivial bound because $\varepsilon$ is about $1/2$ for decision games. By appropriately choosing $g = S/\gamma$ for some $\gamma \in (0,1)$, the exact same idea applies. %

\paragraph{Difficulties of Rewinding Quantum Advice in \cite{chung2020tight}.}

The reduction in \Cref{fig:reduction_OWF} is designated to classical advice. When advice is quantum, step 2 still works, but step 1 does not anymore. Guessing a quantum advice of size $S$ only introduces a multiplicative loss by at most $2^{-S}$, following \cite{aaronson2004limitations}. However, step 1 requires rewinding the non-uniform quantum algorithm $\As$ with quantum advice $g$ times. Since we will eventually set $g = S$ and $S$ can be arbitrarily large, there is no guarantee for $g$ consecutive rewindings. Even worse, as the success probability $\delta$ of $\As$ can be very small, a single rewinding may not even be possible. 

The solution in \cite{chung2020tight} is to boost the success probability of $\As$ to almost $1$, using multiple copies of the same advice. When the probability is close to $1$, one can gently measure outcomes in each round and rewind a non-uniform algorithm for $g$ consecutive times\footnote{There is a missing caveat. The correct answer can be non-unique, as in OWF games. There is an easy fix for this issue in \cite{chung2020tight}. We simply ignore it and assume answers are unique, as we do not need the fix and it does not change the main idea. }. 
Assume there are $k$ copies of the oracle-dependent quantum advice $\ket{\sigma_H}$. To invert an image $y$, an algorithm $\Bs$ runs $k$ copies of $\As$ on input $y$ with advice $\ket{\sigma_H}$ in parallel; each $\As$ produces an alleged pre-image $x_i$; $\Bs$ verifies and outputs the right answer by checking whether $H(x_i) = y$. Since each instance of $\As$ wins with probability $\delta$, appropriately choosing $k$ will close the success probability to $1$ and allow $g$ consecutive rewindings. 

However, the above solution gives the lower bound $O((ST+T^2)/N)^{1/3}$ for OWFs compared to its classical advice counterpart $O((ST+T^2)/N)$. This is due to the need for rewinding and multiple copies of quantum advice. 

\medskip

The looseness in the exponents in OWFs is not the worst. For PRGs, the approach above does not work at all. While in OWF games, $\Bs$ can pick the correct answer as long as it exists by checking whether $H(x_i) = y$ for all $i$; it is not the case in PRG games. The reduction algorithm $\Bs$ has no way to tell if an answer is correct, since PRG games are not publicly verifiable. Another attempt is to let $\Bs$ do a majority vote over the outcomes from many copies of $\As$. \cite{chung2020tight} show that this approach does not behave as expected: even if a non-uniform algorithm can answer correctly w.p. $60\%$, the majority vote can pull the chance down to $40\%$, not $99\%$, even worse than a random guess!

Their answer is to ``gently'' estimate the success probability in each round of the multi-instance game and flip a coin according to this estimated probability. By utilizing an online version of shadow tomography \cite{aaronson2019gentle}, they achieve lower bounds for PRG games with quantum advice. However, the bounds for quantum advice are $1/2+O((S^5 T+ S^4 T^2)/N)^{1/19}$ compared to the bounds $1/2+O((ST+T^2)/N)^{1/3}$ for classical advice.

\medskip

One may also try to apply other rewinding techniques to step 1, for example, the ``measure-and-repair'' approach by Chiesa et al. \cite{chiesa2022post}. The tool introduces an inverse polynomial loss on success probability for every rewinding and requires the valid outcomes of the game to satisfy some form of collapsing properties. Thus, it is unlikely that their technique can be applied to this setting, as collapsing does not hold for general games, and exponentially many rewindings result in a huge loss. With the aforementioned difficulties, we start thinking about if multi-instance games (\Cref{fig:mis-classical-advice}) are the right way to go?

\paragraph{Quantum Advice as Maximizing Overlaps (\Cref{sec:notations}).}
For any non-uniform quantum algorithm with quantum advice for OWFs, $\As$ can be written formally in two parts:
\begin{enumerate}
    \item a non-uniform oracle-dependent advice $\{\ket{\sigma_H}\}_H$, and
    \item a uniform algorithm (unitary) $\{U_y\}_{y \in [N]}$. 
\end{enumerate} On input a challenge $y$ and oracle access to $H$, it operates as follows: prepares $\ket{\sigma_H}\ket{0^L}$ and applies $U^H_y$ on its internal register; measures and outputs the first $n$ bit of the registers as an answer. 
Since $\ket{\sigma_H}$ is only of $S$-qubits, the rest of the input should be independent of $H$, and we thereby model it as $\ket{0^L}$ for any $L$ (it can even be exponentially large, as we only care about queries not running time in the QROM).
The verification procedure can be written as a projector $V^H_x$ on the registers, and output $1$ if and only if $H(x) = H(x')$ assuming the first $n$ bit in the computational basis is $x'$. Due to the operational meaning of $U$ and $V$, the success probability when given oracle access to $H$ can be then written as
\begin{align*}
    \delta_H & = \mathbb{E}_{x}\left[ \left|  V^H_x U^H_{H(x)} \ket{\sigma_H} \ket{0^L} \right|^2 \right].
\end{align*}
The above probability describes the progress of sampling a random challenge $x$, feeding $H(x)$ as input to the non-uniform $\As$ and checking whether $\As$'s answer is correct with respect to $x$.

Here is an alternative way to look at $\delta_H$. 
Define $P^H$ as the following Hermitian matrix: $P^H = \mathbb{E}_x\left[(U^H_{H(x)})^\dagger V^H_x U^H_{H(x)} \right]$. $\delta_H$ can be alternatively written in terms of $P^H$ and the starting state: 
\begin{align*}
\delta_H = \langle \sigma_H, 0^L |P^H |\sigma_H, 0^L\rangle.
\end{align*}
As $P^H$ is Hermitian and ${\bf 0} \preceq P^H \preceq {\bf I}$, $P^H$ has an eigen-decomposition with real eigenvalues in $[0, 1]$. Without loss of generality, we assume the eigenvectors $\ket{\phi_p}$ have distinct eigenvalues $p \in [0,1]$ and thus $P^H = \sum_p p \ket{\phi_p} \bra{\phi_p}$.
Each $\ket{\phi_p}$ together with $\{U_y\}_y$ is an quantum algorithm whose success probability in the OWF game equals to $p$, as $\langle \phi_p | P^H | \phi_p\rangle = p \langle \phi_p | \phi_p\rangle = p$. 

Then the success probability $\delta_H$ can be written in terms of eigenvalues, eigenvectors of $P^H$ and the projection of $\ket{\sigma_H, 0^L}$ under the eigenbasis: 
\begin{align} \label{eq:decomposition}
    \delta_H = \sum_p |\alpha_p|^2 p \quad\text{ where }\ket{\sigma_H, 0^L} = \sum_p \alpha_p \ket{\phi_p}.
\end{align}
A nature analogy of \Cref{eq:decomposition} in the classical setting is that a (randomized) algorithm can be decomposed into a collection of other algorithms, each is picked with certain probability; the probability of the larger algorithm will be a convex combination of those smaller algorithms. In the quantum case, as shown in \Cref{eq:decomposition}, the decomposition is still possible but we need to work under the eigenbasis of $P^H$: %
the non-uniform quantum algorithm is in superposition of $\ket{\phi_p}$ (a quantum algorithm with winning probability $p$) with amplitude being $\alpha_p$.

A further interpretation of \Cref{eq:decomposition}  tells us that to maximize $\delta_H$, one needs to pick an appropriate $\ket{\sigma_H}$ such that the overlap between $\ket{\sigma_H}\ket{0^L}$ and eigenvectors with large eigenvalues is as large as possible. One extreme example is when $\ket{\sigma_H}$ has unbounded length; in this case, we can always set $\ket{\sigma_H} := \ket{\phi_{p^*}}$ for the largest $p^*$ and $L = 0$, in which a success probability $p^*$ is achieved. However, this is not always possible as $\ket{\sigma_H}$ has only $S$-qubits and prevents us from maximizing the overlap. 

Because our target $\delta = \mathbb{E}_H[\delta_H]$, the first attempt to bound $\delta$ is to look at eigenvalues $\{p\}$ and distributions (amplitudes) of $\{p\}$ for each $H$ individually. This approach is very difficult to analyze as the structure of $H$ significantly affects both $\{p\}$ and its distribution (amplitudes). For example, when $H$ is an all-zero function, the largest eigenvalue is $1$; since an algorithm always wins when all the images are $0^n$. Whereas for an overwhelming fraction of $H$, it is far away from $1$. We do not know how to analyze $\delta_H$ individually; if possible, it must be laborious. 

\paragraph{Step 1: Alternating Measurement Games (\Cref{sec:main_theorem}).}
This is the analog of step 1 in \Cref{fig:reduction_OWF}. 
Let $\p$ be the random variable of the eigenvalues (probability) when $\ket{\sigma_H, 0^L}$ is projected into the eigenbasis of $P^H$. 
Recall that $\delta = \mathbb{E}_H[\sum_p |\alpha_p|^2 p] = \mathbb{E}_{H}[\p]$.
Our core idea is to give a global characterization of the distribution of $\p$. Namely, we want to bound the $g$-th moment of the random variable $\p$ for some large $g$: $\mathbb{E}_{H}[\p^g]$. If the $g$-th moment is $\varepsilon^g$, that means $\p$ concentrates round $\varepsilon$. 
More formally, by Jensen's inequality, 
\begin{align*}
    \mathbb{E}_H[\p] \leq \left(\mathbb{E}_H{[\p^g]}\right)^{1/g}, \text{ for all } g \geq 1. 
\end{align*}

If we can find a game whose success probability is $\mathbb{E}_H[\p^g]$ and upper bound the probability, we can also upper bound $\mathbb{E}_H[\p]$. Inspired by the alternating measurement technique by Marriot and Watrous \cite{marriott2005quantum}, we come up with the following games, which we call ``alternating measurement games'' and show the success probability in this game is precisely $\mathbb{E}_H[\p^g]$. For those who are familiar with \cite{marriott2005quantum} and its applications in cryptography (\cite[\dots]{zhandry2020schrodinger,aaronson2021new,coladangelo2021hidden,chiesa2022post}), we are not using alternating measurements to estimate success probability, but rather turning it directly into a security game. As far as we know, this direction has never been investigated before. 

In the $g$-alternating measurement games \Cref{fig:alternate-measure}, a challenger first prepares a uniform superposition over all challenges  $\ket{{\mathbbm 1}}_{\bf X} = \frac{1}{\sqrt{N}} \sum_x \ket x$; it then measures ${\bf X}$ together with an adversary's register ${\bf A}$; for each round in $1, 2, \cdots, g$: 
\begin{itemize}
    \item If the current round is odd, the challenger applies the following projection over ${\bf XA}$: 
    \begin{align*}
        \CP^H_0 = \sum_{x \in [N]} \ket x \bra x \otimes (U^H_{H(x)})^\dagger V^H_x U^H_{H(x)} \quad\text{ and }\quad \CP^H_1 = {\bf I}_{\bf XA} - \CP^H_0. 
    \end{align*}
    In other words, $\CP^H_0$ is a controlled projection. If the control is $x$, it will run $\As$ on input $H(x)$ (corresponding to $U^H_{H(x)}$), project into $\As$'s winning (corresponding to $V^H_x$) and undo the computation (which is $(U^H_{H(x)})^\dagger$).
    
    \item If the current round is even, the challenger applies $\IsUniform$ over ${\bf X}$:
    \begin{align*}
        \IsUniform_0 = \ket{{\mathbbm 1}} \bra{{\mathbbm 1}}_{\bf X} \otimes \mathbb{I}_{\bf A} \quad\text{ and }\quad   \IsUniform_1 = {\bf I} - \IsUniform_0. 
    \end{align*}
\end{itemize}
Finally, the winning condition is met when all the measurement outcomes are $0$s. 

\begin{figure}[!hbt]
    \centering
    \begin{gamespec}
    \begin{itemize}
        \item The challenger prepares an equal superposition $\ket{{\mathbbm 1}}_{\bf X} = \frac{1}{\sqrt{N}} \sum_x \ket x$. The whole quantum system over ${\bf XA}$ at the start of the game is $\ket{{\mathbbm 1}}_{\bf X} \ket{\sigma_H, 0^L}_{\bf A}$.
        \item For each round $i \in [g]$, the challenger applies binary measurements over ${\bf XA}$. Let the result be $b_i$. 
        \begin{enumerate}
            \item If it is odd round, it applies $(\CP^H_0, \CP^H_1)$. 
            \item If it is even round, it applies $(\IsUniform_0, \IsUniform_1)$. 
        \end{enumerate}
        \item The adversary wins if and only if $b_1 = b_2 = \cdots = b_g = 0$. 
    \end{itemize}
    \end{gamespec}
    \caption{Alternating Measurement Games for OWFs.}
    \label{fig:alternate-measure}
\end{figure}

The evolution of quantum states in the alternating measurement has a nice form. 
Marriot and Watrous showed that for a eigenvector $\ket{\phi_p}_{\bf A}$, when starting with $\ket{{\mathbbm 1}}_{\bf X}\ket{\phi_p}_{\bf A}$, the state evolves to either $p^{g/2} \ket{\psi_p}_{\bf XA}$ in odd rounds or $p^{g/2} \ket{{\mathbbm 1}}_{\bf X}\ket{\phi_p}_{\bf A}$ in even rounds. Here $\ket{\psi_p}$ is some quantum we do not need to write down explicitly, but importantly it is the same for all even rounds.  For a starting state $\ket{\sigma_H, 0^L} = \sum_{p} \alpha_p \ket{\phi_p}$, the state evolves as in \Cref{fig:evloves_in_MW}. 
\ifllncs 
\begin{figure}[!hbt]
\centering
    \resizebox{\textwidth}{!}{%
			\begin{tikzpicture}[->,shorten >=2pt]
			    \tikzstyle{every node}=[font=\Large]
			    \node (first) [rectangle, draw=none] {$\sum_{p} \alpha_p \ket{{\mathbbm 1}}_{\bf X} \ket{\phi_p}_{\bf A}$};
			    \node (second) [rectangle, below right=of first] {$\sum_{p} p^{1/2} \alpha_p \ket{\psi_p}_{\bf XA}$};
			    \node (third) [rectangle, draw=none, above right=of second] {$\sum_{p} p \alpha_p \ket{{\mathbbm 1}}_{\bf X} \ket{\phi_p}_{\bf A}$};
			    \node (fourth) [rectangle, below right=of third] {$\sum_{p} p^{3/2} \alpha_p \ket{\psi_p}_{\bf XA}$};
			    \node (fifth) [rectangle, above right=of fourth] {$\cdots$};
			    \draw[->] (first) -- (second) node [midway, above right] {$\CP^H_0$};
			    \draw[->] (second) -- (third) node [midway, below right] {$\IsUniform_0$};
			    \draw[->] (third) -- (fourth) node [midway, above right] {$\CP^H_0$};
			    \draw[->] (fourth) -- (fifth) node [midway, below right] {$\IsUniform_0$};
			\end{tikzpicture}
        }
        \caption{Evolution in Alternating Measurement Games.}
        \label{fig:evloves_in_MW}
\end{figure}
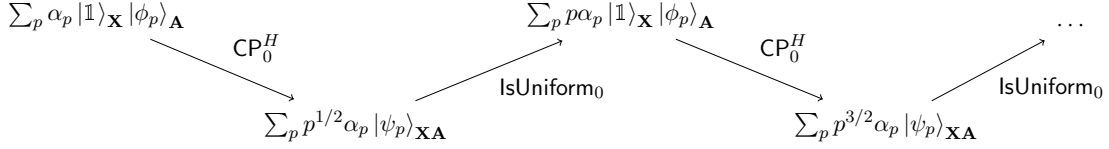
\else 
\begin{figure}[!hbt]
\centering
    \resizebox{0.9\textwidth}{!}{%
			\begin{tikzpicture}[scale=0.8,->,shorten >=2pt]
			    \node (first) [rectangle, draw=none] {$\sum_{p} \alpha_p \ket{{\mathbbm 1}}_{\bf X} \ket{\phi_p}_{\bf A}$};
			    \node (second) [rectangle, below right=of first] {$\sum_{p} p^{1/2} \alpha_p \ket{\psi_p}_{\bf XA}$};
			    \node (third) [rectangle, draw=none, above right=of second] {$\sum_{p} p \alpha_p \ket{{\mathbbm 1}}_{\bf X} \ket{\phi_p}_{\bf A}$};
			    \node (fourth) [rectangle, below right=of third] {$\sum_{p} p^{3/2} \alpha_p \ket{\psi_p}_{\bf XA}$};
			    \node (fifth) [rectangle, above right=of fourth] {$\cdots$};
			    \draw[->] (first) -- (second) node [midway, above right] {$\CP^H_0$};
			    \draw[->] (second) -- (third) node [midway, below right] {$\IsUniform_0$};
			    \draw[->] (third) -- (fourth) node [midway, above right] {$\CP^H_0$};
			    \draw[->] (fourth) -- (fifth) node [midway, below right] {$\IsUniform_0$};
			\end{tikzpicture}
        }
        \caption{Evolution in Alternating Measurement Games.}
        \label{fig:evloves_in_MW}
\end{figure}
\fi 

In the above figure, we do not normalize the quantum states. Their $\ell$-2 norms are then the probability of having all $0$ outcomes until the current round. 
Thereby, $\As$'s winning probability in the $g$-alternating measurement games is exactly $\mathbb{E}_{H}[\p^g] = \mathbb{E}_H[\sum_p |\alpha_p|^2 p^g]$. The remaining step is to bound security in the alternating measurement game.

To conclude, in this paragraph, we show that if a non-uniform algorithm has advantage $\mathbb{E}[\p]$ for the OWF game, it has advantage $\mathbb{E}[\p^g]$ for the $g$-alternating measurement game. It is an analog of the reduction with classical advice: non-uniform security to non-uniform multi-instance security shown in  \Cref{fig:reduction_OWF} (step 1). The most notable benefit is that the new reduction does not need to do rewindings\footnote{One may argue that the game itself does the rewinding, as alternating measurements can be viewed as a way of repairing quantum programs \cite{chiesa2022post}}; as a consequence, neither majority vote nor tomography is required. This is the main reason we obtain tight bounds in this work. 

\paragraph{Step 2: Removing Advice in the Alternating Measurement Games.}
This part is similar to step 2 in \Cref{fig:reduction_OWF}. 
In the previous section, we show a reduction from non-uniform advantage for the OWF game to non-uniform advantage for the $g$-alternating measurement game. However, we did not remove non-uniformity, which is the most troublesome part. We can set $g = S$ and pay a loss of $2^{-S}$ (which comes from a random guess of quantum advice). Thus, if a $S$-qubit non-uniform $\As$ has advantage $\mathbb{E}[\p]$, there must exist a {\bf uniform} quantum algorithm with advantage $2^{-S} \mathbb{E}[\p^S]$ in the $S$-alternating measurement game. The loss $2^{-S}$ will diminish as  $\mathbb{E}[\p^S]$ is also exponential in $S$ in most cases.
In the following paragraph, we only need to consider security of uniform algorithms.

\paragraph{Step 3: Security in the Alternating Measurement Games.}
Recall $b_1, b_2, \cdots, b_g$ are the binary outcomes in the alternating measurement game \Cref{fig:alternate-measure}. 
 Since $\mathbb{E}_H[\p^g] = \Pr[b_1 = \cdots = b_g = 0]$, we have: 
\begin{align*}
    \mathbb{E}_H[\p^g] = \prod_{i=1}^g \Pr[b_i = 0\,|\, b_{<i} = 0]. 
\end{align*}
We bound the conditional probability $\Pr[b_i = 0\,|\, b_{<i} = 0]$ for each individual $1 \leq i \leq g$; i.e., an adversary wins the $i$-th round, conditioned on its winning all the previous rounds. 

We observe that the conditional probability is monotonically non-decreasing. This is due to $\Pr[b_i = 0\,|\, b_{<i} = 0] = {\Pr[b_{< i + 1} = 0]}/{\Pr[b_{<i} = 0]} =  {\mathbb{E}_H[\p^i]}/{\mathbb{E}_H[\p^{i-1}]}$. The monotonicity of the conditional probabilities follows by Jensen's inequality. 
Therefore, we only need to bound the last term, $\varepsilon_g := \Pr[b_g = 0\,|\, b_{<g} = 0]$; and $\mathbb{E}_H[\p^g] \leq \varepsilon_g^g$.
Finally, we show $\varepsilon_g$ can be bounded using the existing theorem in \cite{chung2020tight}. 

We briefly recap the idea in \cite{chung2020tight}. 
To prove security in the multi-instance setting, 
\cite{chung2020tight} indeed prove a stronger statement. They show that for any $P$-quantum-query uniform algorithm $f$ and $T$-query uniform $\As$, the probability that $\As$ wins the OWF game conditioned on $f^H = 0$ is still bounded by $O(P + T^2)/N$. This model is later named as $P$-Bit-Fixing-QROM (or $P$-BF-QROM) by \cite{guo2021unifying}. One can view the algorithm $f$ as quantumly fixing $P$ coordinates of a random oracle, and the security says regardless of $f$'s behavior, as long as it is query bounded, the online algorithm only has limited advantages of inverting a uniformly random image.

When $g$ is odd, the measurement on the $g$-th round is $\CP^H_0, \CP^H_1$, in which an outcome $0$ corresponds to winning the OWF game. 
$\varepsilon_g$ is bounded by the advantage in the $P$-BF-QROM when $P \approx g T$: since we can set $f$ as a quantum algorithm that does alternating measurements for the first $g-1$ rounds, and $\As$ is the algorithm that plays the OWF game. Thus, $\varepsilon_g = O((gT + T^2)/N$ for odd $g$.

When $g$ is even,  the measurement on the $g$-th round is $(\IsUniform_0, \IsUniform_1)$. This step, unlike the case for odd $g$, has less physical meaning and we do not know how to bound it directly. But fortunately, we have $\epsilon_g \leq \epsilon_{g+1} = O((gT + T^2)/N)$ since we prove the conditional probability is non-decreasing and $\varepsilon_{g+1}$ is easy to bound for $g+1$ being odd. 

Therefore, we can bound $\varepsilon_g$ as well as $\mathbb{E}[\p^g]$ for all positive $g$ .

\paragraph{Achieving Non-Uniform Security.}
We combine all the steps above and achieve non-uniform security with quantum advice for OWG games (\Cref{fig:recap}). 
\begin{itemize}
    \item For any non-uniform quantum algorithm with $S$-qubits of advice and $T$ queries, let its advantage be $\delta = \mathbb{E}[\p]$. 
    \item(Step 1.) Its advantage in the $g$-alternating measurement game is $\mathbb{E}[\p^g]$. 
    \item(Step 2.) By guessing the quantum advice, we can further remove the non-uniformity. There exists a uniform algorithm in the $g$-alternating measurement game with advantage $2^{-S} \mathbb{E}[\p^g]$. 
    \item(Step 3.) By setting $g = S$ and $P = ST$, we can bound $2^{-S} \mathbb{E}[\p^g]$ by $O((ST + T^2)/N)^S$. 
\end{itemize}
Combining all the steps above, we have:
\begin{align*}
    \delta = \mathbb{E}[\p] \leq (\mathbb{E}[\p^S])^{1/S} \leq \left(2^S \cdot O((ST+T^2)/N)^S\right)^{1/S} = O((ST+T^2)/N),
\end{align*}
finishing the proof for \Cref{thm:simple_OWF}. 

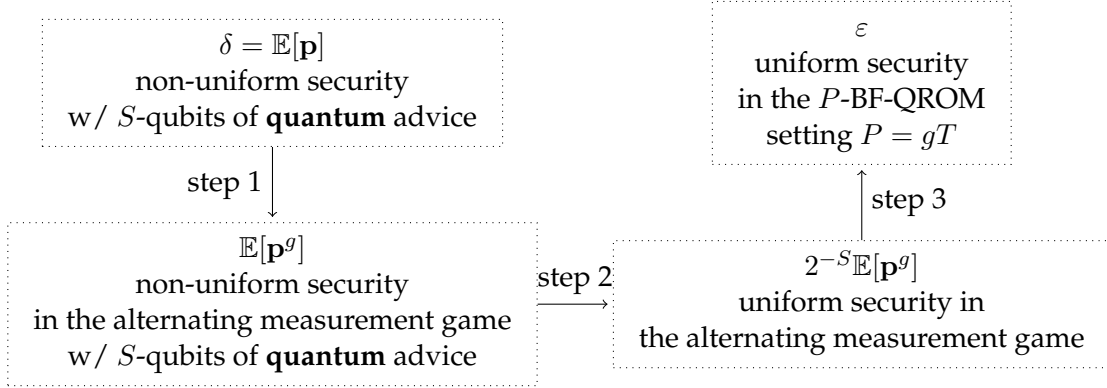
\begin{figure}[!hbt]
\centering
			\begin{tikzpicture}[scale=0.9,->,shorten >=2pt]
			    \node (first) [rectangle, dotted, draw=black]  {\begin{tabular}{c} $\delta = \mathbb{E}[\p]$ \\  non-uniform security \\ w/ $S$-qubits of {\bf quantum} advice \end{tabular}};
			    \node (second) [rectangle, dotted, draw=black, below=of first] {\begin{tabular}{c} $\mathbb{E}[\p^g]$ \\  non-uniform security \\ in the alternating measurement game \\ w/ $S$-qubits of {\bf quantum} advice \end{tabular}};
			    \node (third) [rectangle, dotted, draw=black, right=of second]{\begin{tabular}{c} $2^{-S} \mathbb{E}[\p^g]$ \\ uniform security in \\  the alternating measurement game \end{tabular}};
			    \draw[->] (first) -- (second) node [midway, left] {step 1};
			    \draw[->] (second) -- (third) node [midway, above] {step 2};
			    \node (fourth) [rectangle, dotted, draw=black, above=of third] {\begin{tabular}{c} $\varepsilon$ \\ uniform security \\ in the $P$-BF-QROM \\ setting $P = gT$ \end{tabular}};
			    \draw[->] (third) -- (fourth) node [midway, right] {step 3};
			\end{tikzpicture}
        \caption{Our reduction (comparing with the reduction in \cite{chung2020tight} \Cref{fig:reduction_OWF}).}
        \label{fig:recap}
\end{figure}

We can extend the above reduction to a general framework for non-uniform security.
As long as the advantage of a game in the $P$-BF-QROM can be bounded by $\varepsilon$, we can establish its non-uniform security by $O(\varepsilon)$. For decision games, instead of setting $g = S$, we choose a different $g$, but other ideas roughly follow. Please refer to \Cref{sec:main_theorem} for more details. 

\paragraph{Separation of Classical and Quantum Advice.}

Our separation is based on the recent work by Yamakawa and Zhandry \cite{yamakawa2022verifiable}. They show, relative to a random oracle, there exists a one-way function such that: (1) it is hard for any polynomial-query (or even subexponential-query) classical algorithm to invert a challenge image $y$; (2) a quantum algorithm (Yamakawa-Zhandry algorithm) can invert any $y$ with certainty. We observe that the quantum algorithm has the additional fascinating property: the algorithm makes non-adaptive queries, and the queries are even independent of $y$; then, post-processing depending on $y$ reveals a pre-image.

The OWF in \cite{yamakawa2022verifiable} is an example of separating classical and quantum advice for $S$ being some subexponential function and $T = 0$. 
When advice is quantum, the advice can be the queries made by the Yamakawa-Zhandry algorithm because these queries are independent of $y$. The winning probability with quantum advice is equal to that of the Yamakawa-Zhandry algorithm, which is arbitrarily close to $1$. When only classical advice is given, we show that any algorithm only knows at most $\tilde{O}(S)$ positions of the random oracle, based on a theorem from \cite{coretti2018random}; as no more queries are allowed, most images $y$ are not queried and thus can never be inverted.

\section{Preliminaries}
\label{sec:prelim}

We assume readers are familiar with the basics of quantum information and computation. All backgrounds on quantum information can be found in \cite{nielsen2002quantum}. 

\subsection{Quantum Random Oracle Model}
In the quantum random oracle model, a hash function is modeled as a random classical function $H$. The function $H$ is sampled at the beginning of any security game and then gets fixed. Oracle access to $H$ is defined by a unitary $U_H: \ket{x, y} \to \ket{x, y + H(x)}$. 
A quantum oracle algorithm with oracle access to $H$ is then denoted by a sequence of unitary $U_1$, $U_H$, $U_2$, $U_H$, $\cdots$, $U_T$, $U_H$, $U_{T+1}$ followed by a computational basis measurement, where $U_i$ is a local unitary operating on the algorithm's internal register. The number of queries, in this case, is $T$ --- the number of $U_H$ calls.

\ifllncs 
\input{helpfullemma_llncs}
\else 
\subsection{Other Useful Lemmas}

We use the lemmas in this section to prove bounds in the alternating measurement games (\Cref{sec:main_theorem}). Readers can safely skip and return to this section for understanding proofs in \Cref{sec:main_theorem}. 

\begin{lemma}\label{lem:prob_reweight}
Let $N$ be a positive integer and $p_1, \cdots, p_N \in \mathbb{R}^{\geq 0}$. 
Let $\alpha_1, \cdots, \alpha_N$ be a distribution over $[N]$: i.e., $\alpha_i \in [0, 1]$ and $\sum_{i \in [N]} \alpha_i = 1$. 

Assume $\mu := \sum_{i \in [N]} \alpha_i p_i > 0$. 
Let $\beta_1, \cdots, \beta_N$ be another distribution over $[N]$: $\beta_i := \alpha_i p_i / \mu$. The following holds: 
\begin{align*}
    \sum_{i \in [N]} \beta_i p_i \geq \sum_{i \in [N]} \alpha_i p_i. 
\end{align*}
\end{lemma}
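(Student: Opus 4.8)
The plan is to substitute the definition of $\beta_i$ into the left-hand side and recognize the claimed inequality as a thinly disguised restatement of the non-negativity of variance (equivalently, Jensen's inequality for the convex map $t \mapsto t^2$, or a single application of Cauchy--Schwarz). First I would unfold $\beta_i = \alpha_i p_i / \mu$, giving
\[
  \sum_{i \in [N]} \beta_i p_i \;=\; \frac{1}{\mu} \sum_{i \in [N]} \alpha_i p_i^2 .
\]
Since $\mu = \sum_i \alpha_i p_i > 0$ is guaranteed by hypothesis, I may clear the denominator without worrying about sign reversal: the target inequality $\sum_i \beta_i p_i \geq \mu$ is equivalent to
\[
  \sum_{i \in [N]} \alpha_i p_i^2 \;\geq\; \Big( \sum_{i \in [N]} \alpha_i p_i \Big)^2 .
\]

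Next I would observe that, viewing $\alpha = (\alpha_1, \dots, \alpha_N)$ as a probability distribution and the $p_i$ as the values of a random variable, this is exactly $\mathbb{E}[p^2] \geq (\mathbb{E}[p])^2$, which holds because the variance $\mathbb{E}[p^2] - (\mathbb{E}[p])^2$ is non-negative. For a self-contained one-liner that avoids invoking probabilistic language, I would instead apply Cauchy--Schwarz to the vectors $(\sqrt{\alpha_i})_i$ and $(\sqrt{\alpha_i}\, p_i)_i$, obtaining
\[
  \Big( \sum_{i} \alpha_i p_i \Big)^2 \;\leq\; \Big( \sum_{i} \alpha_i \Big)\Big( \sum_{i} \alpha_i p_i^2 \Big) \;=\; \sum_{i} \alpha_i p_i^2 ,
\]
where the last equality uses $\sum_i \alpha_i = 1$. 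Either route closes the argument immediately.

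I do not expect a genuine obstacle here, as the statement is elementary; the only point requiring care is that division and multiplication by $\mu$ preserve the inequality direction, which is justified by the standing assumption $\mu > 0$. The more useful thing to record is the intuition behind the reweighting: passing from $\alpha_i$ to $\beta_i \propto \alpha_i p_i$ is precisely the size-biased (importance-sampling) distribution that shifts probability mass toward larger values of $p_i$, so it can only raise the expectation. This is exactly the monotonicity phenomenon the lemma is meant to supply downstream, where reweighting the eigenvalue distribution of $P^H$ by its own eigenvalue is used to argue that the conditional success probabilities $\Pr[b_i = 0 \mid b_{<i} = 0]$ in the alternating measurement game are non-decreasing in $i$.
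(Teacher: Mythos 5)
Your proposal is correct and follows essentially the same route as the paper: substitute the definition of $\beta_i$, reduce the claim to $\sum_i \alpha_i p_i^2 \geq \left(\sum_i \alpha_i p_i\right)^2$, and conclude by non-negativity of variance of the random variable taking value $p_i$ with probability $\alpha_i$. The Cauchy--Schwarz variant you mention is a harmless alternative phrasing of the same fact.
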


\begin{proof}
Let $\mb{X}$ be a random variable that takes value $p_i$ w.p. $\alpha_i$. It is easy to see that $\mbb{E}[\mb{X}] = \sum_{i} \alpha_i p_i$ and 
$\mbb{E}[\mb{X}^2] = \sum_{i} \alpha_i p_i^2$.

Since we assume $\mu = \mbb{E}[\mb{X}] > 0$, we rewrite the inequality as follows: 
\begin{align*}
    \sum_{i} \alpha_i p_i^2 \geq \left(\sum_{i} \alpha_i p_i\right)^2.
\end{align*}
The lemma holds by observing that L.H.S. is $\mbb{E}[\mb{X}^2]$, R.H.S. is $\mbb{E}[\mb{X}]^2$ and the fact that $\mb{Var}[\mb{X}] := \mbb{E}[\mb{X}^2] - \mbb{E}[\mb{X}]^2 \geq 0$. 
\end{proof}

\begin{lemma}\label{lem:prob_monotone}
Let $N$ be a positive integer and $p_1, \cdots, p_N \in \mathbb{R}^{\geq 0}$. 
Let $c_1, \cdots, c_N$ be a distribution over $[N]$. Assume $\sum_{i \in [N]} c_i p_i > 0$. 
Define $S_k$ for every integer $k \geq 1$: 
\begin{align*}
    S_k = \frac{\sum_{i \in [N]} c_i p_i^{k}}{\sum_{i \in [N]} c_i p_i^{k-1}}. 
\end{align*}
Then $\{S_k\}_{k \geq 1}$ is monotonically non-decreasing. %
\end{lemma}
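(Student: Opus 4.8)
The plan is to recognize $S_k$ as a ratio of consecutive moments of a nonnegative random variable and to reduce the whole monotonicity statement to a single application of Cauchy--Schwarz. Concretely, following the proof of \Cref{lem:prob_reweight}, I would let $\mb{X}$ be the random variable taking value $p_i$ with probability $c_i$, so that $\mbb{E}[\mb{X}^k] = \sum_{i \in [N]} c_i p_i^k$ and hence $S_k = \mbb{E}[\mb{X}^k]/\mbb{E}[\mb{X}^{k-1}]$, with the convention $\mb{X}^0 \equiv 1$.

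First I would verify that every denominator is strictly positive, so that each $S_k$ is well defined and cross-multiplication is legitimate. Since the $p_i$ are nonnegative and $\sum_i c_i p_i > 0$, there is an index $i_0$ with $c_{i_0} > 0$ and $p_{i_0} > 0$; then for every $k \geq 1$ one has $\sum_i c_i p_i^{k-1} \geq c_{i_0} p_{i_0}^{k-1} > 0$ (the case $k = 1$ also follows directly from $\sum_i c_i = 1$).

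Next I would rewrite the target $S_{k+1} \geq S_k$. After clearing the positive denominators it becomes
\begin{align*}
    \mbb{E}[\mb{X}^{k+1}]\,\mbb{E}[\mb{X}^{k-1}] \geq \mbb{E}[\mb{X}^k]^2,
\end{align*}
i.e.\ the statement that the moment sequence is log-convex in $k$. I would prove this by Cauchy--Schwarz: since $\mb{X} \geq 0$, the half-integer powers $\mb{X}^{(k+1)/2}$ and $\mb{X}^{(k-1)/2}$ are real and nonnegative, and
\begin{align*}
    \mbb{E}[\mb{X}^k]^2 = \mbb{E}\!\left[\mb{X}^{(k+1)/2}\,\mb{X}^{(k-1)/2}\right]^2 \leq \mbb{E}[\mb{X}^{k+1}]\,\mbb{E}[\mb{X}^{k-1}].
\end{align*}

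None of this is a genuine obstacle; the only points requiring care are the bookkeeping on positivity of the denominators (so the reduction to the log-convexity inequality is valid over all $k \geq 1$) and the observation that nonnegativity of the $p_i$ is precisely what lets us split $\mb{X}^k$ into the symmetric product $\mb{X}^{(k+1)/2}\,\mb{X}^{(k-1)/2}$ needed to apply Cauchy--Schwarz.
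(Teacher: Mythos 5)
Your proof is correct, and it takes a more direct route than the paper's. After clearing denominators, both arguments reduce to the same inequality
\begin{align*}
    \left(\sum_{i} c_i p_i^{k}\right)^2 \;\leq\; \left(\sum_{i} c_i p_i^{k+1}\right)\left(\sum_{i} c_i p_i^{k-1}\right),
\end{align*}
i.e.\ log-convexity of the moment sequence. You prove it in one step by Cauchy--Schwarz with the symmetric splitting $p_i^k = p_i^{(k+1)/2}\, p_i^{(k-1)/2}$, whereas the paper routes through its auxiliary reweighting lemma (\Cref{lem:prob_reweight}): it tilts the distribution to $\alpha_i \propto c_i p_i^{k-1}$, observes that $S_k = \sum_i \alpha_i p_i$ and $S_{k+1} = \sum_i \beta_i p_i$ for the further tilt $\beta_i \propto \alpha_i p_i$, and then invokes variance nonnegativity $\sum_i \alpha_i p_i^2 \geq \left(\sum_i \alpha_i p_i\right)^2$. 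The two mechanisms are equivalent in substance --- variance nonnegativity is Cauchy--Schwarz with one factor constant, applied after tilting --- so neither buys extra generality; yours is more self-contained (no auxiliary lemma needed), while the paper's reweighting view supplies the probabilistic picture that recurs later, where the conditional probabilities $\epsilon_\As^{(t)}$ in the alternating measurement game are exactly means of $p$ under distributions tilted by powers of $p$. A minor point in your favor: you verify explicitly that every denominator is strictly positive (so each $S_k$ is well defined and cross-multiplication is legitimate), a step the paper leaves implicit.
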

\begin{proof}
We fix any integer $k \geq 1$. 
Let $\alpha_i = c_i p_i^{k-1} / (\sum_i c_i p_i^{k-1})$. It it easy to see that $S_k = \sum_{i} \alpha_i p_i$. 

Let $\beta_i = \alpha_i p_i / \mu$ where $\mu = \sum_i \alpha_i p_i$. We have 
\begin{align*}
    \beta_i &=  \alpha_i p_i / \mu \\
            &= \frac{c_i p_i^{k}}{\sum_i c_i p_i^{k-1}  \cdot \mu }\\
            &= \frac{c_i p_i^{k}}{\sum_i c_i p_i^{k-1} \cdot \left( \sum_i c_i p_i^{k} / (\sum_i c_i p_i^{k-1})  \right) } \\
            &= \frac{c_i p_i^{k}}{\sum_i c_i p_i^{k}}. 
\end{align*}
Therefore, $S_{k+1} = \sum_i \beta_i p_i$. 
By \Cref{lem:prob_reweight}, $S_{k+1} =  \sum_i \beta_i p_i \geq  \sum_i \alpha_i p_i = S_{k}$. 
\end{proof}

\begin{lemma}[Jensen's inequality]\label{lem:jesen}
Let $N, g$ be two positive integers and $p_1, \cdots, p_N \in \mathbb{R}^{\geq 0}$. 
Let $c_1, \cdots, c_N$ be a distribution over $[N]$. Assume $\sum_{i \in [N]} c_i p_i > 0$. 
If the following holds
\begin{align*}
    {\sum_{i \in [N]} c_i p_i^{g}} \leq \delta^g, 
\end{align*}
then $\sum_{i \in [N]} c_i p_i \leq \delta$. 
\end{lemma}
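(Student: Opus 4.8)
The plan is to reuse the two monotonicity lemmas just established, since they are evidently set up precisely for this purpose. First I would interpret $c_1,\dots,c_N$ as the distribution of a nonnegative random variable $\mathbf{X}$ taking value $p_i$ with probability $c_i$, so that the hypothesis reads $\sum_i c_i p_i^g \le \delta^g$ and the goal is $\sum_i c_i p_i \le \delta$. The key observation is that $\sum_i c_i p_i$ is exactly $S_1$ in the notation of \Cref{lem:prob_monotone}, and that the $g$-th moment telescopes into a product of the ratios $S_k$.

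Concretely, I would record the telescoping identity
\[
  \sum_{i\in[N]} c_i p_i^g \;=\; \prod_{k=1}^g S_k,
\]
which holds because consecutive numerators and denominators cancel and $\sum_i c_i p_i^0 = \sum_i c_i = 1$. Before writing this down I would verify that every denominator $\sum_i c_i p_i^{k-1}$ is strictly positive, so that each $S_k$ is well defined: the case $k=1$ is immediate since $\sum_i c_i = 1$, and for $k\ge 2$ the assumption $\sum_i c_i p_i > 0$ forces the existence of an index with $c_i>0$ and $p_i>0$, whence $\sum_i c_i p_i^{k-1}>0$. This bookkeeping is the only point that needs any care.

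With the identity in hand the rest is immediate. \Cref{lem:prob_monotone} gives $S_1 \le S_k$ for every $k\ge 1$, and since $S_1 = \sum_i c_i p_i$, we obtain
\[
  \left(\sum_{i\in[N]} c_i p_i\right)^g \;=\; S_1^g \;\le\; \prod_{k=1}^g S_k \;=\; \sum_{i\in[N]} c_i p_i^g \;\le\; \delta^g.
\]
As all quantities are nonnegative, taking $g$-th roots yields $\sum_i c_i p_i \le \delta$, as desired.

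I do not expect any genuine obstacle here: the statement is exactly the standard power-mean (equivalently Jensen) inequality for the convex map $t\mapsto t^g$ on $[0,\infty)$, and the two preceding lemmas already isolate its single nontrivial ingredient, namely the monotonicity of the moment ratios $S_k$. As an alternative one-line route, I could bypass the telescoping altogether and invoke convexity of $t\mapsto t^g$ directly via Jensen's inequality to get $\left(\sum_i c_i p_i\right)^g \le \sum_i c_i p_i^g \le \delta^g$, then take $g$-th roots; but routing through \Cref{lem:prob_monotone} keeps the argument self-contained within the lemmas already developed.
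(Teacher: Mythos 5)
Your proof is correct, but it takes a different route from the paper, which in fact gives no proof of this lemma at all: the statement is labeled ``Jensen's inequality'' and treated as the standard fact that $t \mapsto t^g$ is convex on $[0,\infty)$ — exactly the one-line alternative you mention at the end, $\left(\sum_i c_i p_i\right)^g \le \sum_i c_i p_i^g \le \delta^g$, followed by taking $g$-th roots. Your primary argument — telescoping $\sum_i c_i p_i^g = \prod_{k=1}^g S_k$ and invoking $S_1 \le S_k$ from \Cref{lem:prob_monotone} — is a genuinely different, self-contained derivation, and your check that every denominator $\sum_i c_i p_i^{k-1}$ is strictly positive (needed both for the telescoping and for \Cref{lem:prob_monotone} itself) is a point the paper glosses over. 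What each approach buys: the direct convexity route is shorter and does not depend on the preceding lemmas; your telescoping route has the virtue of mirroring exactly how the paper later uses this machinery in \Cref{sec:main_theorem}, where the winning probability $\epsilon_\As^{\otimes k}$ factors as a product of conditional probabilities $\epsilon_\As^{(t)}$ that are precisely these moment ratios, and \Cref{cor:monotone_mis_bound} is precisely the monotonicity statement $S_1 \le S_k$ in that setting. One small remark applying to both routes: taking $g$-th roots requires $\delta \ge 0$ (for even $g$, the hypothesis $\sum_i c_i p_i^g \le \delta^g$ alone would not rule out negative $\delta$); this is implicit in the lemma since $\delta$ is always a probability bound in its applications.
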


\fi

\section{\texorpdfstring{$(S, T)$ Quantum Algorithms and Games in the QROM}{(S, T) Quantum Algorithms and Games in the QROM}}
\label{sec:gamesandalgorithms}

In this work, we consider non-uniform algorithms against games in the QROM. We start by defining $(S, T)$ non-uniform quantum algorithms with either $S$ classical bits of advice or $S$ qubits of advice. The definitions below more or less follow definitions in \cite{chung2020tight} but are adapted for our setting.

\begin{definition}[$(S, T)$ Non-Uniform Quantum Algorithms in the QROM]\label{def:STalgo}
A $(S, T)$ non-uniform quantum algorithm with {\bf classical} advice in the QROM is modeled by a collection $\{s_H\}_{H:[N] \to [M]}$ and $\{U_\inp\}_{\inp}$: for every function $H$, $s_H$ is a piece of $S$-bit advice and $U^H_{\inp}$ is a unitary that calls the oracle $H$ at most $T$ times. 

A $(S, T)$ non-uniform quantum algorithm with {\bf quantum} advice in the QROM is modeled by a collection $\{\ket{\sigma_H}\}_{H}$ and $\{U_\inp\}_{\inp}$: for every function $H$, $\ket{\sigma_H}$ is a piece of $S$-qubit advice and $U^H_{\inp}$ is a unitary that calls the oracle $H$ at most $T$ times. 

Similarly, we denote a {\bf uniform} quantum algorithm by a collection of unitaries $\{U_{\inp}\}_{\inp}$: it is a non-uniform quantum algorithm satisfying $\ket{\sigma_H} = \ket{0^S}$ for all $H$.

\medskip

When the algorithm is working with oracle access to $H$, its initial state is $\ket{s_H} \ket{0^L}$ or $\ket{\sigma_H} \ket{0^L}$, respectively. On input $\inp$, it applies $U^H_{\inp}$ on the initial state and measures its internal register in the computational basis. 
\end{definition}

Since we are working in the idealized model, we require neither $L$ nor the size of the unitary $U_{\inp}$ to be polynomially bounded. In the rest of the work, we will focus on non-uniform algorithms with quantum advice as our new reduction works for both cases. Therefore, `non-uniform algorithms' denotes `non-uniform algorithms with quantum advice'. 

\begin{remark}
We can assume quantum advice is a {\bf pure} state. Due to convexity, the optimal non-uniform algorithm can always have advice as a pure state. If the advice is a mixed state and achieves a winning probability $p$, there always exists a pure state that achieves a winning probability at least $p$.
\end{remark}

\medskip

Next, we define games in the QROM. 
\begin{definition}[Games in the QROM]\label{def:game}
A game $G$ in the QROM is specified by two classical algorithms $\samp^H$ and $\ver^H$: 
\begin{itemize}
    \item $\samp^H(r)$: it is a deterministic algorithm that takes uniformly random coins $r \in \mc{R}$ as input, and outputs a challenge $\ch$.
    \item $\ver^H(r, \ans)$: it is a deterministic algorithm that takes the same random coins for generating a challenge and an alleged answer $\ans$, and outputs $b$ indicating whether the game is won ($b = 0$ for winning). 
\end{itemize}
Let $T_\samp$ be the number of queries made by $\samp$ and $T_\ver$ be the number of queries made by $\ver$. 

\medskip

For a fixed $H$ and a quantum algorithm $\As$, the game $G^H_{\As}$ %
is executed as follows:
\begin{itemize}
    \item A challenger $\Cs$ samples $\ch \gets \samp^H(r)$ using uniformly random coins $r$.
    \item A (uniform or non-uniform) quantum algorithm $\As$ has oracle access to $H$, takes $\ch$ as input and outputs $\ans$. We call $\As$ an online adversary/algorithm. 
    \item $b \gets \ver^H(r, \ans)$ is the game's outcome. 
\end{itemize}
\end{definition}

\begin{remark}
In the above definition, a quantum algorithm makes at most $T$ oracle queries to $H$. However, in some particular games, the algorithm can not get access to $H$. One famous example is Yao's box, in which an adversary is given a challenge input $x$ and the goal is to output $H(x)$. The adversary can query $H$ on any input except $x$ (otherwise, the game is trivial). The definition \Cref{def:game} does not capture this case. Nonetheless, we will stick with the current definition. For the special case when an algorithm has access to a different oracle $H'$, the technique in this work extends as well. This extension requires a similar definition of games (Definition 3.3) in \cite{chung2020tight}. 
\end{remark}

Let us warm up by having a close look at the following examples. 
\begin{example}\label{ex:OWFandPRG}
The first example is function inversion (or OWFs) $G_{\sf OWF}$. $r = x \in [N]$ is a uniformly random pre-image and $\ch := H(x)$. The goal is to find a pre-image of $\ch$. The verification procedure takes $r = x$ and $\ans = x'$, it outputs $0$ (winning) if and only if $x'$ is a pre-image of $H(x)$. 

The other example $G_{\sf PRG}$ is to distinguish images of PRG from a uniformly random element. In this example, $r$ consists of $(b, x, y)$ where $b$ is a single bit, $x$ is a uniformly random pre-image in $[N]$ and $y$ is a uniformly random element in $[M]$. The challenge $\ch$ is $H(x)$ if $b = 0$, otherwise $\ch = y$.  The goal is to distinguish whether an image of a random input or a random element in the range is given. The verification procedure takes $r = (b, x, y)$ and $\ans = b'$, it outputs $0$ if and only if $b = b'$. 
\end{example}

\begin{definition}
We say a game $G$ has $\delta(S, T) := \delta$ maximum winning probability (or has security $\delta$, for cryptographic games) against all $(S, T)$ non-uniform quantum adversaries with classical or quantum advice if 
\begin{align*}
    \max_{\As}\Pr_{H}\left[ G^H_\As = 1 \right] \leq \delta,
\end{align*}
where $\max$ is taken over all $(S, T)$ non-uniform quantum adversaries $\As$ with classical or quantum advice, respectively.
\end{definition}

\subsection{Quantum Bit-Fixing Model}
Here we recall a different model called the quantum bit-fixing model. In the following sections, we will relate winning probability of a game $G$ against $(S, T)$ non-uniform quantum algorithms with that in the quantum bit-fixing model (BF-QROM). Since the previous quantum non-uniform bounds require analyzing the quantum bit-fixing model, winning probabilities in the bit-fixing model are already known for many games, and our improved bounds only need a new reduction. The following definitions are adapted from \cite{guo2021unifying}. 

\begin{definition}[Games in the $P$-BF-QROM]\label{def:pbfqrom}
It is similar to games in the standard QROM, except now $H$ has a different distribution. 
    \begin{itemize}
        \item Before a game starts, a quantum algorithm $f$ (having no input) with at most $P$ queries to an oracle is picked and fixed by an adversary. 
        \item {\bf Rejection Sampling Stage:} A random oracle $H$ is picked uniformly at random, then conditioned on $f^H$ outputs $0$. In other words, the distribution of $H$ is defined by a rejection sampling:
        \begin{enumerate}
            \item $H \gets \{f:[N] \to [M]\}$. 
            \item Run $f^H$ and obtain a binary outcome $b$ together with a quantum state $\tau$\footnote{In \cite{guo2021unifying}, they do not need quantum or classical memory $\tau$ shared between $f$ and $\As$. However, this is essential in our proof. Nonetheless, all security proofs in the $P$-BR-QROM work in the stronger setting (with $\tau$ shared between stages).}. \item Restart from step 1 if $b \ne 0$.
        \end{enumerate}
        \item {\bf Online Stage:} The game is then executed with oracle access to $H$, and an algorithm $\Bs$ gets $\tau$.
    \end{itemize}
\end{definition}

A $(P,T)$ algorithm in the $P$-BF-QROM consists of $f$ for sampling the distribution and $\Bs$ for playing the game, with $f$ making at most $P$ queries and $\Bs$ making at most $T$ queries. We also call $\Bs$ an {\bf online} algorithm/adversary. 

We will also consider the following classical analog $P$-BF-ROM only when showing a separation between classical and quantum advice in \Cref{sec:separation}. 
\begin{definition}[Games in the $P$-BF-ROM]\label{def:pbfrom}
It is similar to the above \Cref{def:pbfqrom}, except both $f$ and $\Bs$ can only make classical queries. 
\end{definition}

\begin{definition}
We say a game $G$ has $\nu(P, T) := \nu$ maximum winning probability (or is $\nu$-secure, for cryptographic games) in the $P$-BF-QROM if 
\begin{align*}
    \max_{f, \Bs}\Pr_{{H}}\left[ f^H = 0 \,\wedge\, G^H_\Bs = 1 \right] \leq \nu,
\end{align*}
where $\max$ is taken over all $(P, T)$ quantum adversaries $(f, \Bs)$ with $f$ making at most $P$ queries and $\Bs$ making at most $T$ queries.    
\end{definition}

We know the following two lemmas from \cite{chung2020tight,guo2021unifying}. 
\begin{lemma}[Function Inversion in the $P$-BF-QROM]
The OWF game has $\nu(P, T) = (P + T^2)/\min\{N,M\}$ in the $P$-BF-QROM.
\end{lemma}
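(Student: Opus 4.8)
The plan is to work in Zhandry's compressed-oracle representation of the random oracle, which lets us replace the distribution of $H$ by a database $D$ of input--output pairs built up lazily as queries are made, while preserving all measurement statistics. The key structural fact I would exploit is that after the preprocessing algorithm $f$ has run (and we have conditioned on its output being $0$), the compressed database $D_f$ has at most $P$ recorded entries in every branch of the superposition, since each of $f$'s $P$ queries adds at most one entry. The shared state $\tau$ carried into the online phase is then entangled with a database of size $\le P$, and the online adversary $\Bs$ together with the single challenge query contributes at most $T+1$ further entries.

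First I would set up the winning event inside the compressed oracle: to output $x'$ with $H(x')=y$ where $y=H(x)$, the final database must, up to the standard additive error of the compressed-oracle approximation, contain a pair $(x',y)$ recording a preimage of $y$. I would then split the analysis over the random challenge input $x$ according to whether $f$ has already touched it. The ``preprocessing wins'' branch is where $x\in\mathrm{dom}(D_f)$, or where the freshly sampled image $y=H(x)$ happens to coincide with one of the $\le P$ output values already present in $D_f$; since $|D_f|\le P$ as an operator statement, each of these has amplitude-squared at most $P/N$ and $P/M$ respectively, i.e.\ at most $P/\alpha$. On the complementary branch, $x$ is untouched and $y$ is a fresh uniform element of $[M]$ independent of $D_f$.

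On the fresh branch the problem reduces to the plain query lower bound for inverting a uniformly random target: I would track, via the compressed-oracle progress/transition lemma, the total amplitude the online queries can place on databases containing a preimage of the fixed fresh $y$. Each query raises this amplitude by $O(\sqrt{1/\alpha})$ --- concretely $O(\sqrt{(N/M)/N})=O(\sqrt{1/M})$ when $M\le N$ (there are about $N/M$ preimages to find among $N$ inputs) and $O(\sqrt{1/N})$ when $M\ge N$ --- so after $T$ queries the success amplitude is $O(T/\sqrt{\alpha})$ and the probability is $O(T^2/\alpha)$. Summing the three contributions gives $\nu(P,T)=O\!\left((P+T^2)/\alpha\right)$, and a matching lower bound follows by combining Hellman-style fixing of $P$ coordinates (winning outright when the challenge lands on a fixed point, probability $\Omega(P/\alpha)$) with Grover search on the remaining inputs ($\Omega(T^2/\alpha)$).

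The step I expect to be the main obstacle is making the fresh-branch search bound rigorous while the preprocessing database $D_f$ is still present and entangled with $\tau$: one must argue that conditioning on $f^H=0$ and the presence of up to $P$ prior entries does not help the online search for a preimage of the independently fresh $y$, which requires carefully propagating the compressed-oracle error terms through the rejection-sampling projection rather than treating $D_f$ as a classical list. The cleanest route is to prove the stronger operator statement that the projection onto ``the database contains a preimage of a uniformly random fresh value'' grows by $O(1/\sqrt{\alpha})$ per online query regardless of the (size-$\le P$) database it starts from, which is exactly the form in which the cited analyses of \cite{chung2020tight,guo2021unifying} establish the bound.
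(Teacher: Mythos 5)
Your proposal is correct and follows essentially the same route as the paper: the paper gives no self-contained proof of this lemma, deferring entirely to Lemma 5.2 of \cite{chung2020tight} and Lemma 10 of \cite{guo2021unifying}, and those proofs are precisely the compressed-oracle argument you outline (database of size at most $P$ after preprocessing, a $P/\alpha$ term for challenges that touch the database or collide with its recorded outputs, and a per-query $O(1/\sqrt{\alpha})$ progress bound for the online search). Your closing observation---that the progress bound must be proved as an operator statement holding for \emph{any} state supported on size-$\le P$ databases, rather than by treating $D_f$ as a classical list, so that it survives the rejection-sampling projection and renormalization---is exactly how the cited analyses handle the conditioning on $f^H=0$.
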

See the proof for Lemma 5.2 in \cite{chung2020tight} and Lemma 10 in \cite{guo2021unifying}.  

\begin{lemma}[PRGs in the $P$-BF-QROM]
The game PRG has $\nu(P, T) = 1/2 + \sqrt{(P + T^2)/N}$ in the $P$-BF-QROM.
\end{lemma}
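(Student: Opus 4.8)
The plan is to reconstruct the stated bound by a \emph{decision-to-search} reduction: I would reduce distinguishing the challenge $y = H(x)$ (for uniform $x$) from a uniform $y \in [M]$ to the task of \emph{inverting} $H$, pay the usual square-root loss, and then invoke the preceding function-inversion bound $(P+T^2)/\min\{N,M\}$ in the $P$-BF-QROM. Concretely, I would write the online adversary's success probability as $\frac{1}{2} + \frac{1}{2}\Delta$, where $\Delta = \Pr[\Bs \to 0 \mid b = 0] - \Pr[\Bs \to 0 \mid b = 1]$ is the distinguishing advantage, evaluated over the rejection-sampled oracle $H$ so that the bit-fixing conditioning $f^H = 0$ and the shared state $\tau$ are already folded into the joint distribution of $(H, \tau)$. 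It then suffices to show $|\Delta| \le 2\sqrt{(P+T^2)/N}$.

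First I would interpolate between the two cases with a reprogramming hybrid. In the $b=0$ experiment, after sampling the uniform preimage $x$, I replace $H(x)$ by a fresh uniform value $y' \in [M]$, present $y'$ as the challenge, and hand $\Bs$ oracle access to the reprogrammed oracle $H^{(x \to y')}$. In this hybrid the challenge is distributed exactly as in the $b=1$ experiment (uniform in $[M]$), and the only difference from the genuine $b=1$ experiment is that the oracle has been altered at a single, hidden, uniformly random point $x$. By the One-Way-to-Hiding principle, any $T$-query online algorithm distinguishes an oracle from its one-point reprogramming only up to $\sqrt{q}$, where $q$ is the probability that measuring a uniformly chosen query of $\Bs$ returns the hidden location $x$. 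The same reprogramming estimate controls the gap between the genuine $b=0$ experiment and the hybrid, so combining the two steps gives $|\Delta| \le 2\sqrt{q}$.

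The remaining step is to bound $q$. Since $x$ is uniform and $y' = H^{(x \to y')}(x)$, the point $x$ is a preimage of the challenge, so an algorithm that runs $\Bs$, measures one of its queries at random, and outputs the result is a genuine $(P, T)$ inverter for $H$ in the $P$-BF-QROM: the preprocessing $f$ contributes $P$ queries and $\Bs$ contributes $T$. Hence the preceding function-inversion lemma yields $q \le (P+T^2)/\min\{N,M\} \le (P+T^2)/N$ in the relevant regime $M \ge N$, and therefore $|\Delta| \le 2\sqrt{(P+T^2)/N}$, which gives a winning probability of exactly $\frac{1}{2} + \sqrt{(P+T^2)/N}$.

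The main obstacle I anticipate is propagating the shared bit-fixing state $\tau$ correctly through the reprogramming argument while preserving the $f^H = 0$ conditioning: reprogramming $H$ at $x$ could in principle change $f^H$, but because $f$ is executed entirely in the offline stage and its effect survives only through the fixed state $\tau$, the online reprogramming does not disturb the conditioning, and this point needs to be argued carefully. A secondary subtlety is obtaining the query-probability-to-inversion estimate with the right dependence, so that the Grover term $T^2$ (rather than a stray extra factor of $T$) appears inside the square root; this is precisely where invoking the stated BF-QROM inversion bound --- which already bundles the preprocessing budget $P$ with the Grover contribution $T^2$ --- keeps the estimate clean and spares us from re-deriving a compressed-oracle analysis from scratch.
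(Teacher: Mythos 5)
The paper does not actually prove this lemma itself: it is stated and then discharged by citation (``See the proof for Lemma 5.13 in \cite{chung2020tight}''). The cited proof is a compressed-oracle argument: it couples the $b=0$ and $b=1$ worlds inside Zhandry's purified-oracle framework and bounds the distinguishing advantage by (roughly) the square root of the probability that the oracle database ever contains a preimage of the challenge, a quantity that is then controlled by the same per-query progress measure that gives the $(P+T^2)/N$ inversion bound. Your proposal has the same overall shape---decision-to-search with a square-root loss, recycling the inversion analysis---but the way you instantiate the square-root step contains a genuine gap.

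The gap is in the one-way-to-hiding step, which you mis-state. In Unruh's O2H lemma, if $q$ is the probability that measuring a \emph{uniformly chosen} query of a $T$-query algorithm yields the reprogrammed point, the distinguishing bound is $2T\sqrt{q}$, not $\sqrt{q}$; the semi-classical variant of Ambainis--Hamburg--Unruh gives $2\sqrt{(T+1)\Pr[\mathrm{Find}]}$, which is again about a $\sqrt{T}$ factor worse than what you need. Plugging your (correct) estimate $q \le O\mparen{(P+T^2)/N}$ into the correct lemma yields $O\mparen{T\sqrt{(P+T^2)/N}}$, polynomially worse than the stated bound. The O2H variants that avoid the multiplicative query factor are the double-sided ones, but their extractor requires oracle access to \emph{both} $H$ and the reprogrammed $H^{(x\to y')}$, and such an extractor cannot be recast as a $(P,T)$ inverter in the $P$-BF-QROM: simulating the second oracle requires already knowing $x$. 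So the issue you dismiss as a ``secondary subtlety'' is in fact the central difficulty, and citing the BF-QROM inversion bound cannot repair it, because the loss occurs inside the O2H step, before that bound is ever invoked. Getting the clean $\sqrt{(P+T^2)/N}$ requires the non-black-box, compressed-oracle route of the cited proof. Separately, your claim that ``the online reprogramming does not disturb the conditioning'' is backwards: after reprogramming, the oracle is precisely \emph{not} distributed as a random oracle conditioned on $f^H=0$, which is the only reason your hybrid differs from the genuine $b=0$ world at all. This part is repairable---run the extractor only in the unreprogrammed worlds, noting that on the $b=1$ side the hidden point $x$ is independent of the adversary's view so the guessing probability there is just $1/N$---but it must be argued rather than asserted.
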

See the proof for Lemma 5.13 in \cite{chung2020tight}.

\section{Games, POVMs and Decomposition of Advice}
\label{sec:notations}

In this section, we will formalize an quantum algorithm's winning probability against a game in terms of POVMs and its corresponding eigenvectors. 

For any game $G$ and algorithm $\As$, let $V^H_r$ be a projection that operates on the register of $\As$. $V^H_r$ project a quantum state into a subspace spanned by basis states $\ket{\ans}\ket{z}$ where $\ver^H(r, \ans) = 1$ and $z$ be any aux input (depending on the size of $\As$'s working register). As an example, for function inversion problem and $r = x$, $V^H_r$ is defined as $\sum_{x': H(x') = H(x), z} \ket{x', z} \bra{x', z}$. 

Then for any non-uniform quantum algorithm $\As = (\{\ket{\sigma_H}\}_H, \{U_\inp\}_{\inp})$, by definition, its probability $\epsilon_\As$ for winning the game $G$ with oracle access to $H$ can be then written as: 
\begin{align*}
    \epsilon_{\As, H} = \frac{1}{|\mc{R}|} \sum_{r \in \mc{R}} \left\Vert V^H_r U^H_{\samp^H(r)} \ket{\sigma_H} \ket{0^L} \right\Vert^2. 
\end{align*}

We define the following projections $P_r^H := \left(U^H_{\samp^H(r)}\right)^\dagger V^H_r U^H_{\samp^H(r)}$. Let $P_H$ be a POVM: 
\ifllncs 
$P_H := \frac{1}{|\mc{R}|} \sum_{r \in \mc{R}} P_r^H$.
\else 
\begin{align*}
    P_H := \frac{1}{|\mc{R}|} \sum_{r \in \mc{R}} P_r^H. 
\end{align*}
\fi 
We can equivalently write $\epsilon_{\As, H}$ in terms of this POVM: $\epsilon_{\As, H} = \langle \sigma_H, 0^L | P^H | \sigma_H, 0^L\rangle$. This is due to:
\begin{align*}
    \epsilon_{\As, H} =&  \frac{1}{|\mc{R}|} \sum_{r \in \mc{R}} \left\Vert V^H_r U^H_{\samp^H(r)} \ket{\sigma_H} \ket{0^L} \right\Vert^2 \\
    =&  \frac{1}{|\mc{R}|} \sum_{r \in \mc{R}} \bra{\sigma_H} \bra{0^L}  P^H_r \ket{\sigma_H} \ket{0^L} \\
    =& \langle \sigma_H, 0^L | P^H | \sigma_H, 0^L\rangle.
\end{align*}

\medskip

Since $P_H$ is a Hermitian matrix and $0 \preceq P_H \preceq \bf{I}$, let $\{\ket{\phi_{H, j}}\}_j$ be the set of eigenbasis for $P_H$ with eigenvalues $\{p_{H, j}\}_j$ between $0$ and $1$. We can decompose $\ket{\sigma_H}\ket{0^L}$ under the eigenbasis: 
\begin{align*}
    \ket{\sigma_H}\ket{0^L} = \sum_{i} \alpha_{H, i} \ket{\phi_{H, i}}. 
\end{align*}
Therefore, $\epsilon_{\As,H}$ can be written in terms of $\alpha_{H,i}$ and $p_{H, i}$: $\epsilon_{\As, H} = \sum_{i} |\alpha_{H, i}|^2 \cdot p_{H, i}$. This is because:  
\begin{align*}
    \epsilon_{\As,H} = \langle \sigma_H, 0^L | P^H | \sigma_H, 0^L\rangle = \sum_{i} |\alpha_{H, i}|^2 \cdot p_{H, i}. 
\end{align*}

With all the above discussions, we conclude our lemma below.
\begin{lemma}
Let $G$ be a game and $\As = (\{\ket{\sigma_H}\}_H, \{U_\inp\}_{\inp})$ be any non-uniform quantum algorithm. Let $P_H$ be the corresponding POVMs for function $H$. Let $\{\ket{\phi_{H, j}}\}_j$ be the set of eigenbasis for $P_H$ with eigenvalues $\{p_{H, j}\}_j$. 

For each $H$, write $\ket{\sigma_H}\ket{0^L}$ as $\sum_{i} \alpha_{H, i} \ket{\phi_{H, i}}$. Let $\epsilon_\As$ be the winning probability of $\As$, when $H$ is drawn uniformly at random. Then
\begin{align*}
    \epsilon_\As = \mathbb{E}_{H}\left[\sum_{i} |\alpha_{H,i}|^2 \cdot p_{H, i}\right] = \frac{1}{N^M} \sum_{H} \sum_i |\alpha_{H, i}|^2 \cdot p_{H,i}. 
\end{align*}
\end{lemma}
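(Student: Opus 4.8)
The plan is to unpack the definition of the winning probability and reduce it to a spectral quantity; almost all the content is a bookkeeping rewrite, and the one genuine ingredient is that $V_r^H$ is an orthogonal projection. Since the statement closely tracks the computations already displayed just above it, I would simply collect those steps and close with the expectation over $H$.

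First I would fix the oracle $H$. As the challenger draws coins $r \in \mc{R}$ uniformly, feeds $\ch = \samp^H(r)$ to $\As$, and accepts exactly when the measured answer lands in the image of $V_r^H$, the per-oracle winning probability is the average over $r$ of the squared norm of the projected state,
\begin{align*}
\epsilon_{\As, H} = \frac{1}{|\mc{R}|}\sum_{r \in \mc{R}} \norm{V_r^H U_{\samp^H(r)}^H \ket{\sigma_H}\ket{0^L}}^2.
\end{align*}
Writing $\ket{\psi_H} := \ket{\sigma_H}\ket{0^L}$ and using that an orthogonal projection satisfies $(V_r^H)^\dagger V_r^H = V_r^H$, each summand equals $\bra{\psi_H}(U_{\samp^H(r)}^H)^\dagger V_r^H U_{\samp^H(r)}^H \ket{\psi_H} = \bra{\psi_H}P_r^H\ket{\psi_H}$, which is precisely the definition of $P_r^H$. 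Averaging over $r$ and using linearity yields $\epsilon_{\As, H} = \bra{\psi_H}P_H\ket{\psi_H}$.

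Next I would diagonalize. Since $P_H$ is Hermitian with $\mathbf{0} \preceq P_H \preceq \mathbf{I}$, it admits an orthonormal eigenbasis $\{\ket{\phi_{H,i}}\}_i$ with eigenvalues $p_{H,i} \in [0,1]$. Expanding $\ket{\psi_H} = \sum_i \alpha_{H,i}\ket{\phi_{H,i}}$ and invoking orthonormality to discard the cross terms, the quadratic form collapses to
\begin{align*}
\epsilon_{\As, H} = \bra{\psi_H}P_H\ket{\psi_H} = \sum_i |\alpha_{H,i}|^2\, p_{H,i}.
\end{align*}
Finally, since $\epsilon_\As$ averages over a uniformly random $H$, taking the expectation gives $\epsilon_\As = \mathbb{E}_H[\epsilon_{\As,H}] = \frac{1}{N^M}\sum_H \sum_i |\alpha_{H,i}|^2 p_{H,i}$, as claimed.

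The main obstacle is, frankly, that there isn't one: the statement is a reformulation rather than a hard result. The only two points needing any care are the idempotence of $V_r^H$ (to convert the squared norm into an expectation value of the POVM element) and the orthonormality of the eigenbasis (so that the cross terms in the quadratic form vanish), both of which are immediate. The substantive difficulty of the paper is deferred downstream, to bounding the moments $\mathbb{E}_H[\p^g]$ of the eigenvalue distribution determined by these $(\alpha_{H,i}, p_{H,i})$.
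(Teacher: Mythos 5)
Your proof is correct and follows the paper's own derivation essentially verbatim: unpack $\epsilon_{\As,H}$ as an average of projected squared norms, use idempotence of $V^H_r$ to rewrite each term as $\bra{\sigma_H,0^L}P^H_r\ket{\sigma_H,0^L}$, diagonalize the Hermitian operator $P_H$, and take the expectation over $H$. You also correctly identify that this lemma is definitional bookkeeping, with the real work deferred to bounding the higher moments of the eigenvalue distribution.
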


\section{Non-Uniform Lower Bounds via Alternating Measurements}
\label{sec:main_theorem}

In this section, we prove the following theorem:
\begin{theorem}\label{thm:bftononuniform}
Let $G$ be any game with $T_\samp, T_\ver$ being the number of queries made by $\samp$ and $\ver$. 
For any $S, T$, let $P = S (T + T_\ver + T_\samp)$. 

If $G$ has security $\nu(P, T)$ in the $P$-BF-QROM, then it has security (maximum winning probability) $\delta(S, T) \leq 2 \cdot \nu(P, T)$ against $(S, T)$ non-uniform quantum algorithms with quantum advice. 

It also has security
\begin{align*}
    \delta(S, T) \leq \min_{\gamma > 0} \left\{\nu(P/\gamma, T) + \gamma \right\}
\end{align*}
against $(S, T)$ non-uniform quantum algorithms with quantum advice. 
\end{theorem}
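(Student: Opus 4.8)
The plan is to bound the success probability $\delta = \mathbb{E}_H[\p]$ not directly but through a high moment $\mathbb{E}_H[\p^g]$, and then to recover $\delta$ by Jensen's inequality ($\Cref{lem:jesen}$). Concretely, I will establish the master inequality
\[
\delta \;=\; \mathbb{E}_H[\p] \;\le\; 2^{S/g}\,\nu\big(g\,Q,\,T\big), \qquad Q := T + T_\samp + T_\ver,
\]
for every positive integer $g$, and then specialize. Taking $g = S$ gives $2^{S/g} = 2$ and $g\,Q = S(T+T_\samp+T_\ver) = P$, hence the first bound $\delta \le 2\nu(P,T)$. Taking $g = \lceil S/\gamma\rceil$ gives $2^{S/g} \le 2^{\gamma}$ and $g\,Q \approx P/\gamma$; since $2^{\gamma} \le 1+\gamma$ for $\gamma\in(0,1]$ (convexity of $2^{\gamma}$) and the claimed bound is vacuous for $\gamma > 1$, this yields $\delta \le \nu(P/\gamma,T) + \gamma$ for every $\gamma$, and the minimum gives the second bound. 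The integrality of $g$ and the $\approx$ above cost only a single extra round, which I absorb into $P$.

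Step 1 is the conceptual core and I expect it to be the main obstacle. For a fixed non-uniform $\As = (\{\ket{\sigma_H}\}_H, \{U_{\inp}\}_{\inp})$, I use the operator $P_H = \frac{1}{|\mathcal R|}\sum_r (U^H_{\samp^H(r)})^\dagger V^H_r U^H_{\samp^H(r)}$ of $\Cref{sec:notations}$, decompose $\ket{\sigma_H}\ket{0^L} = \sum_p \alpha_{H,p}\ket{\phi_{H,p}}$ in its eigenbasis, and let $\p$ take value $p$ with probability $|\alpha_{H,p}|^2$ over a uniformly random $H$. I then argue that the $g$-alternating measurement game of $\Cref{fig:alternate-measure}$ has all-zeros probability exactly $\mathbb{E}_H[\p^g]$. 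The Marriott--Watrous evolution rests on two identities, $\CP^H_0\ket{\mathbbm 1}\ket{\phi_p} = \sqrt p\,\ket{\psi_p}$ and $\IsUniform_0\ket{\psi_p} = \sqrt p\,\ket{\mathbbm 1}\ket{\phi_p}$, which both follow from $\IsUniform_0\,\CP^H_0\,\IsUniform_0 = \ket{\mathbbm 1}\bra{\mathbbm 1}\otimes P_H$ (averaging the controlled projectors over the uniform control reproduces $P_H$). The delicate point to verify is $\langle \psi_p\,|\,\psi_{p'}\rangle = \delta_{p,p'}$, which makes distinct eigencomponents evolve incoherently and reduces to $\langle\phi_p|P_H|\phi_{p'}\rangle = p'\delta_{p,p'}$; granting it, the amplitude of each component after $g$ rounds is $p^{g/2}\alpha_{H,p}$ as in $\Cref{fig:evloves_in_MW}$, so the all-zeros probability is $\mathbb{E}_H[\sum_p |\alpha_{H,p}|^2 p^g] = \mathbb{E}_H[\p^g]$. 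This step replaces the rewinding/tomography of prior work: the moment I need is handed to me directly as a game's winning probability.

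Step 2 removes the advice. Since the all-zeros probability is a linear, positive-semidefinite functional $\Tr[M_H\,\rho]$ (with $M_H \succeq 0$) of the advice density matrix $\rho$, and the oracle-independent, uniformly preparable maximally mixed state satisfies $\mathbb{I}/2^S \succeq 2^{-S}\ket{\sigma_H}\bra{\sigma_H}$, the uniform algorithm that substitutes $\mathbb{I}/2^S$ for $\ket{\sigma_H}$ wins the $g$-alternating game with probability at least $2^{-S}\mathbb{E}_H[\p^g]$. Step 3 then bounds this uniform probability. Writing it as $\prod_{i=1}^g \Pr[b_i=0\mid b_{<i}=0]$ and observing that the all-zeros-through-round-$i$ probability again has the form $\sum_j c_j q_j^{i}$ (with an extra summation index for the mixture), $\Cref{lem:prob_monotone}$ shows the conditional probabilities are non-decreasing, so the product is at most $\varepsilon_g^{\,g}$ with $\varepsilon_g := \Pr[b_g=0\mid b_{<g}=0]$.

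It remains to bound $\varepsilon_g$ by $\nu(P,T)$. For odd $g$ the final round is the controlled game measurement $\CP^H_0$, so I view rounds $1,\dots,g-1$ as a $P$-BF-QROM preprocessing $f$ that returns $0$ iff all of them return $0$ and forwards its post-measurement state $\tau$, while the last round is the online game played by $\Bs$ using $\tau$ ($\Cref{def:pbfqrom}$). The even ($\IsUniform$) rounds make no oracle queries and each odd round costs at most $2(T+T_\samp)+T_\ver$ queries, so $f$ makes at most $g\,Q = P$ queries and $\varepsilon_g \le \nu(P,T)$; for even $g$ I use $\varepsilon_g \le \varepsilon_{g+1}$ to reduce to the odd case at the cost of one extra round. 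Combining Steps 2 and 3 gives $2^{-S}\mathbb{E}_H[\p^g] \le \nu(P,T)^g$, i.e. $\mathbb{E}_H[\p^g] \le \big(2^{S/g}\nu(P,T)\big)^g$, and $\Cref{lem:jesen}$ yields the master inequality, completing the proof.
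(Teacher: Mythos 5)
Your proposal is correct and follows essentially the same route as the paper's proof: the same eigenbasis decomposition of the advice against the POVM $P_H$, the same alternating-measurement game analyzed via the Marriott--Watrous evolution (your $\ket{\psi_p}$ is the paper's $\ket{w^0_i}$, and your orthogonality check is exactly what makes the paper's induction sound), the same $2^{-S}$ maximally-mixed advice replacement, the same monotonicity of conditional probabilities, and the same reduction in which the first $g-1$ alternating rounds serve as the $P$-BF-QROM preprocessing $f$ and the final round is the online stage, with identical query accounting and the even-$g$ case handled by $\varepsilon_g \le \varepsilon_{g+1}$. The only difference is presentational: you package everything into one master inequality $\delta \le 2^{S/g}\,\nu(gQ,T)$ before specializing $g=S$ and $g \approx S/\gamma$ (via $2^\gamma \le 1+\gamma$, equivalent to the paper's $2 \le (1+\gamma)^{1/\gamma}$), whereas the paper proves the two parts separately --- and you are, if anything, slightly more careful than the paper about the integrality of $S/\gamma$.
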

As a special case of the second result, when $G$ is a decision game and is $\nu(P, T) = \frac{1}{2} + \nu'(P, T)$ secure in the $P$-BF-QROM, then it has security
\begin{align*}
    1/2 + \min_{\gamma > 0} \left\{\nu'(P/\gamma, T) + \gamma \right\}
\end{align*}
against $(S, T)$ non-uniform quantum algorithms with quantum advice.

\medskip

The section is organized as follows: in the first subsection, we introduce a new multi-instance game, via the so-called alternating measurement games, the idea of alternating measurement was used in witness preserving amplification of QMA (\cite{marriott2005quantum}); in the next subsection, we elaborate on behaviors of any non-uniform quantum algorithm in the alternating measurement game; then we show that upper bounds (of success probabilities) in the bit-fixing model give rise to the probability of {\bf uniform} quantum algorithms in the alternating measurement game; finally in the last subsection, we give the proof for our main theorem.

\subsection{Multi-Instance via Alternating Measurements}

For a game $G$ and a quantum non-uniform algorithm $\As = (\{\ket{\sigma_H}\}_H, \{U_\inp\}_{\inp})$, we start by recalling the following notations as in \Cref{sec:notations}: $P^H_r, P_H, \{\ket{\phi_{H, j}}\}_j$ and $\{p_{H, i}\}_{j}$. 
Let $\rA$ be the register that $\As$ operates on. 
The following controlled projection (as defined in \cite{zhandry2020schrodinger}) will be used heavily in this section. 

\begin{definition}[Controlled Projection]
The controlled projection for a game $G$ and a quantum algorithm $\As$ is the following: for every $H$, the controlled projection is the measurement $\CP^H = (\CP_0^H, \CP_1^H)$:
\begin{align*}
    \CP_0^H = \sum_{r \in \mc{R}} {\ket r} {\bra r}_\rR \otimes P^H_r \quad\text{ and }\quad \CP_1^H = \sum_{r \in \mc{R}} \ket r {\bra r}_{\rR} \otimes (\identity_\rA - P^H_r).
\end{align*}
\end{definition}
Here $\CP^H$ operates on registers $\mc{RA}$ where $\mc{R}$ are registers storing random coins and $\mc{A}$ are $\As$'s working registers. 

Similarly, we define the following projection $\IsUniform = (\oneproject \otimes \identity_\rA, (\identity_\rR - \oneproject) \otimes \identity_\rA)$ over the same register as $\CP^H$ where $\ket{\mathbbm{1}_\mc{R}}$ is a uniform superposition over $\mc{R}$: i.e., $\ket{\mathbbm{1}_\mc{R}} = \frac{1}{|\mc{R}|} \sum_r \ket{r}$. We denote $\oneproject \otimes \identity_\rA$ by $\IsUniform^0$ and $(\identity-\oneproject \otimes \identity_\rA)$ by $\IsUniform^1$.

\medskip

Now, We are ready to describe the new game via alternating measurements: 
\begin{definition}[Multi-Instances via Alternating Measurments]\label{def:misviaam}
Fix a game $G$ and an integer $k \geq 1$. 
A uniformly random $H$ is sampled at the beginning. For a (potentially non-uniform) quantum algorithm $\As$, the multi-instance game $G^{\otimes k}$ is defined and executed as follows:
\begin{itemize}
    \item A challenger $\Cs$ initializes a new register $\ket{\mathbbm{1}_{\mc{R}}}_\rR$ and controls $\As$'s register $\rA$.  
    \item It repeats the following procedures $k$ times, for $i = 1, \cdots, k$: 
    \begin{itemize}
        \item If the current stage $i$ is odd, $\Cs$ applies $\CP^H$ on $\rR \rA$ and obtains a measurement outcome $b_i$.
        \item If the current stage $i$ is even, $\Cs$ applies $\IsUniform$ on $\rR \rA$ and obtains a measurement outcome $b_i$.
    \end{itemize}
    \item The game is won if and only if $b_1 = b_2 = \cdots = b_{k} = 0$.
\end{itemize}
\end{definition}

With this alternating measurement game, we describe the following theorem that relates the winning probability of a (non-uniform) $\As$ in the game $G$ and that of $\As$ in the corresponding alternating measurement game $G^{\otimes k}$. 
\begin{theorem}\label{thm:misprob}
Let $G$ be a game and $\As = (\{\ket{\sigma_H}\}_H, \{U_\inp\}_{\inp})$ be any non-uniform quantum algorithm for $G$. Let $P_H$ be the corresponding POVMs for function $H$. Let $\{\ket{\phi_{H, j}}\}_j$ be the set of eigenbasis for $P_H$ with eigenvalues $\{p_{H, j}\}_j$. 

For each $H$, write $\ket{\sigma_H}\ket{0^L}$ as $\sum_{i} \alpha_{H, i} \ket{\phi_{H, i}}$. Let $\epsilon^{\otimes k}_\As$ be the winning probability of $\As$ in the alternating measurement game $G^{\otimes k}$, when $H$ is drawn uniformly at random. Then
\begin{align*}
    \epsilon^{\otimes k}_\As = \frac{1}{N^M} \sum_H \sum_{i} |\alpha_{H,i}|^2 \cdot p^k_{H, i}. 
\end{align*}
\end{theorem}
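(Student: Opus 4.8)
The plan is to track how the unnormalized state in the alternating measurement game evolves under the successive applications of $\CP^H$ and $\IsUniform$, diagonalizing everything in the eigenbasis $\{\ket{\phi_{H,i}}\}_i$ of $P_H$. The winning probability $\epsilon^{\otimes k}_\As$ is the squared norm of the state obtained after $k$ rounds of projecting onto the $0$-outcome of the appropriate measurement, averaged over $H$. Since the starting state factors as $\ket{\mathbbm{1}_\mc{R}}_\rR \otimes \ket{\sigma_H,0^L}_\rA = \sum_i \alpha_{H,i}\,\ket{\mathbbm{1}_\mc{R}}_\rR\ket{\phi_{H,i}}_\rA$, and both projections are linear, it suffices to understand the single-eigenvector dynamics and then recombine, exploiting orthogonality to avoid cross terms.

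\textbf{Single-eigenvector analysis.} First I would fix an eigenvector $\ket{\phi_{H,i}}$ with eigenvalue $p := p_{H,i}$ and analyze the pair of projectors restricted to the two-dimensional subspace they generate, exactly as in Marriott--Watrous. The key computation is that $\IsUniform^0 = \oneproject \otimes \identity_\rA$ followed by $\CP^H_0$ (and vice versa) acts on $\ket{\mathbbm{1}_\mc{R}}\ket{\phi_{H,i}}$ as a scalar $p$ per full odd-even cycle. Concretely, I claim that starting from $\ket{\mathbbm{1}_\mc{R}}\ket{\phi_{H,i}}$, applying $\CP^H_0$ yields a vector of norm $p^{1/2}$ (this is immediate since $\bra{\phi_{H,i}} P_H \ket{\phi_{H,i}} = p$, and the $\IsUniform$-projected input is precisely the $\ket{\mathbbm{1}_\mc{R}}$-component), and then applying $\IsUniform^0$ brings us back to a multiple of $\ket{\mathbbm{1}_\mc{R}}\ket{\phi_{H,i}}$, again contributing a factor $p^{1/2}$. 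The mechanism is that $\IsUniform^0$ projects the $\rR$-register onto $\ket{\mathbbm{1}_\mc{R}}$, and the overlap of $\CP^H_0\ket{\mathbbm{1}_\mc{R}}\ket{\phi_{H,i}}$ with $\ket{\mathbbm{1}_\mc{R}}$ in the $\rR$-register is governed by the same eigenvalue $p$. After $k$ rounds of alternating $0$-projections, the surviving amplitude on $\ket{\phi_{H,i}}$ is $p^{k/2}$, so the squared norm of that component is $p^k$.

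\textbf{Recombining and averaging.} The second step is to argue that the distinct eigenvectors do not interfere. Since all the projectors $\CP^H$ and $\IsUniform$ commute with the eigenbasis structure in the relevant way — $\CP^H_0 = \sum_r \ket r \bra r \otimes P^H_r$ sums to $P_H$ after $\IsUniform$-projection, and $P_H$ is block-diagonal in $\{\ket{\phi_{H,i}}\}_i$ — the evolution preserves orthogonality across the $i$-index throughout the game. Hence the final unnormalized state is $\sum_i \alpha_{H,i}\, p_{H,i}^{k/2}\,\ket{\chi_{H,i}}$ for mutually orthogonal unit vectors $\ket{\chi_{H,i}}$, and its squared norm is $\sum_i |\alpha_{H,i}|^2 p_{H,i}^k$. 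Averaging over the uniformly random $H$ (i.e. dividing by $N^M$, the number of functions $[N]\to[M]$) gives exactly the claimed $\epsilon^{\otimes k}_\As = \frac{1}{N^M}\sum_H \sum_i |\alpha_{H,i}|^2 p_{H,i}^k$. One should separately handle the parity of $k$: if $k$ is even the last measurement is $\IsUniform$, contributing the final half-power so the exponent lands exactly on $p^k$; if $k$ is odd the last measurement is $\CP^H_0$ and one checks the bookkeeping gives the same $p^k$.

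\textbf{The main obstacle} I expect is rigorously justifying the per-cycle scalar factor and the no-interference claim \emph{simultaneously} across all eigenvectors, rather than just for a single one. Marriott--Watrous is usually stated for the two-dimensional invariant subspace of a fixed eigenvector pair, but here the two projectors $\CP^H_0$ and $\IsUniform^0$ live on the larger $\rR\rA$ space, and I must verify that the alternating dynamics keeps each $\ket{\phi_{H,i}}$-sector invariant and decoupled. The clean way to do this is to observe that $\IsUniform^0 \CP^H_0 \IsUniform^0 = \oneproject \otimes P_H$ (up to the $\ket{\mathbbm{1}_\mc{R}}$-sandwiching), which is manifestly block-diagonal in the eigenbasis with blocks $p_{H,i}$; iterating this composite operator then transparently produces the $p_{H,i}^k$ scaling without cross terms. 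Making this algebraic identity precise — and tracking how it interleaves with the intermediate measurement outcomes so that conditioning on all-zero outcomes corresponds exactly to applying the composite $0$-projector $k$ times — is the crux of the argument.
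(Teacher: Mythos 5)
Your proposal is correct and takes essentially the same route as the paper: the paper's proof (via \Cref{lem:amprogress}, following Marriott--Watrous and Claim 6.3 of \cite{zhandry2020schrodinger}) performs exactly this single-eigenvector tracking, defining $\ket{v_i^0}=\ket{\mathbbm{1}_{\mc{R}}}\ket{\phi_i}$ and $\ket{w_i^0}\propto\sum_r\ket{r}P_r\ket{\phi_i}$ and showing by induction on rounds that the unnormalized surviving state is $\sum_i\alpha_i p_i^{k/2}\ket{v_i^0}$ (even $k$) or $\sum_i\alpha_i p_i^{k/2}\ket{w_i^0}$ (odd $k$), with orthogonality across $i$ giving squared norm $\sum_i|\alpha_i|^2p_i^k$ before averaging over $H$. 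Your composite-operator identity $\IsUniform^0\,\CP^H_0\,\IsUniform^0=\oneproject\otimes P_H$ is a valid and slightly slicker way to package the same induction, and your parity bookkeeping matches the paper's.
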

\ifllncs 
We leave the explanation of the theorem to the appendix (the proof of \Cref{lem:amprogress}) since it is similar to the analysis of QMA amplification \cite{marriott2005quantum} and quantum traitor tracing \cite{zhandry2020schrodinger}. We do not considered the proof as our main contribution. Nonetheless, we believe that the proof inspires our analysis for $\epsilon_\As^{\otimes k}$, which together with the new multi-instance reduction is considered the main contribution of this work. 
\else 
We leave the explanation of the theorem to the next section (the proof of \Cref{lem:amprogress}) since it is similar to the analysis of QMA amplification \cite{marriott2005quantum} and quantum traitor tracing \cite{zhandry2020schrodinger}. We do not considered the proof as our main contribution. Nonetheless, we believe that the proof inspires our analysis for $\epsilon_\As^{\otimes k}$, which together with the new multi-instance reduction is considered the main contribution of this work. 
\fi 

By \Cref{lem:jesen}, we can easily conclude that any upper bound on $\As$'s success probability in $G^{\otimes k}$ yields an upper bound on its winning probability in $G$. The proof of the following lemma easily follows from \Cref{lem:jesen}. 
\begin{lemma}\label{lem:non-uniform_to_mis}
Fix a game $G$ and an integer $k \geq 1$. Let $\epsilon_\As$ be the success probability of (uniform or non-uniform) $\As$ in $G$ and $\epsilon^{\otimes k}_\As$ be that of $\As$ in the alternating measurement game $G^{\otimes k}$. Then $\epsilon_\As \leq \left(\epsilon_\As^{\otimes k}\right)^{1/k}$. 
\end{lemma}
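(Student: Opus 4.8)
The plan is to combine the two ``moment'' characterizations already established in the excerpt with the elementary Jensen-type inequality of \Cref{lem:jesen}; the statement is then a purely arithmetic consequence and requires essentially no new quantum reasoning. First I would recall that the lemma concluding \Cref{sec:notations} expresses the winning probability of $\As$ in $G$ as
\[
\epsilon_\As = \frac{1}{N^M}\sum_H \sum_i |\alpha_{H,i}|^2\, p_{H,i},
\]
while \Cref{thm:misprob} expresses the winning probability in the alternating measurement game as the same weighted sum with the eigenvalues raised to the $k$-th power,
\[
\epsilon^{\otimes k}_\As = \frac{1}{N^M}\sum_H \sum_i |\alpha_{H,i}|^2\, p^k_{H,i}.
\]
These two identities are the only facts about the game and the algorithm that enter the argument; everything that follows is a statement about nonnegative numbers.

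Second, I would observe that the coefficients $c_{H,i} := |\alpha_{H,i}|^2 / N^M$ form a genuine probability distribution over the index set $\{(H,i)\}$. For each fixed $H$ the vector $\ket{\sigma_H}\ket{0^L}$ has unit norm, so $\sum_i |\alpha_{H,i}|^2 = 1$; summing over all $N^M$ oracles and dividing by $N^M$ gives $\sum_{H,i} c_{H,i} = 1$. Relabelling the double index $(H,i)$ as a single index then places us exactly in the hypothesis of \Cref{lem:jesen}, with the $p_{H,i}\geq 0$ playing the role of the values and the $c_{H,i}$ the role of the distribution.

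Third, I would apply \Cref{lem:jesen} directly. Setting $\delta := (\epsilon^{\otimes k}_\As)^{1/k}$, the second identity reads $\sum_{H,i} c_{H,i}\, p^k_{H,i} = \delta^k$, and the conclusion of the lemma yields $\epsilon_\As = \sum_{H,i} c_{H,i}\, p_{H,i} \leq \delta = (\epsilon^{\otimes k}_\As)^{1/k}$, which is precisely the claim.

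Since the argument is mechanical, I do not anticipate a genuine obstacle; the only two points deserving a word of care are (i) verifying that the weights sum to $1$, which rests on the purity/unit-norm of the advice state (recorded in the remark after \Cref{def:STalgo}) and is what makes the reduction clean, and (ii) the degenerate case $\epsilon_\As = 0$, in which the hypothesis $\sum_{H,i} c_{H,i}\, p_{H,i} > 0$ of \Cref{lem:jesen} fails. In that case the inequality holds trivially because $(\epsilon^{\otimes k}_\As)^{1/k} \geq 0$, so it can be dispatched in a single line before invoking Jensen.
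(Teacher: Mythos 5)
Your proposal is correct and is essentially the paper's own argument: the paper proves this lemma by directly invoking \Cref{lem:jesen} on the two moment identities from \Cref{sec:notations} and \Cref{thm:misprob}, exactly as you do. Your two points of care (checking that the weights $|\alpha_{H,i}|^2/N^M$ form a distribution, and dispatching the degenerate case $\epsilon_\As = 0$ where the hypothesis of \Cref{lem:jesen} fails) are details the paper leaves implicit, and handling them explicitly only strengthens the write-up.
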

Thereby, to bound $\epsilon_\As$, it is enough to bound $\epsilon_{\As}^{\otimes k}$ for some appropriate positive integer $k$.

\ifllncs 
\else 
\subsection{\texorpdfstring{Characterization of Alternating Measurements and Proof of \Cref{thm:misprob}}{Characterization of Alternating Measurements}}

Fixing a function $H$, the intial internal register $\rA$ of $\As$ is $\ket{\sigma_H}\ket{0^L} = \sum_{i} \alpha_{H, i} \ket{\phi_{H, i}}$. Let us define the following states $\ket{v_{H, i}^0}, \ket{v_{H, i}^1}, \ket{w_{H, i}^0}, \ket{w_{H, i}^1}$ (for convenience, we ignore $H$ in the subscripts in the analysis below). We will also ignore $H$ for other notations like $P^H_r, \ket{\phi_{H, i}}, p_{H, i}$ as our analysis does not depend on $H$ and the final conclusion follows by taking expectation over uniformly random functions $H$. Instead, we are using $P_r := P^H_{r}, \ket{\phi_i}:=\ket{\phi_{H, i}}, p_i := p_{H, i}$ in the analysis. 

\begin{enumerate}
    \item $\ket{w_i^0} = \frac{1}{\sqrt{p_{i} |\mc{R}|}} \sum_r \ket{r} P_r \ket{\phi_{i}}$. 
    
    It is easy to verify that it has norm $1$:
    \begin{align*}
        \langle w_i^0 | w_i^0\rangle = \frac{1}{p_i |\mc{R}|} \sum_r \langle \phi_i | P_r | \phi_i \rangle 
         = \frac{1}{p_i |\mc{R}|}   \langle \phi_i | (\sum_r P_r) | \phi_i \rangle 
         =\frac{p_i |\mc{R}| }{p_i |\mc{R}|} \,= 1.
    \end{align*}
    $\CP^H_0 \ket{w_i^0} = \ket{w_i^0}$ and $\CP^H_1 \ket{w_i^0} = 0$. 
    
    After seeing the definition of $\ket{v_i^0}$ and $\ket{v_i^1}$ below, we also observe that $\ket{w_i^0} = \sqrt{p_i} \ket{v_i^0} + \sqrt{1-p_i} \ket{v_i^1}$. 
    
    \item $\ket{w_i^1} = \frac{1}{\sqrt{(1-p_{i}) |\mc{R}|}} \sum_r \ket{r} (\identity_{\rA}-P_r) \ket{\phi_{i}}$. 
    
    Similarly, it has norm $1$, $\CP^H_1 \ket{w_i^1} = \ket{w_i^1}$ and $\CP^H_0 \ket{w_i^1} = 0$.

    \item $\ket{v^0_i} = \ket{\mathbbm{1}}_{\mc{R}} \ket{\phi_{i}} = \sqrt{p_i} \ket{w_i^0} + \sqrt{1 - p_i} \ket{w^1_i}$. 
    
    By the description of the game $G^{\otimes k}$ (\Cref{def:misviaam}), the overall register $\rR \rA$ at the beginning of the game can be written as $\sum_{i} \alpha_{i} \ket{v^0_i}$ (which we will prove below). 
    
    The state has norm $1$, $\IsUniform^0 \ket{v^0_i} = \ket{v^0_i}$ and $\IsUniform^1 \ket{v^0_i} = 0$.
    
    \item $\ket{v_i^1} = \sqrt{1 - p_i} \ket{w_i^0} - \sqrt{p_i} \ket{w^1_i}$. 
    
    We will not use the property of $\ket{v_i^1}$ in the proof and we thus omit all the details here. 
\end{enumerate}

\begin{lemma}\label{lem:amprogress}
For any fixed $H$, for any non-negative integer $k$, the leftover state over $\rR \rA$ conditioned on all outcomes in the first $k$ rounds being $0$s is in proportion to: 
\begin{align*}
    \sum_i \alpha_i p_i^{k/2} \begin{cases}
        \ket{v^0_i}  \text{ if  $k$ is even}, \\
        \ket{w^0_i}  \text{ if  $k$ is odd}. 
    \end{cases}
\end{align*}

The probability of all outcomes being $0$s is $\sum_i |\alpha_i|^2 p_i^k$. 
\end{lemma}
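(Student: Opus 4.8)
The plan is to prove both claims simultaneously by induction on $k$, exploiting the fact that the entire dynamics decouples into independent two-dimensional rotations indexed by the eigenbasis $\{\ket{\phi_i}\}_i$. The crucial structural observation is that for each $i$ the pair $\{\ket{w_i^0},\ket{w_i^1}\}$ and the pair $\{\ket{v_i^0},\ket{v_i^1}\}$ are two orthonormal bases of one common two-dimensional subspace $\mc{S}_i \subseteq \rR\rA$, related by the rotation
\begin{align*}
\ket{v_i^0} &= \sqrt{p_i}\,\ket{w_i^0} + \sqrt{1-p_i}\,\ket{w_i^1}, & \ket{w_i^0} &= \sqrt{p_i}\,\ket{v_i^0} + \sqrt{1-p_i}\,\ket{v_i^1},
\end{align*}
which are exactly the identities recorded just above the lemma. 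Moreover the subspaces $\mc{S}_i$ are mutually orthogonal: the $\ket{v_i^0}=\ket{\mathbbm{1}}_\mc{R}\ket{\phi_i}$ are orthonormal because the $\ket{\phi_i}$ are, and I will check below that the $\ket{w_i^0}$ are orthonormal as well. Within each $\mc{S}_i$, the controlled projection $\CP^H$ has $\ket{w_i^0}$ and $\ket{w_i^1}$ as its outcome-$0$ and outcome-$1$ eigenstates, while $\IsUniform$ has $\ket{v_i^0}$ and $\ket{v_i^1}$ as its outcome-$0$ and outcome-$1$ eigenstates; hence both measurements preserve the block decomposition and the problem reduces to tracking amplitudes block by block.

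For the base case $k=0$ the claimed state is $\sum_i \alpha_i \ket{v_i^0}$, which is immediate since the challenger prepares $\ket{\mathbbm{1}}_\mc{R}$ while the algorithm's register is $\ket{\sigma_H,0^L}=\sum_i \alpha_i\ket{\phi_i}$, so the joint state is $\ket{\mathbbm{1}}_\mc{R}\sum_i\alpha_i\ket{\phi_i}=\sum_i\alpha_i\ket{v_i^0}$. For the inductive step, suppose after $k$ even rounds the unnormalized leftover state is $\sum_i \alpha_i p_i^{k/2}\ket{v_i^0}$. Round $k+1$ is then odd, so the challenger applies $\CP^H$ and post-selects on outcome $0$, i.e.\ applies $\CP_0^H$. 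Expanding $\ket{v_i^0}=\sqrt{p_i}\ket{w_i^0}+\sqrt{1-p_i}\ket{w_i^1}$ and using $\CP_0^H\ket{w_i^0}=\ket{w_i^0}$ and $\CP_0^H\ket{w_i^1}=0$, the state becomes $\sum_i\alpha_i p_i^{k/2}\sqrt{p_i}\ket{w_i^0}=\sum_i\alpha_i p_i^{(k+1)/2}\ket{w_i^0}$, exactly the odd-case form for $k+1$. Symmetrically, if after $k$ odd rounds the state is $\sum_i\alpha_i p_i^{k/2}\ket{w_i^0}$, then round $k+1$ is even; applying $\IsUniform^0$ together with $\ket{w_i^0}=\sqrt{p_i}\ket{v_i^0}+\sqrt{1-p_i}\ket{v_i^1}$, $\IsUniform^0\ket{v_i^0}=\ket{v_i^0}$, and $\IsUniform^0\ket{v_i^1}=0$ yields $\sum_i\alpha_i p_i^{(k+1)/2}\ket{v_i^0}$, the even-case form. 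This closes the induction on the leftover state.

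Finally, the probability that $b_1=\cdots=b_k=0$ equals the squared norm of the unnormalized leftover state, since post-selecting on outcome $0$ at each round multiplies the running state by the corresponding projector, and the squared norm of the survivor is precisely the cumulative success probability. By orthonormality of $\{\ket{v_i^0}\}_i$ (or of $\{\ket{w_i^0}\}_i$) this squared norm is $\sum_i |\alpha_i|^2 p_i^k$, as claimed. The one remaining calculation is the orthonormality of the $\ket{w_i^0}$: writing $\langle w_i^0|w_j^0\rangle = \frac{1}{\sqrt{p_i p_j}\,|\mc{R}|}\sum_r \langle \phi_i| P_r^\dagger P_r|\phi_j\rangle$, then using that each $P_r$ is a projection ($P_r^\dagger P_r = P_r$) and that $\sum_r P_r = |\mc{R}|\,P_H$ with $P_H\ket{\phi_j}=p_j\ket{\phi_j}$, this collapses to $\delta_{ij}$. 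I expect this orthogonality bookkeeping — confirming that the blocks $\mc{S}_i$ are genuinely orthogonal and jointly invariant under both measurements, so the evolution truly decouples — to be the only subtle point; once it is in place the induction is mechanical. Reinstating the $H$-subscripts and taking the expectation over a uniformly random $H$ at the very end then reproduces the formula of \Cref{thm:misprob}.
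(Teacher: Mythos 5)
Your proof is correct and follows essentially the same route as the paper's: the same induction on $k$ using the rotation identities between $\{\ket{w_i^0},\ket{w_i^1}\}$ and $\{\ket{v_i^0},\ket{v_i^1}\}$, with the base case $\sum_i \alpha_i \ket{v_i^0}$ and the probability read off as the squared norm of the unnormalized surviving state. The only substantive difference is that you explicitly verify the orthonormality of the $\ket{w_i^0}$'s (via $P_r^\dagger P_r = P_r$ and $\sum_r P_r = |\mc{R}| P_H$) to justify that norm computation, a bookkeeping step the paper leaves implicit.
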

The proof follows the proof of Claim 6.3 in \cite{zhandry2020schrodinger}. We reprove this claim for completeness.
\begin{proof}
This lemma holds for $k=0$, when no measurement is applied. This is the state is
\begin{align*}
    \sum_i \alpha_i \ket{v^0_i} = \sum_i \alpha_i \ket{\mathbbm{1}_{\mc{R}}}_\rR \ket{\phi_i}_\rA = \ket{\mathbbm{1}_{\mc{R}}}_\rR \ket{\sigma_H, 0^L}_\rA.
\end{align*}
We now prove by induction. Assume the lemma holds up to some even $k$. We prove it holds for odd $k+1$. 

The leftover state after the first $k$ rounds is $c \sum_i \alpha_i p_i^{k/2} \ket{v_i^0}$ for some normalization $c$. Note that $\ket{v^0_i} = \sqrt{p_i} \ket{w_i^0} + \sqrt{1 - p_i} \ket{w^1_i}$. The state can be rewritten as 
\begin{align*}
    c \sum_i \alpha_i p_i^{k/2} \left(\sqrt{p_i} \ket{w_i^0} + \sqrt{1 - p_i} \ket{w^1_i}\right).
\end{align*}

In the $(k+1)$-th round, the challenger measures the state under $\CP^H$. Note that $\CP^H_0 \ket{w_i^0} = \ket{w_i^0}$ and $\CP^H_0 \ket{w_i^1} = 0$. Thus, conditioned on the $(k+1)$-th outcome being $0$, the state is in proportion to 
$\sum_i \alpha_i p_i^{(k+1)/2}  \ket{w_i^0}$.
We complete the induction for $k$ being even. 

For odd $k$, the analysis is almost identical, by observing $\ket{w_i^0} = \sqrt{p_i} \ket{v_i^0} + \sqrt{1-p_i} \ket{v_i^1}$ and also following from the fact that $\IsUniform^0 \ket{v^0_i} = \ket{v^0_i}$ and $\IsUniform^1 \ket{v^0_i} = 0$.

\medskip

Finally, the probability can be bounded by looking at the un-normalized states above. 
\end{proof}

\Cref{thm:misprob} follows from summing over all functions $H$ and \Cref{lem:amprogress}.

\fi

\subsection{Advantages of Uniform Algorithms in Alternating Measurement Games}

In this section, we relate success probabilities of {\bf uniform} quantum algorithms in alternating measurements with probabilities in the corresponding bit-fixing model. We will show the following theorem: 
\begin{theorem}\label{thm:mistobf}
Let $G$ be a game in the QROM and $\As$ be any {\bf uniform} quantum algorithm for $G$ making $T$ oracle queries.
Let $\nu(P, T)$ be the security of $G$ in the $P$-BF-QROM. 
For every $k > 0$, every $P \geq k \, (T + T_\samp + T_\ver)$, 
\begin{align*}
    \epsilon_\As^{\otimes k} \leq \nu(P, T)^k. 
\end{align*}

Recall that $T_\samp, T_\ver$ are the numbers of queries made by $\samp$ and $\ver$, respectively. 
\end{theorem}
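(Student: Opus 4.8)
The plan is to bound $\epsilon_\As^{\otimes k}$, whose closed form is provided by \Cref{thm:misprob}, by viewing it as a telescoping product of per-round conditional survival probabilities and then controlling the single hardest round via the bit-fixing security $\nu(P,T)$. Write $m_i := \Pr[b_1 = \cdots = b_i = 0]$ for the probability that the first $i$ alternating measurements all return $0$, with $m_0 := 1$. By \Cref{thm:misprob} we have $m_i = \frac{1}{N^M}\sum_H \sum_j |\alpha_{H,j}|^2\, p_{H,j}^i$, so that $\epsilon_\As^{\otimes k} = m_k = \prod_{i=1}^k (m_i/m_{i-1})$, where each factor $m_i/m_{i-1} = \Pr[b_i = 0 \mid b_{<i} = 0]$ is the conditional probability of surviving round $i$.

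First I would show these conditional probabilities are non-decreasing in $i$. Collecting the pairs $(H,j)$ into a single index set with weights $c_{H,j} := \frac{1}{N^M}|\alpha_{H,j}|^2$ (which sum to $1$, since for each $H$ the vector $\ket{\sigma_H}\ket{0^L}$ is a unit vector and there are $N^M$ functions) and values $p_{H,j}$, the ratio $m_i/m_{i-1}$ is exactly the quantity $S_i$ of consecutive weighted power sums in \Cref{lem:prob_monotone}. That lemma yields $m_1/m_0 \le m_2/m_1 \le \cdots \le m_k/m_{k-1}$, so every factor of the product is at most the last one and $\epsilon_\As^{\otimes k} \le (m_k/m_{k-1})^k$. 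Writing $\varepsilon_k := m_k/m_{k-1}$, it now suffices to prove $\varepsilon_k \le \nu(P,T)$.

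The real work is bounding $\varepsilon_k$ by a reduction to the $P$-BF-QROM, and here the odd and even cases behave differently. When $k$ is odd, the $k$-th measurement is the controlled projection $\CP^H$, whose outcome $0$ is precisely the event that $\As$ wins $G$ on a uniformly superposed challenge register $\ket{\mathbbm{1}_\mc{R}}$. I would package the first $k-1$ rounds into a bit-fixing preprocessing algorithm $f$ that initializes $\ket{\mathbbm{1}_\mc{R}}_\rR$ together with $\As$'s (uniform) starting state, performs those $k-1$ alternating measurements, outputs $b=0$ iff all outcomes are $0$, and hands the post-measurement $\rR\rA$ register to the online stage as the shared state $\tau$ (this is exactly why the shared-$\tau$ strengthening of \Cref{def:pbfqrom} is required); the online algorithm $\Bs$ is then simply $\As$ playing $G$. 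Conditioned on $f^H = 0$, the $k$-th round computes $\As$'s game-winning probability, so $\varepsilon_k$ equals the bit-fixing advantage and is at most $\nu(P,T)$, provided $P$ dominates the queries of $f$; since each odd round costs at most $2(T+T_\samp)+T_\ver$ queries and at most $(k-1)/2$ of them occur in the first $k-1$ rounds, the budget $P \ge k(T+T_\samp+T_\ver)$ suffices. The even case is the subtle point and the main obstacle, because the $k$-th measurement is then $\IsUniform$, which carries no direct game-theoretic meaning. I would sidestep it using the monotonicity already established: $\varepsilon_k \le \varepsilon_{k+1}$, where $\varepsilon_{k+1}$ is a survival probability of an odd round and is bounded by $\nu(P,T)$ through the reduction above (one checks the budget still fits, as the first $k$ rounds make at most $k(T+T_\samp+T_\ver)$ queries). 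Combining both cases gives $\varepsilon_k \le \nu(P,T)$, hence $\epsilon_\As^{\otimes k} \le \nu(P,T)^k$.
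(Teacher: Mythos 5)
Your proposal is correct and follows essentially the same route as the paper: the same telescoping decomposition into conditional survival probabilities, the same use of \Cref{lem:prob_monotone} to reduce to the last factor, the same packaging of the first $k-1$ rounds into the bit-fixing preprocessing $f$ (with the shared post-measurement state $\tau$) for odd $k$, and the same monotonicity trick $\varepsilon_k \le \varepsilon_{k+1}$ to handle even $k$. The query accounting also matches the paper's (\Cref{lem:mis_to_bf} and \Cref{cor:monotone_mis_bound}).
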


\medskip

To bound $\epsilon_\As^{\otimes k}$ for any uniform quantum algorithm, it is sufficient to bound the following conditional probability: $\epsilon_\As^{(t)}$ for $t = 1, \cdots, k$.
\begin{definition}[Conditional Probability for the $t$-th Outcome]\label{def:conditionalprob}
$\epsilon_\As^{(t)}$ is the conditional probability $\Pr[b_t = 0 \,|\, {\bf b}_{<t} = \mb{0}]$, where ${\bf b}_{<t}$ and $b_t$ are the first $t$ outcomes produced by the game $G^{\otimes k}$ with $\As$, when $H$ is picked uniformly at random.  
\end{definition}

Next, we characterize the conditional probability in terms of eigenvalues $\{p_{H,j}\}_j$ and amplitudes under the corresponding eigenbasis $\{\ket{\phi_{H,j}}\}_j$. 
\begin{lemma}
Let $G$ be a game and $\As = (\{U_\inp\}_{\inp})$ be any {\bf uniform} quantum algorithm for $G$. Let $P_H$ be the corresponding POVMs for function $H$. Let $\{\ket{\phi_{H, j}}\}_j$ be the set of eigenbasis for $P_H$ with eigenvalues $\{p_{H, j}\}_j$. 

For each $H$, write the starting state $\ket{0^S}\ket{0^L}$ as $\sum_{i} \alpha_{H, i} \ket{\phi_{H, i}}$. Let $\epsilon^{(t)}_\As$ for $1 \leq t \leq k$ be the conditional probability defined in \Cref{def:conditionalprob}. Then
\begin{align*}
    \epsilon^{(t)}_\As = \frac{ \sum_{H, i} |\alpha_{H,i}|^2 \cdot p^{t}_{H, i}}{ \sum_{H, i} |\alpha_{H,i}|^2 \cdot p^{t-1}_{H, i}}. 
\end{align*}
\end{lemma}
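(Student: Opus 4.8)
The plan is to derive this lemma as an immediate corollary of \Cref{thm:misprob} together with the elementary definition of conditional probability, so no new quantum-mechanical calculation is needed. First I would specialize \Cref{thm:misprob} to the uniform case: a uniform $\As = (\{U_\inp\}_\inp)$ is exactly the non-uniform algorithm with advice fixed to $\ket{\sigma_H} = \ket{0^S}$ for every $H$, so the starting register is $\ket{0^S}\ket{0^L} = \sum_i \alpha_{H,i}\ket{\phi_{H,i}}$ as in the lemma statement. \Cref{thm:misprob} then gives, for every integer $k \ge 0$, that the all-zeros winning probability of the $k$-round alternating measurement game is
\[
    \epsilon^{\otimes k}_\As = \Pr[b_1 = \cdots = b_k = 0] = \frac{1}{N^M}\sum_{H}\sum_i |\alpha_{H,i}|^2\, p^k_{H,i},
\]
where $b_1,\dots,b_k$ are the binary outcomes of $G^{\otimes k}$ with $H$ uniformly random.

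Next I would use the consistency of the prefix distributions. Because $G^{\otimes k}$ applies its binary measurements one after another, the joint distribution of the first $t$ outcomes $(b_1,\dots,b_t)$ does not depend on how many further rounds are appended; in particular $\Pr[b_1 = \cdots = b_t = 0]$ equals the all-zeros winning probability $\epsilon^{\otimes t}_\As$ of the $t$-round game for every $t$. Hence, by the definition of the conditional probability (\Cref{def:conditionalprob}),
\[
    \epsilon^{(t)}_\As = \Pr[b_t = 0 \mid \mathbf{b}_{<t} = \mb{0}] = \frac{\Pr[b_1 = \cdots = b_t = 0]}{\Pr[b_1 = \cdots = b_{t-1} = 0]} = \frac{\epsilon^{\otimes t}_\As}{\epsilon^{\otimes (t-1)}_\As}.
\]

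Finally I would substitute the closed form above into both numerator and denominator; the common normalizing factor $\tfrac{1}{N^M}$ cancels, giving
\[
    \epsilon^{(t)}_\As = \frac{\sum_{H,i} |\alpha_{H,i}|^2\, p^t_{H,i}}{\sum_{H,i} |\alpha_{H,i}|^2\, p^{t-1}_{H,i}},
\]
which is the claim. I would handle $t=1$ as a base case, where $\mathbf{b}_{<1}$ is empty and $\epsilon^{(1)}_\As = \epsilon^{\otimes 1}_\As$; here the formula's denominator is $\sum_{H,i}|\alpha_{H,i}|^2$, which equals the number of oracles (since $\sum_i |\alpha_{H,i}|^2 = 1$ for each unit vector) and so cancels the $\tfrac{1}{N^M}$ exactly, with the convention $p^0_{H,i}=1$. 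There is no substantial obstacle in this lemma; the only two points requiring care are the prefix-distribution consistency just invoked, and the well-definedness of the conditioning, namely that $\epsilon^{\otimes(t-1)}_\As > 0$ so the denominator is nonzero — which holds precisely in the regime where $\epsilon^{(t)}_\As$ is defined, i.e. whenever surviving the first $t-1$ rounds has positive probability.
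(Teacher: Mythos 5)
Your proposal is correct and takes essentially the same route as the paper: the paper's proof likewise writes $\epsilon^{(t)}_\As = \Pr[\mathbf{b}_t = \mathbf{0}]/\Pr[\mathbf{b}_{t-1} = \mathbf{0}]$ by the definition in \Cref{def:conditionalprob} and substitutes the closed form from \Cref{thm:misprob}, with the normalization $1/N^M$ cancelling in the ratio. Your additional remarks (prefix-distribution consistency, the $t=1$ base case, and positivity of the denominator) merely make explicit what the paper leaves implicit.
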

\begin{proof}
By definition, $\epsilon^{(t)}_\As = \Pr[b_t = 0 \,|\, \mb{b}_{< t} = \mb{0}] = \Pr[\mb{b}_t = \mb{0}] / \Pr[\mb{b}_{t-1} = \mb{0}]$. Since $\Pr[\mb{b}_k = \mb{0}] = \sum_{H, i} |\alpha_{H,i}|^2 \cdot p^{k}_{H, i}$, we conclude the lemma. 
\end{proof}

In order to bound $\epsilon_\As^{\otimes k}$, it is enough to bound $\epsilon_\As^{(t)}$ for every $1 \leq t \leq k$ and $\epsilon_\As^{\otimes k} = \prod_{1 \leq t \leq k} \epsilon^{(t)}_{\As}$. Indeed, with \Cref{lem:prob_monotone}, we have the following straightforward corollary. 
\begin{corollary}\label{cor:monotone_mis_bound}
For every game $G$ and {\bf uniform} quantum algorithm $\As$, $\{\epsilon^{(t)}\}_{t \geq 1}$ is monotonically non-decreasing.  Therefore, $\epsilon_\As^{\otimes k} \leq \left(\epsilon_\As^{(k^*)}\right)^k$ for any $k^* \geq k$. 
In particular, $\epsilon_\As^{\otimes k} \leq \left(\epsilon_\As^{k}\right)^k$. 
\end{corollary}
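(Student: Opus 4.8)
The plan is to derive all three claims of \Cref{cor:monotone_mis_bound} directly from two ingredients already in hand: the characterization of the conditional probabilities proved in the lemma immediately preceding the corollary, and the abstract monotonicity statement \Cref{lem:prob_monotone}. For the monotonicity of $\{\epsilon^{(t)}\}_{t \geq 1}$, I would start from the formula
\begin{align*}
    \epsilon^{(t)}_\As = \frac{ \sum_{H, i} |\alpha_{H,i}|^2 \cdot p^{t}_{H, i}}{ \sum_{H, i} |\alpha_{H,i}|^2 \cdot p^{t-1}_{H, i}},
\end{align*}
and observe that this is exactly the ratio $S_t$ of \Cref{lem:prob_monotone} after flattening the double index $(H,i)$ into a single index $j$. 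Concretely, I would set $p_j := p_{H,i}$ and $c_j := |\alpha_{H,i}|^2$, and note that after normalizing by the total mass $\sum_j c_j$ (which equals the number of functions $H$, since each $\ket{\sigma_H}\ket{0^L}$ is a unit vector expanded in the eigenbasis $\{\ket{\phi_{H,i}}\}_i$) the $c_j$ form a genuine distribution. The normalization cancels in the ratio, so $\epsilon^{(t)}_\As$ coincides verbatim with $S_t$, and \Cref{lem:prob_monotone} yields $\epsilon^{(t)}_\As \leq \epsilon^{(t+1)}_\As$ for every $t$.

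For the product bound, I would first record the chain rule for the alternating-measurement game: since $\epsilon^{(t)}_\As = \Pr[b_t = 0 \mid \mb{b}_{<t} = \mb{0}]$ by \Cref{def:conditionalprob}, the all-zeros probability telescopes as
\begin{align*}
    \epsilon_\As^{\otimes k} = \Pr[b_1 = \cdots = b_k = 0] = \prod_{t=1}^{k} \Pr[b_t = 0 \mid \mb{b}_{<t} = \mb{0}] = \prod_{t=1}^{k} \epsilon^{(t)}_\As.
\end{align*}
Because the sequence is non-decreasing, every factor with $t \leq k \leq k^*$ satisfies $\epsilon^{(t)}_\As \leq \epsilon^{(k^*)}_\As$, so bounding each of the $k$ factors by the largest gives $\epsilon_\As^{\otimes k} \leq \big(\epsilon^{(k^*)}_\As\big)^k$. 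Setting $k^* = k$ then produces the special case $\epsilon_\As^{\otimes k} \leq \big(\epsilon^{(k)}_\As\big)^k$.

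I do not expect a genuine obstacle here; the substantive work is entirely carried by \Cref{lem:prob_monotone} and the telescoping identity, and everything else is reindexing and normalization bookkeeping. The one place requiring a word of care is the degenerate case in which a denominator vanishes, i.e.\ $\Pr[\mb{b}_{<t} = \mb{0}] = 0$ for some $t \leq k$; there the conditional probability $\epsilon^{(t)}_\As$ is undefined, but then $\epsilon_\As^{\otimes k} = 0$ and every inequality in the corollary holds trivially. I would therefore dispose of this case at the outset and invoke \Cref{lem:prob_monotone} (whose hypothesis $\sum_j c_j p_j > 0$ is satisfied) only when all relevant probabilities are strictly positive.
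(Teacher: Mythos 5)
Your proposal is correct and follows essentially the same route as the paper: the paper's proof also reduces the monotonicity of $\{\epsilon^{(t)}_\As\}_t$ to \Cref{lem:prob_monotone} by choosing weights proportional to $|\alpha_{H,i}|^2$ (up to a power of $p_{H,i}$ folded in, which is immaterial since normalization cancels in the ratio) and values $p_{H,i}$, and then uses the telescoping identity $\epsilon_\As^{\otimes k} = \prod_{t=1}^k \epsilon^{(t)}_\As$ stated just before the corollary. Your extra care about the degenerate case $\Pr[\mb{b}_{<t}=\mb{0}]=0$ is a reasonable refinement the paper leaves implicit.
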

\begin{proof}
The proof is direct by setting $\{c_i\}$, $\{p_i\}$ in the statement of \Cref{lem:prob_monotone} as $\left\{|\alpha_{H, i}|^2 \cdot p_{H, i}^{t} / N^M\right\}$ and $\{p_{H, i}\}$.
\end{proof}

Finally, we show a connection between $\epsilon_\As^{(k)}$ and $\nu(P, T)$ of the game $G$ in the $P$-BF-QROM for $P \geq k \, (T + T_\samp + T_\ver)$.
\begin{lemma}\label{lem:mis_to_bf}
For every game $G$ and {\bf uniform} quantum $T$-query algorithm $\As$, every {\bf odd} $k > 0$, every $P \geq (k - 1) \, (T+T_\samp+T_\ver)$, 
\begin{align*}
    \epsilon_\As^{k} \leq \nu(P, T).
\end{align*}

As a direct corollary by the monotonicity of $\epsilon_\As^{(t)}$, for {\bf even} $k > 0$, every $P \geq k (T+T_\samp+T_\ver)$, 
\begin{align*}
    \epsilon_\As^{k} \leq \epsilon_\As^{(k+1)} \leq \nu(P, T).
\end{align*}
\end{lemma}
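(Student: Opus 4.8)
The plan is to realize the first $k-1$ rounds of the alternating measurement game as a bit-fixing preprocessing step and the final ($k$-th) round as the online game, so that $\epsilon_\As^{(k)}$ becomes literally the winning probability of an admissible adversary in the $P$-BF-QROM.

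First I would treat odd $k$, where the decisive structural fact comes from \Cref{lem:amprogress}. Since $k$ is odd, $k-1$ is even, so the last of the first $k-1$ measurements is $\IsUniform$; hence, conditioned on $\mathbf{b}_{<k} = \mathbf{0}$, the leftover state on $\mathbf{RA}$ is proportional to $\sum_i \alpha_{H,i} p_{H,i}^{(k-1)/2}\ket{v^0_{H,i}} = \ket{\mathbbm{1}_\mathcal{R}}_\mathbf{R} \otimes \ket{\psi_H}_\mathbf{A}$, where $\ket{\psi_H} \propto \sum_i \alpha_{H,i} p_{H,i}^{(k-1)/2}\ket{\phi_{H,i}}$. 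The point is that this is a \emph{product} state whose $\mathbf{R}$-register has returned to the uniform superposition $\ket{\mathbbm{1}_\mathcal{R}}$. Consequently $\mathbf{R}$ is unentangled and may be discarded, and applying $\CP_0^H$ to the state gives outcome-$0$ probability $\langle \mathbbm{1}_\mathcal{R}, \psi_H | \CP_0^H | \mathbbm{1}_\mathcal{R}, \psi_H\rangle = \mathbb{E}_r[\langle \psi_H | P_r^H | \psi_H\rangle] = \mathbb{E}_r[\|V_r^H U^H_{\samp^H(r)}\ket{\psi_H}\|^2]$, i.e.\ exactly the probability that $\As$, started from $\ket{\psi_H}$ and run on a uniformly random challenge $\samp^H(r)$, is accepted by $\ver^H$.

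Next I would assemble the $P$-BF-QROM adversary $(f, \Bs)$ using the shared-state feature permitted by \Cref{def:pbfqrom}. Let $f$ run rounds $1, \dots, k-1$ of the alternating game coherently, output $0$ iff all these outcomes are $0$, discard $\mathbf{R}$, and hand $\mathbf{A}$ (now in state $\ket{\psi_H}$) to the online stage as $\tau$; the rejection sampling on $f^H = 0$ then reproduces exactly the conditioning on $\mathbf{b}_{<k} = \mathbf{0}$, and the challenger's fresh uniform $r$ plays the role of measuring $\mathbf{R}$. Let $\Bs$ receive $\tau$ and $\ch = \samp^H(r)$, apply $U^H_\ch$, and output the measured $\ans$. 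Then the BF-QROM winning probability $\Pr_{H \mid f^H = 0}[G^H_\Bs = 1]$ equals $\Pr[\mathbf{b}_{\leq k} = \mathbf{0}]/\Pr[\mathbf{b}_{<k} = \mathbf{0}] = \epsilon_\As^{(k)}$, so $\epsilon_\As^{(k)} \leq \nu(P,T)$. For the budget, $\Bs$ makes exactly $T$ queries, while among rounds $1,\dots,k-1$ only the $(k-1)/2$ odd rounds query $H$, each coherent $\CP^H$ costing $2(T+T_\samp+T_\ver)$ queries (compute $\samp^H(r)$, run $U^H$, coherently test $\ver^H$, then uncompute all three), for a total of $(k-1)(T+T_\samp+T_\ver) \leq P$.

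The even-$k$ case then follows immediately: by the monotonicity of $\{\epsilon_\As^{(t)}\}_t$ from \Cref{cor:monotone_mis_bound} (via \Cref{lem:prob_monotone}), $\epsilon_\As^{(k)} \leq \epsilon_\As^{(k+1)}$, and since $k+1$ is odd the previous argument yields $\epsilon_\As^{(k+1)} \leq \nu(P,T)$ as soon as $P \geq ((k+1)-1)(T+T_\samp+T_\ver) = k(T+T_\samp+T_\ver)$. I expect the main obstacle to be the first step: cleanly justifying the product-state decoupling of $\mathbf{R}$ after an even number of rounds, and thereby arguing that the coherent controlled measurement $\CP_0^H$ is faithfully simulated by sampling $r$ classically and running the ordinary online game --- this is precisely where the alternating-measurement bookkeeping is converted into a legitimate BF-QROM experiment with uniform challenges. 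The per-round query accounting giving $2(T+T_\samp+T_\ver)$ is routine by comparison.
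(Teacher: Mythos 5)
Your proposal is correct and takes essentially the same route as the paper's own proof: you build the identical $(f,\Bs)$ reduction in which $f$ coherently runs the first $k-1$ alternating-measurement rounds and outputs $0$ iff all outcomes are $0$, you invoke \Cref{lem:amprogress} for the product structure $\ket{\mathbbm{1}_{\mc{R}}}\otimes\ket{\psi_H}$ of the conditioned leftover state, you conclude $\epsilon_\As^{(k)}\leq\nu(P,T)$ by the same conditional-probability identity, and you handle even $k$ via the same monotonicity corollary. The only differences are presentational (you phrase the winning-probability identity as the $k$-th $\CP^H_0$ measurement on the product state, where the paper sums $c^2\sum_i|\alpha_{H,i}|^2 p_{H,i}^k$ over $H$ explicitly, and your per-round query count $2(T+T_\samp+T_\ver)$ is marginally more generous than the paper's $2(T+T_\samp)+T_\ver$, with both fitting in the budget $P$).
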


Together with \Cref{cor:monotone_mis_bound}, we conclude the main theorem (\Cref{thm:mistobf}) in this subsection.

\begin{proof}[Proof for \Cref{lem:mis_to_bf}]

We only need to prove the lemma for odd $k$ (or even $(k-1)$).

Recall in \Cref{def:pbfqrom}, we need to specify a $P$-query quantum algorithm $f$ and a $T$-query algorithm $\Bs$ to describe an algorithm in the $P$-BF-QROM.
The game is executed if and only if $f^H$ outputs $0$. We define $f, \Bs$ as follows (\Cref{fig:PBF-algo}). 

\begin{figure}[!hbt]
    \centering
    \begin{gamespec}
    \begin{description}
    \item $P$-query quantum algorithm $f$: 
    \begin{itemize}
        \item Initialize $\ket{\mathbbm{1}_{\mc{R}}}_\rR \ket{0^S,0^L}_{\rA}$.
        \item Run the alternating measurement game for $(k-1)$-rounds (\Cref{def:misviaam}).
        Let $\tau$ be the leftover state. 
        \item Let a boolean variable $b = 0$ if and only if all outcomes in $(k-1)$-rounds are $0$s.
        \item Output $b$ and $\tau_{\rR\rA}$. 
    \end{itemize}
    \item $T$-query online algorithm $\Bs$:
    \begin{itemize}
        \item Take $\tau_{\rR\rA}$ as input.
        \item On an online challenge $\ch \gets \samp^H(r)$, it runs $\As$ on internal state $\tau[\rA]$ and outputs the answer produced by $\As$. 
    \end{itemize}
    \end{description}
    \end{gamespec}
    \caption{Turn $\As$ into an algorithm in the $P$-BF-QROM.}
    \label{fig:PBF-algo}
\end{figure}

First, we show that $(f, \Bs)$ is a $(P, T)$ algorithm in the $P$-BR-QROM. It is easy to see that $\Bs$ makes at most $T$ queries as $\As$ makes at most that many queries. The number of queries made by $f$ is equal to that made in the alternating measurement game: 
\begin{itemize}
    \item In odd rounds, one needs to apply $\CP^H$, which takes $2 (T+T_\samp) + T_\ver$ queries; here $2 (T+ T_\samp)$ is for both $U^H_{\samp^H(r)}$ and its inverse $\left(U^H_{\samp^H(r)}\right)^\dagger$ and $T_\ver$ is for applying the projection $V^H_r$ (recall the definitions in \Cref{sec:notations}). 
    \item In even rounds, no queries are needed. 
\end{itemize}
Thus, when $(k-1)$ is even, the total number of queries is at most $(k-1) (T+T_\samp + T_\ver)$. 

\medskip

Next we prove that $(f, \Bs)$ succeeds with probability $\epsilon_\As^{(k)}$. Thus by the definition of $\nu(P, T)$, $\epsilon_\As^{(k)}$ is at most $\nu(P, T)$, concluding the lemma. 

For a fixed hash function $H$ and even $(k-1)$ (or equivalently, odd $k$), conditioned on $f^H$ outputting $0$, the leftover state $\tau_{\rR\rA}$ is (by \Cref{lem:amprogress}):
\begin{align*}
    \tau_{\rR\rA} \propto \sum_i \alpha_{i} p_{i}^{(k-1)/2} \ket{v_i^0}_{\rR\rA} = \ket{\mathbbm{1}_{\mc{R}}}_{\rR} \otimes \sum_{i} \alpha_i p_i^{(k-1)/2} \ket{\phi_{ i}}_{\rA}.
\end{align*}
Here we ignore $H$ for subscripts or superscripts. 

Therefore, $\tau[\rA] = c \sum_{i} \alpha_i p_i^{(k-1)/2} \ket{\phi_{i}}_{\rA}$ where $c$ is a normalization factor such that $1/c^2 = \sum_i |\alpha_i|^2 p_{i}^{k-1}$. The winning probability of $\Bs$ for this fixed $H$ is 
\begin{align*}
    \mathbb{E}_r\left[\left|V^H_r U^H_{\samp^H(r)} \tau[\rA]\right|^2\right]& = c^2 \sum_i |\alpha_i|^2 p_i^{(k-1)} \langle \phi_i | P_H | \phi_i\rangle \\
    &= c^2 \sum_i |\alpha_i|^2 p_i^k,
\end{align*}

By taking the weighted sum of the winning probability for each $H$, the winning probability of $\Bs$ is
\begin{align*}
    \frac{\sum_{H, i} |\alpha_{H, i}|^2 p_{H,i}^{k}}{\sum_{H, i} |\alpha_{H, i}|^2 p_{H,i}^{k-1}} = \epsilon_{\As}^{(k)}. 
\end{align*}
Finally, since $G$ is $\nu(P, T)$ secure in the $P$-BF-QROM, $\epsilon_\As^{(k)} \leq \nu(P, T)$ for every $T$ query quantum algorithm $\As$ and $P \geq (k-1) (T + T_\samp + T_\ver)$.
\end{proof}

Lastly, we prove \Cref{thm:mistobf}. 
\begin{proof}[Proof for \Cref{thm:mistobf}]
    It follows easily by combining \Cref{cor:monotone_mis_bound} and \Cref{lem:mis_to_bf}. 
\end{proof}

\subsection{Proof of Main Theorem}

In this section, we prove our main theorem, \Cref{thm:bftononuniform}.

We start by proving the first part of the theorem.
\begin{proof}[Proof for the first part]  
Let $G$ be any game. For any $S, T$, let $k = S$ and $P = k (T + T_\samp + T_\ver) = S (T + T_\samp+T_\ver)$. $G$ is $\nu(P, T)$ secure in the $P$-BF-QROM. 

By \Cref{thm:mistobf}, for any uniform $T$-query quantum algorithm and $k = S$, its winning probability in the alternating measurement game $G^{\otimes k}$ is at most $\nu(P,T)^k$. 

Therefore, for any $(S, T)$ non-uniform quantum algorithm $\As$, its success probability $\epsilon_\As^{\otimes k}$ is at most $2^S \nu(P, T)^k = (2 \nu(P, T))^S$. This is because for any non-uniform algorithm of winning probability $p$ with advice being an $S$-bit advice $\ket{\sigma_H}$, we can turn it into a uniform quantum algorithm with winning probability at least $2^{-S} p$ as follows (\cite{aaronson2004limitations}):
\begin{description}
    \item As the uniform algorithm does not know $\ket{\sigma_H}$, it samples an $S$-qubit maximally mixed state and runs the non-uniform algorithm on the maximally mixed state.
\end{description}
Since an $S$-qubit maximally mixed state can be written as $1/2^S \ket{\sigma_H}\bra{\sigma_H} + (1-1/2^S) \sigma'$, the uniform algorithm has success probability at least $p/2^S$.

Finally, due to \Cref{lem:non-uniform_to_mis}, any non-uniform algorithm $\As$ is at most $2 \nu(P, T)$ secure in $G$ for $P = S (T + T_\samp + T_\ver)$. 
\end{proof}

The proof for the second part is similar but more laborious. Since we are dealing with decision games, we need to carefully deal with the factor $2^{-S}$ in the previous proof.
\begin{proof}[Proof for the second part]
The theorem trivially holds when $\gamma \geq 1$. We prove it for $\gamma \in (0,1]$.

Let $G$ be a decision game. For any $P, T$, $G$ is $\nu(P, T)$ secure in the $P$-BF-QROM. 

Similarly by \Cref{thm:mistobf}, for any uniform $T$-query quantum algorithm and $k$, its security in the alternating measurement game $G^{\otimes k}$ is at most $\nu(P,T)^k$ where $P = k (T + T_\samp + T_\ver)$. Thus, for any $(S, T)$ non-uniform quantum algorithm $\As$, $\epsilon_\As^{\otimes k}$ is at most $2^{S} \nu(P,T)^k$. 

Since for any $\gamma \in (0,1]$,  $2 \leq (1+\gamma)^{1/\gamma}$. By setting $k = S/\gamma$, we have:
\begin{align*}
    \epsilon_\As^{\otimes k} \leq 2^S \nu(P, T)^k \leq \left( (1 + \gamma) \nu(P, T) \right)^k\leq \left( \frac{1}{2} + \nu'(P, T) + \gamma \right)^k. 
\end{align*}
The last inequality follows the union bound and $\nu(P, T) = 1/2 + \nu'(P,T)$. 

Since the above inequality holds for all $\gamma \in (0,1]$, we conclude the second part of our theorem, following \Cref{lem:non-uniform_to_mis}.

\end{proof}

\ifllncs 
\else

\section{Applications}
\label{sec:application}

We show several applications of our main theorem (\Cref{thm:bftononuniform}) in this section. We first apply our theorem to OWF and PRG games and achieve improved lower bounds for both games. The former ones are publicly verifiable, and the latter games are decision games and thus not publicly verifiable. The applications for both types of games show our main theorem is general and achieve pretty good bounds for almost all kinds of security games in the QROM against quantum/classical advice, as long as we can analyze their security in the $P$-BF-QROM. 

Finally, we show that ``salting defeats preprocessing'' in the QROM, which extends the classical theorem by Coretti et al. \cite{coretti2018random} and improved the result by Guo et al. \cite{chung2020tight}.

\paragraph{OWF.}
Recall the definition of $G_{\sf OWF}$ in \Cref{ex:OWFandPRG}. It is shown that $G_{\sf OWF}$ has the following security in the in the $P$-BF-QROM, $\nu(P, T) = O\left((P + T^2) / \min\{N,M\}\right)$, where $N$ and $M$ are the sizes of the domain and range of the random oracle, by Lemma 1.5 in \cite{chung2020tight}.  

By our main theorem \Cref{thm:bftononuniform}, we have the following theorem. 
\begin{theorem}\label{prop:owf}
$G_{\sf OWF}$ has security $\delta(S, T) = O\left(\frac{S T + T^2}{\min\{N, M\}}\right)$ against $(S, T)$ non-uniform quantum adversaries, even with quantum advice. 
\end{theorem}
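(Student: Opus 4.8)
The statement to prove is \Cref{prop:owf}: that $G_{\sf OWF}$ has security $\delta(S,T) = O\left(\frac{ST+T^2}{\min\{N,M\}}\right)$ against $(S,T)$ non-uniform quantum adversaries with quantum advice. The entire strategy is to instantiate the general reduction \Cref{thm:bftononuniform} with the known bit-fixing bound for OWF inversion, so essentially no new technical work is required beyond plugging in parameters.

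**First I would recall the two relevant ingredients.** The paper already states (Lemma on ``Function Inversion in the $P$-BF-QROM'') that $G_{\sf OWF}$ has $P$-BF-QROM security $\nu(P,T) = O\left(\frac{P+T^2}{\min\{N,M\}}\right)$. And \Cref{thm:bftononuniform} (first part) gives that any game with $P$-BF-QROM security $\nu(P,T)$ has non-uniform quantum-advice security $\delta(S,T) \leq 2\,\nu(P,T)$ when we set $P = S(T + T_\samp + T_\ver)$. Since $G_{\sf OWF}$ is a search (publicly verifiable) game, the first part of \Cref{thm:bftononuniform} is the right one to invoke rather than the decision-game variant.

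**The key computational step is bounding $P$.** For OWF, the sampler $\samp^H(x)$ simply computes $H(x)$, so $T_\samp = 1$, and the verifier $\ver^H(x,x')$ checks whether $H(x') = H(x)$, so $T_\ver = 2$ (or in any case $T_\samp, T_\ver = O(1)$). Hence $P = S(T + T_\samp + T_\ver) = O(ST + S) = O(ST)$ when $T \geq 1$, and more carefully $P = O(ST)$ absorbing the constant query overhead. Substituting into $\nu$ gives
\[
\delta(S,T) \leq 2\,\nu(P,T) = O\left(\frac{P + T^2}{\min\{N,M\}}\right) = O\left(\frac{ST + T^2}{\min\{N,M\}}\right),
\]
which is exactly the claimed bound. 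I would note that $\alpha = \min\{N,M\}$ matches the notation in the earlier statement of \Cref{thm:simple_OWF}.

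**I do not expect a genuine obstacle here, since this is an application of an already-proven general theorem.** The only point requiring a moment's care is confirming that the constant query cost of $\samp$ and $\ver$ gets absorbed cleanly into the $O(ST)$ term rather than introducing a spurious additive $S$ that is not dominated — but since we may assume $T \geq 1$ (a $0$-query inverter has trivially small advantage for OWF, or the $T^2$ term handles the edge case), the additive $S$ is at most $ST$ and causes no issue. Thus the proof is essentially a one-line substitution, and I would present it as such, citing the bit-fixing lemma and \Cref{thm:bftononuniform}.
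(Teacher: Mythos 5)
Your proposal is correct and follows exactly the paper's own proof: invoke the $P$-BF-QROM bound $\nu(P,T) = O\left((P+T^2)/\min\{N,M\}\right)$ for function inversion (Lemma 5.2 in \cite{chung2020tight}) and apply the first part of \Cref{thm:bftononuniform} with $P = S(T+T_\samp+T_\ver) = O(ST)$. The paper treats this as a one-line substitution, and your additional care about the constant $T_\samp, T_\ver$ overhead and the $T \geq 1$ edge case is sound but not something the paper dwells on.
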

The above theorem improves the bound for quantum advice, which was shown to be $\tilde{O}\left( \frac{ST + T^2}{\min\{N,M\}} \right)^{1/3}$ in \cite{chung2020tight}.

\paragraph{PRG.} Recall $G_{\sf PRG}$ is defined in \Cref{ex:OWFandPRG}. $G_{\sf PRG}$ has security $\nu(P, T) = 1/2+O\left(\frac{P + T^2}{N}\right)^{1/2}$ where $N$ is the size of the domain, by Lemma 1.6 in \cite{chung2020tight}.  
Again by our main theorem \Cref{thm:bftononuniform}, we have the following theorem. 
\begin{theorem}\label{prop:prg}

$G_{\sf PRG}$ has security $\delta(S, T) = 1/2 +  O\left(\frac{T^2}{N}\right)^{1/2} + O\left(\frac{S T}{N}\right)^{1/3}$ against $(S, T)$ non-uniform quantum adversaries, even with quantum advice.
\end{theorem}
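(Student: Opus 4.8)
The plan is to instantiate the decision-game part of \Cref{thm:bftononuniform} with the $P$-BF-QROM bound for $G_{\sf PRG}$ and then optimize the free parameter $\gamma$. First I would pin down the query counts of the game from \Cref{ex:OWFandPRG}: the sampler computes $H(x)$ when $b=0$ and otherwise returns the uniform $y$, so $T_\samp = 1$, while the verifier only compares $b$ with $b'$ and makes no oracle call, so $T_\ver = 0$. Hence $T + T_\samp + T_\ver = O(T)$, and the relevant bit-fixing budget is $P = S(T+T_\samp+T_\ver) = O(ST)$.

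Next I would recall that $G_{\sf PRG}$ is $\nu(P,T) = 1/2 + O\bigl((P+T^2)/N\bigr)^{1/2}$ secure in the $P$-BF-QROM, i.e.\ $\nu'(P,T) = O\bigl((P+T^2)/N\bigr)^{1/2}$. Since $G_{\sf PRG}$ is a decision game, the decision-game case of \Cref{thm:bftononuniform} yields
\[
\delta(S,T) \le \tfrac{1}{2} + \min_{\gamma>0}\Bigl\{\nu'(P/\gamma,T) + \gamma\Bigr\}
= \tfrac{1}{2} + \min_{\gamma>0}\Bigl\{O\Bigl(\tfrac{ST/\gamma + T^2}{N}\Bigr)^{1/2} + \gamma\Bigr\}.
\]
The key simplification is to split the square root via $\sqrt{a+b}\le\sqrt a+\sqrt b$, isolating the $T^2/N$ contribution so that it is independent of $\gamma$:
\[
O\Bigl(\tfrac{ST/\gamma + T^2}{N}\Bigr)^{1/2} \le O\Bigl(\tfrac{ST}{\gamma N}\Bigr)^{1/2} + O\Bigl(\tfrac{T^2}{N}\Bigr)^{1/2}.
\]
This is precisely the ``simple observation'' advertised in \Cref{fig:PRG_result}: it keeps the exponent on $T^2/N$ at $1/2$ rather than degrading it to $1/3$ through the $\gamma$-optimization.

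It then remains to minimize $g(\gamma) := c\,(ST/N)^{1/2}\,\gamma^{-1/2} + \gamma$ over $\gamma>0$. Setting $g'(\gamma)=0$ gives $\gamma^{3/2} = \tfrac{c}{2}(ST/N)^{1/2}$, i.e.\ $\gamma^\ast = \Theta\bigl((ST/N)^{1/3}\bigr)$; substituting back makes both terms of order $(ST/N)^{1/3}$, so $\min_{\gamma>0} g(\gamma) = O\bigl((ST/N)^{1/3}\bigr)$. Pulling out the $\gamma$-free term then gives $\delta(S,T) \le 1/2 + O(T^2/N)^{1/2} + O(ST/N)^{1/3}$, as claimed.

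I do not expect a genuine obstacle: once the main theorem is available the argument is a one-line substitution plus an elementary optimization. The only point deserving care is that the decision-game bound is established only for $\gamma \in (0,1]$, so the chosen $\gamma^\ast$ should lie in that range; but if $\gamma^\ast > 1$ then $ST/N = \Omega(1)$, in which case the claimed bound $O(ST/N)^{1/3}$ is already $\Omega(1)$ and hence trivially valid, so restricting attention to the regime $ST/N = o(1)$ loses nothing.
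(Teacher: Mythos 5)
Your proposal is correct and follows exactly the paper's route: the paper proves this theorem by a one-line invocation of the decision-game case of \Cref{thm:bftononuniform} together with the $P$-BF-QROM bound $\nu(P,T)=1/2+O\bigl((P+T^2)/N\bigr)^{1/2}$ from \cite{chung2020tight}, with the $\gamma$-optimization at $\gamma=\Theta\bigl((ST/N)^{1/3}\bigr)$ (the same choice the paper makes explicit in its salting proof, \Cref{prop:salt}). Your splitting $\sqrt{a+b}\le\sqrt a+\sqrt b$ to keep the exponent $1/2$ on $T^2/N$ is precisely the ``simple observation'' the paper alludes to in the caption of \Cref{fig:PRG_result}, and your handling of the $\gamma>1$ edge case matches the paper's remark that the bound is trivial there.
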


This improves the previous result on $G_{\sf PRG}$ with quantum advice \cite{chung2020tight}, which was $1/2+\tilde{O}\left( \frac{S^5 T + S^4 T^2}{N} \right)^{1/19}$. 

\subsection{Salting Defeats Quantum Advice}

We start by defining the cryptographic mechanism called ``salting''.
\begin{definition}[Salted Games in the QROM]\label{def:salted_game}
Let $G$ be a game in the QROM as defined in \Cref{def:game}, with respect to a random oracle $H:[N] \to [M]$. It consists of two deterministic algorithms $\samp^H$ and $\ver^H$ and both algorithms make $T_\samp$ (or $T_\ver$) queries, respectively. 

A salted game $G_S$ with salt space $[K]$ is defined as the following: $G_S$ consists of two deterministic algorithms $\samp_S$ and $\ver_S$: 
\begin{itemize}
    \item $\samp^H_S$: on input $s, r$, it returns $(s, \samp^{H_s}(r))$. Here $H_s$ denotes oracle access to the oracle $H(s,\cdot)$.
    \item $\ver^H_S$: on input $s, r, \ans$, it returns $\ver^{H_s}(r, \ans)$.
\end{itemize}

In other words, for a fixed $H:[K] \times [N] \to [M]$ and a quantum algorithm $\As$, the game $G^H_{S, \As}$ is executed as follows:
\begin{itemize}
    \item A challenger $\Cs$ samples a uniformly random salt $s \gets [K]$ and $\ch \gets \samp^{H_s}(r)$ using uniformly random coins $r$. 
    \item A (uniform or non-uniform) quantum algorithm $\As$ has oracle access to $H$, takes $(s, \ch)$ as input and outputs $\ans$. 
    \item $b \gets \ver^{H_s}(r, \ans)$ is the outcome of the game. 
\end{itemize}
\end{definition}

\begin{lemma}[Salted Games in the $P$-BF-QROM, Lemma 7.2 in \cite{chung2020tight}] \label{lem:saltinpbfqrom}

Let $G$ be a game in the QROM, with security $\nu(T)$ against $T$-query quantum adversaries. Then for any $P$,
\begin{itemize}
    \item $G$ has security $\nu(P, T) \leq 2 \nu(T) + O(P / K)$ in the $P$-BF-QROM; 
    \item $G$ has security $\nu(P, T) \leq \nu(T) + O(\sqrt{P/K})$ in the $P$-BF-QROM. 
\end{itemize}
\end{lemma}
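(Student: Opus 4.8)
The plan is to bound the winning probability $\Pr_H[f^H = 0 \wedge G^H_{S,\Bs} = 1]$ of an arbitrary $P$-BF-QROM adversary $(f,\Bs)$ against the salted game $G_S$, where $H:[K]\times[N]\to[M]$ and $\nu(T)$ is the uniform security of the base game $G$. The guiding intuition is that the preprocessing $f$ spends only $P$ queries over the whole oracle, so it can meaningfully correlate with only a small fraction of the $K$ salt-slices $H(s,\cdot)$. Since the online challenge salt $s$ is drawn uniformly, for a typical $s$ the slice $H(s,\cdot)$ looks fresh and independent of the leftover state $\tau$ handed to $\Bs$; then $\Bs$ (making at most $T$ queries) is just playing the base game $G$ on a uniform oracle, so it wins with probability at most $\nu(T)$.

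First I would make ``few slices are touched'' precise quantumly. Running $f$ with Zhandry's compressed oracle, after $P$ queries the database holds at most $P$ input--output records, hence has support on at most $P$ distinct salts; equivalently, assigning to each query $j$ and salt $s$ the query magnitude $\mu_{j,s}$ (squared weight of the salt-coordinate being $s$) and setting $q_s = \sum_{j \le P}\mu_{j,s}$, one has $\sum_{s} q_s = P$, so $\E_{s \gets [K]}[q_s] = P/K$. For a uniformly random challenge salt $s$, the probability that $s$ is ``recorded'' by $f$ is at most $P/K$ by averaging, and conditioned on $s$ being unrecorded the slice $H(s,\cdot)$ is uniform and independent of $\tau$. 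I would then reprogram $H(s,\cdot)$ to fresh randomness and invoke the uniform security $\nu(T)$ of $G$.

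For general (search) games this yields the first bound: averaging over $s$, the contribution from the event ``$s$ recorded'' is at most $\Pr_s[s\text{ recorded}] \le P/K$, and the remaining contribution is at most $\nu(T)$; the factor $2$ in front of $\nu(T)$ and the $O(\cdot)$ absorb the approximation error between the compressed and standard oracles together with the reweighting of the salt distribution induced by conditioning on $f^H=0$. For decision games the trivial advantage is $\tfrac12$, so one cannot charge the event ``$s$ recorded'' crudely; instead I would bound the distinguishing advantage through a one-way-to-hiding / swapping argument. Letting $q_s$ now denote the total (preprocessing plus online) query weight on slice $s$, with $\E_s[q_s] = O(P/K)$ when $T \le P$, reprogramming slice $s$ to fresh randomness changes the joint output of $f$ followed by $\Bs$ by $O(\sqrt{q_s})$ in advantage, and by Cauchy--Schwarz $\E_{s}[\sqrt{q_s}] \le \sqrt{\E_s[q_s]} = O(\sqrt{P/K})$, giving $\nu(T)+O(\sqrt{P/K})$.

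The main obstacle is the rigorous quantum conditioning-and-reprogramming step: unlike the classical case one cannot literally condition on ``$s$ unrecorded'' and declare the slice independent, so the argument must control the errors introduced by (i) the compressed-oracle formalism, (ii) the rejection-sampling conditioning $f^H=0$, and (iii) reprogramming a random slice while $\tau$ is already entangled with the oracle. The square-root bound for decision games is the most delicate point, since it requires an O2H-style estimate sensitive to the per-slice query weight $q_s$ rather than to a coarse bad-event probability, together with the convexity step that turns $\E_s[q_s]=O(P/K)$ into an $O(\sqrt{P/K})$ advantage bound.
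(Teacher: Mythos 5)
You should first be aware that the paper does not actually prove this lemma: its entire ``proof'' is a one-line deferral to Lemma 7.2 of \cite{chung2020tight}, noting that the $P$-BF-QROM bound appears there as an intermediate step. So your attempt must be judged against that cited argument rather than against anything written in this paper. Your treatment of the first bullet is essentially that argument and is sound: decompose the post-preprocessing (purified/compressed-oracle) state according to whether the challenge salt $s$ appears among the at most $P$ records of the database $D$, bound the ``recorded'' branch using $\E_{s}\left[\Pr[s\in D]\right]\le P/K$, argue that on the unrecorded branch the slice $H(s,\cdot)$ is fresh so the uniform bound $\nu(T)$ applies by simulation, and pay a factor $2$ for the cross terms of this non-commuting projection. (One small imprecision: the factor $2$ comes from exactly those cross terms, not from ``approximation error between compressed and standard oracles'' --- there is none --- nor from conditioning on $f^H=0$, which is moot since the $P$-BF-QROM advantage in this paper is defined as the joint probability $\Pr[f^H=0 \wedge G^H_{\Bs}=1]$.)

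The decision-game bullet, however, contains a genuine error. You define $q_s$ as the \emph{total} (preprocessing plus online) query weight on slice $s$ and claim $\E_s[q_s]=O(P/K)$ when $T\le P$. This is false: the online adversary $\Bs$ receives the challenge salt $s$ as input and will in general devote all $T$ of its queries (and the verifier its $T_\ver$ queries) to slice $s$, so $\E_s[q_s]\ge \Omega(T)$ no matter how large $K$ is; an O2H bound driven by this $q_s$ yields only $\nu(T)+O(\sqrt{T+P/K})$, which is vacuous. The repair is to apply the reprogramming argument to the preprocessing phase alone: resample slice $s$ of the (purified) oracle immediately after $f$ terminates, bound the disturbance of the joint (advice, oracle) state by the \emph{preprocessing} weight only --- $\E_s\left[O\left(\sqrt{\Pr[s\in D]}\right)\right]\le O\left(\sqrt{P/K}\right)$ by your Cauchy--Schwarz step --- and then note that the online phase is the identical quantum channel in both worlds, so it cannot increase the distinguishing advantage, while in the resampled world the challenge slice is uniform and independent of $\tau$, giving winning probability at most $\nu(T)$. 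With that restructuring the bound $\nu(T)+O(\sqrt{P/K})$ follows, and in fact for arbitrary games rather than only decision games.
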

The second bullet point is better than the first one, when $G$ is a decision game.

\begin{proof}
The proof is subsumed by  the proof for Lemma 7.2 \cite{chung2020tight}. Although Lemma 7.2 shows the multi-instance security of $G_S$, its $P$-BF-QROM security is an intermediate step. 
\end{proof}

Combining with \Cref{thm:bftononuniform}, we have the following results about salting in the QROM.
\begin{theorem}\label{prop:salt}
For any game $G$ (as defined in \Cref{def:game}) in the QROM, let $\nu(T)$ be its security in the QROM. 
Let $G_S$ be the salted game with salt space $[K]$. Then $G_S$ has security $\delta(S, T)$ against $(S, T)$ non-uniform quantum adversaries with quantum advice,
\begin{itemize}
    \item $\delta(S, T) \leq 4 \nu(T) + O(S (T + T_\samp + T_\ver)/K)$;
    \item If $G_S$ is a decision game, then $\delta(S, T) \leq \nu(T) + O({S (T + T_\samp + T_\ver) /K})^{1/3}$. 
\end{itemize}
\end{theorem}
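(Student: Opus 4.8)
The plan is to prove \Cref{prop:salt} as a direct composition of the two ingredients already in hand: \Cref{lem:saltinpbfqrom}, which bounds the security of the salted game $G_S$ in the $P$-BF-QROM in terms of the uniform security $\nu(T)$, and \Cref{thm:bftononuniform}, which lifts any $P$-BF-QROM bound to a bound against $(S,T)$ non-uniform quantum adversaries with quantum advice. All of the genuinely quantum work (the alternating-measurement reduction) lives in \Cref{thm:bftononuniform}, so once we feed it the salted-game BF-QROM bound the remaining steps are algebraic. Throughout I set $P = S(T + T_\samp + T_\ver)$, the value for which the first part of \Cref{thm:bftononuniform} applies.

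For the first bullet (arbitrary games), I would invoke the first part of \Cref{thm:bftononuniform}, giving $\delta(S,T) \le 2\,\nu(P,T)$, where $\nu(P,T)$ denotes the security of $G_S$ in the $P$-BF-QROM. Substituting the first bullet of \Cref{lem:saltinpbfqrom}, namely $\nu(P,T) \le 2\nu(T) + O(P/K)$, yields
\begin{align*}
    \delta(S,T) \le 2\left(2\nu(T) + O(P/K)\right) = 4\nu(T) + O\!\left(\frac{S(T+T_\samp+T_\ver)}{K}\right),
\end{align*}
which is exactly the claimed bound.

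For the second bullet (decision games) I would instead use the decision-game form of \Cref{thm:bftononuniform}: writing the $P$-BF-QROM security of $G_S$ as $\nu(P,T) = 1/2 + \nu'(P,T)$, the theorem gives $\delta(S,T) \le 1/2 + \min_{\gamma>0}\{\nu'(P/\gamma,T) + \gamma\}$. The key move is to bound $\nu'$ using the second, decision-friendly, bullet of \Cref{lem:saltinpbfqrom}: applied at query budget $P/\gamma$ it reads $\nu(P/\gamma,T) \le \nu(T) + O(\sqrt{P/(\gamma K)})$, hence $\nu'(P/\gamma,T) \le \nu'(T) + O(\sqrt{P/(\gamma K)})$ with $\nu'(T) = \nu(T) - 1/2$. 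Since $1/2 + \nu'(T) = \nu(T)$, this leaves
\begin{align*}
    \delta(S,T) \le \nu(T) + \min_{\gamma > 0}\left\{ O\!\left(\sqrt{\frac{P}{\gamma K}}\right) + \gamma \right\}.
\end{align*}
The final step is to optimize the bracketed expression: balancing $\sqrt{P/(\gamma K)}$ against $\gamma$ by setting $\gamma = \Theta((P/K)^{1/3})$ makes both terms $\Theta((P/K)^{1/3})$, so the minimum is $O((P/K)^{1/3}) = O((S(T+T_\samp+T_\ver)/K)^{1/3})$, giving the stated bound.

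I do not expect a serious obstacle, since the theorem is essentially a corollary of the two results upstream. The only points needing care are (i) correctly tracking the additive $1/2$ in the decision-game case so the uniform term reassembles into $\nu(T)$ rather than $2\nu(T)$ — which is precisely why the decision bullet uses the $\min_\gamma$ second part of the main theorem rather than the lossy factor-$2$ first part — and (ii) verifying that the $\gamma$-optimization produces the cube-root exponent. Both are routine, so the real content of the statement is inherited from the alternating-measurement framework of \Cref{thm:bftononuniform}.
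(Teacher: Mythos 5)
Your proposal is correct and follows essentially the same route as the paper: both bullets are obtained by composing \Cref{lem:saltinpbfqrom} with \Cref{thm:bftononuniform} (the factor-$2$ part with the lemma's first bullet for general games, and the $\min_{\gamma}$ part with the lemma's second bullet, optimized at $\gamma = \Theta((P/K)^{1/3})$, for decision games). The only cosmetic difference is that you route the decision case through the $\nu' = \nu - 1/2$ specialization while the paper applies the general $\min_{\gamma>0}\left\{\gamma + \nu(P/\gamma,T)\right\}$ bound directly, which is algebraically the same computation.
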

\begin{proof}
We only show the second bullet point. The first one is similar and more straightforward. 

By \Cref{thm:bftononuniform}, $\delta(S, T) \leq \min_{\gamma > 0} \left\{ \gamma + \nu(P/\gamma, T) \right\}$ where $P = S(T + T_\ver + T_\samp)$. Since $\nu(P/\gamma, T) \leq \nu(T) + O(\sqrt{P/(K \gamma)})$ by \Cref{lem:saltinpbfqrom}, $\delta(S, T)$ takes its minimum when $\gamma = O(P/K)^{1/3}$. Our second result follows. 
\end{proof}

\fi

\section{Advantages of Quantum Advice in the QROM}
\label{sec:separation}

This section demonstrates a game in which non-uniform quantum algorithms with quantum advice have an exponential advantage over those with classical advice for some parameter regime $S, T$. Although the advantage only applies to some $S, T$ ranges \footnote{Specifically, we require $T = 0$, i.e., no online query.}, we believe it is the first step toward understanding a game in which quantum advice has an exponential advantage over classical advice for a wider range of $S, T$. 

The game is based on the recent work by Yamakawa and Zhandry \cite{yamakawa2022verifiable}. We start by explaining and recalling the basic ideas in their work. 
\begin{definition}[ \cite{yamakawa2022verifiable}, YZ Functions]
Let $n$ be a positive integer, $\Sigma$ be an exponentially (in $n$) sized alphabet and $C \subseteq \Sigma^n$ be an error correcting code over $\Sigma$.  
Let $H : [n] \times \Sigma \to \{0,1\}$ be a random oracle. The following function is called a YZ function with respect to $C$ and $\Sigma$: 
\begin{align*}
    f^H_C : C \to \{0,1\}^n & \\
    f^H_C(c_1, c_2, \cdots, c_n) &= H(1, c_1) || H(2, c_2) || \cdots || H(n, c_n)
\end{align*}
\end{definition}

We will consider the following game, which we call $G_{\sf YZ}$. The game is to invert a uniformly random image with respect to the YZ function. More formally,  
\begin{definition}[Inverting YZ Functions]
The game $G_{\sf YZ}$ is specified by two classical algorithms:
\begin{itemize}
    \item $\samp^H(r)$: it samples a uniformly random image $y = r \in \{0,1\}^n$;
    \item $\ver^H(r, \ans)$: it checks whether $\ans$ is a code in $C$ and $f^H_C(\ans) = r$. 
\end{itemize}
The queries made by each algorithm satisfy $T_\samp = 0$ and $T_\ver = n$. 
\end{definition}
Their idea is that, if we want to find a pre-image in $\Sigma^n$ of any $y \in \{0,1\}^n$, it is easy: simply inverting each $H(i, y_i)$. Nevertheless, to find a pre-image in $C$, this entry-by-entry brute-force no longer works. 
In their work, Yamakawa and Zhandry show that for some appropriate $C$, the above function is classically one-way and quantumly easy to invert. 
\begin{theorem}[Theorem 6.1, Lemma 6.3 and 6.9 in \cite{yamakawa2022verifiable}] \label{thm:YZ}
There exists some appropriate $C$, such that 
\begin{itemize}
    \item The game $G_{\sf YZ}$ has security $2^{-\Omega(n)}$ against $2^{n^c}$-query classical adversaries for some constant $0 < c < 1$; 
    \item There is a $\tilde{O}(n)$-query quantum algorithm that wins the game $G_{\sf YZ}$ with probability $1-\negl(n)$. Here $\tilde{O}$ hides a polylog factor. 
\end{itemize}
\end{theorem}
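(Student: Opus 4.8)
The plan is to construct the code $C$ explicitly as an algebraic code over an exponentially large alphabet and then establish the two halves — classical hardness and quantum easiness — as essentially dual consequences of the same structural property, namely how many codewords can be pinned down by per-coordinate constraints of constant density. Concretely, I would take $\Sigma = \F_q$ for $q = 2^{\poly(n)}$ and let $C$ be a (folded) Reed--Solomon code of block length $n$, rate bounded away from $1$, with relative minimum distance close to $1-\rho$ and efficiently list-decodable up to the Guruswami--Rudra radius. For a uniformly random oracle $H$ and target $y$, write $T_i = \{x \in \Sigma : H(i,x) = y_i\}$; each $T_i$ is an independent random subset of density $\approx 1/2$, and a preimage of $y$ in $C$ is exactly a codeword lying in the product set $T_1 \times \cdots \times T_n$. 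A first step common to both directions is a moment computation: the expected number of codewords in $\prod_i T_i$ is $|C|\cdot 2^{-n}$, and a second-moment / weight-enumerator argument (using the distance of $C$) shows this count concentrates, so that with probability $1-\negl(n)$ over $(H,y)$ a consistent codeword exists and the number of them is tightly controlled.

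For quantum easiness I would give the algorithm in three stages. First, for each coordinate $i$ prepare a normalized superposition $\sum_{x \in T_i} \ket{x}$; since $T_i$ has density $\approx 1/2$ this costs $O(1)$ expected queries per coordinate by measuring the oracle answer, amplified to $\tilde{O}(1)$ to drive the failure probability down, giving $\tilde{O}(n)$ queries total. Second --- the heart of the algorithm --- apply a code-structured unitary (polynomial interpolation / partial Fourier transform in the coefficient basis of the Reed--Solomon code) that coherently maps the superposition over the product set onto the subspace spanned by codewords; because any $k$ coordinates of a Reed--Solomon codeword determine the whole codeword, and because the constraints have constant density, this filtering succeeds with high probability provided the induced list-decoding instance stays below the decoding radius. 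Third, measure to obtain a candidate $c$ and run $\ver^H$ (which is $T_\ver = n$ queries) to confirm $f^H_C(c) = y$. The correctness analysis reduces to showing the filtered state has $1-\negl(n)$ weight on genuine codewords, which I would derive from the efficient list-recoverability of $C$ together with the concentration established above.

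For classical hardness against $2^{n^c}$-query adversaries I would argue via the random oracle transcript. After $q$ queries the oracle is uniform on every unqueried point, so to output a valid $c$ the adversary must, on each coordinate $i$, either have queried $(i,c_i)$ and seen the answer $y_i$, or else gamble that the freshly-revealed bit $H(i,c_i)$ equals $y_i$, which succeeds with probability $1/2$ independently; gambling on $m$ coordinates therefore costs a factor $2^{-m}$. Hence a successful $c$ must have all but $O(1)$ coordinates \emph{favorably queried}, i.e.\ $c_i \in S_i$ where $S_i = \{x : (i,x) \text{ queried and } H(i,x)=y_i\}$ has $|S_i| \le q$. The number of codewords meeting $c_i \in S_i$ on almost all coordinates is exactly a \emph{list-recovery} count for $C$ with input lists of size $q$; bounding it by the code's combinatorial list-recoverability and taking a union bound over these codewords against the $\approx 2^{-\Omega(n)}$ probability that any fixed one hits $y$ on its unqueried/gambled coordinates yields overall success probability $2^{-\Omega(n)}$.

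The main obstacle is the classical lower bound, specifically making it hold against \emph{subexponentially} many queries $q = 2^{n^c}$ rather than merely polynomially many. A naive union bound only rules out polynomial-query adversaries, because plain Reed--Solomon codes are list-recoverable with enormous lists once $|S_i|$ grows, so the codeword count $|\{c : c_i \in S_i \ \forall i\}|$ is not small enough. Overcoming this requires the code $C$ to possess strong list-recovery \emph{resistance} at list size $2^{n^c}$ --- the property that large product sets of density $1/2$ contain at most a $2^{-\Omega(n)}$ fraction of the total as codewords --- which is precisely the delicate parameter regime that forces the careful folded/algebraic construction of $C$ and accounts for the bulk of the technical work; I would flag isolating a code with simultaneously efficient quantum filtering and this anti-list-recovery guarantee as the crux, with the adaptivity of the classical adversary (handled by the transcript/fresh-coin argument above) as the secondary difficulty.
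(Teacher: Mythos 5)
This statement is never proved in the paper: as its own attribution says, it is imported verbatim from Yamakawa--Zhandry \cite{yamakawa2022verifiable} (their Theorem 6.1 and Lemmas 6.3, 6.9), and the paper only uses it as a black box, together with the separate observation that the YZ inverter's queries are non-adaptive and challenge-independent. So there is no internal proof to compare against; what you have written is an attempted reconstruction of the main result of the Yamakawa--Zhandry paper itself, and it has to be judged on those terms.

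On those terms, your classical-hardness half is structurally the right argument --- it is essentially how the paper itself argues in \Cref{sec:separation} (in the easier $T=0$, bit-fixing setting) and how YZ argue: bound, via $(\zeta,\ell,L)$ list-recoverability, the number of codewords agreeing with the favourably-queried lists $S_i$ on most coordinates, and charge fresh coin flips on unqueried coordinates. Your accounting is off, though: a successful output need only be favourably queried on all but $\zeta n=\Omega(n)$ coordinates (not ``all but $O(1)$''), and the conclusion should split as $\Pr[\text{$y$ has a preimage among the $\le L$ listed codewords}]\le L\cdot 2^{-n}$ plus $\Pr[\text{an unlisted output inverts}]\le 2^{-\zeta n}$, rather than a union bound of listed codewords ``hitting $y$.'' The genuine gap is the quantum half. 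Your step 2 --- a ``code-structured unitary'' that ``coherently maps the superposition over the product set onto the subspace spanned by codewords,'' claimed to succeed with high probability --- is exactly the crux of the YZ paper, and under its natural readings it is false. Projecting $\bigotimes_i \lvert T_i\rangle$ onto $\mathrm{span}\{\lvert c\rangle : c\in C\}$ succeeds with probability
\begin{align*}
\frac{\bigl|C\cap (T_1\times\cdots\times T_n)\bigr|}{\prod_i |T_i|}\;\approx\;\frac{|C|\cdot 2^{-n}}{(q/2)^n}\;=\;\frac{|C|}{q^n}\;\le\; q^{-(1-R)n},
\end{align*}
which is exponentially small for any rate $R$ bounded away from $1$ (and list-recoverability forces this). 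Likewise, interpolating a Reed--Solomon codeword from $k=Rn$ coordinates of the product state leaves only $2^{-(n-k)}$ squared mass on genuine preimages. Neither ``$k$ coordinates determine the codeword'' nor ``the constraints have density $1/2$'' repairs this; the actual YZ algorithm is not a projection or interpolation, and its analysis (which is precisely where the cited Lemmas 6.3 and 6.9 on the code's quantum decodability enter, via properties of the dual code) constitutes the technical core of their paper and is not reproduced by your sketch. As written, the proposal therefore does not establish the second bullet of the theorem.
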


Moreover, we observe that the quantum algorithm makes non-adaptive queries and the queries are independent of the challenge. Upon a challenge $y$ is received, the quantum algorithm does post-processing on the quantum queries without making further queries \footnote{For more details, please refer to Fig 1. in \cite{yamakawa2022verifiable}}. 

\medskip

We show our separation result below. 
\begin{theorem}[Separation of classical and quantum advice in the QROM]
There exists some appropriate $C$ (the same in \cite{yamakawa2022verifiable}) such that, 
\begin{itemize}
    \item $G_{\sf YZ}$ has security $2^{-\Omega(n)}$ against $(S, T = 0)$ non-uniform adversaries with {\bf classical} advice, for $S = 2^{n^c}/n$ and some constant $0 < c < 1$; 
    \item There is an $(S, T = 0)$ non-uniform adversary with {\bf quantum} advice that achieves success probability $1 - \negl(n)$, for $S = \tilde{O}(n)$. 
\end{itemize}
\end{theorem}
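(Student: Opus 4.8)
The plan is to handle the two bullets by completely different means: the quantum-advice direction is a constructive upper bound obtained by hardwiring the Yamakawa--Zhandry inverter into the advice, while the classical-advice direction is an impossibility result that I would route through the classical bit-fixing model of \Cref{def:pbfrom} and the query lower bound of \Cref{thm:YZ}. For the quantum adversary, recall the observation recorded just after \Cref{thm:YZ}: the $\tilde O(n)$-query YZ algorithm makes all of its queries non-adaptively and \emph{independently of the challenge} $y$, and the post-processing that extracts a preimage from $y$ makes no further queries. I would therefore let the advice $\ket{\sigma_H}$ be the internal state of the YZ algorithm immediately after its query phase but before the $y$-dependent post-processing; this state depends on $H$ only and lives on the YZ algorithm's workspace, which I claim can be taken to be $S=\tilde O(n)$ qubits. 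The online adversary, given $y$ and $\ket{\sigma_H}$ and making $T=0$ queries, applies exactly the $y$-dependent post-processing unitary and measures, so its output distribution is identical to that of the full YZ algorithm; by \Cref{thm:YZ} it inverts with probability $1-\negl(n)$. The only point needing care here is the qubit count, i.e.\ verifying that the stored workspace is $\tilde O(n)$ qubits rather than merely $\poly(n)$.

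For the classical-advice lower bound, the idea is that $S$ classical bits of advice are essentially no more useful than knowing $H$ on a bounded set of coordinates. First I would invoke the classical presampling (bit-fixing) theorem of Coretti et al.\ \cite{coretti2018random}: the success probability of any $(S,T=0)$ classical-advice adversary against $G_{\sf YZ}$ is bounded by its success probability in the $P$-BF-ROM (\Cref{def:pbfrom}) with $P=\tilde O(S)$, up to a small additive error. With $S=2^{n^c}/n$ this gives $P=\tilde O(S)\le 2^{n^{c}}$ for large $n$ (the $\polylog$ and $1/n$ factors keep $P$ below $2^{n^c}$, up to relabelling the constant). It then remains to bound the $P$-BF-ROM advantage with no online queries, and here I would use that the paper's BF-ROM security is the \emph{joint} quantity $\nu(P,0)=\max_{f,\mathcal{B}}\Pr_H[f^H=0 \wedge G^H_{\sf YZ,\mathcal{B}}=1]$. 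Hence the pair $(f,\mathcal{B})$ folds into a single classical ROM algorithm that makes the $P$ queries of $f$ and then outputs $\mathcal{B}$'s query-free answer, and $\nu(P,0)$ is at most the unconditional success probability of this $P$-query algorithm. By \Cref{thm:YZ} any $2^{n^c}$-query classical algorithm inverts with probability at most $2^{-\Omega(n)}$, and since $P\le 2^{n^c}$ we conclude $\nu(P,0)\le 2^{-\Omega(n)}$; combining with the presampling bound yields overall success $2^{-\Omega(n)}$.

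It is worth noting \emph{why} having no online query makes the fixed coordinates decisive, since this also explains why citing \Cref{thm:YZ} as a black box is legitimate. After the $\le P$ coordinates $F\subseteq[n]\times\Sigma$ are fixed (with $\sum_i|F_i|\le P$), a query-free online adversary that answers a codeword $c$ wins only if $H(i,c_i)=y_i$ for every $i$; any coordinate $(i,c_i)\notin F$ contributes an independent uniform bit, so to win with non-trivial probability $c$ must lie entirely inside the box $F_1\times\cdots\times F_n$, and then $y=f^H_C(c)$ is already determined. Thus the set of winnable challenges has size at most $|C\cap(F_1\times\cdots\times F_n)|$, which the YZ code forces to be $2^{n-\Omega(n)}$ whenever $\sum_i|F_i|\le 2^{n^c}$; this combinatorial box bound is precisely the content of the query lower bound of \Cref{thm:YZ}, so the reduction above is faithful.

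The main obstacle I anticipate is the presampling step in the $T=0$ regime. The standard AI-ROM-to-BF-ROM error terms scale with the number of \emph{online} queries (typically like $ST/P$), and a careless application would either blow $P$ up beyond $2^{n^c}$ --- breaking the YZ bound --- or leave only a $1/\poly(n)$ error rather than $2^{-\Omega(n)}$. The resolution is that with $T=0$ the online adversary extracts no information beyond the advice, so I would use the form of the Coretti et al.\ theorem that trades the advice directly for $P=\tilde O(S)$ fixed coordinates with error decaying in $P$ and independent of online queries, and then choose constants so that $P\le 2^{n^c}$ and the residual error is $2^{-\Omega(n)}$. Making these three constraints --- $P$ large enough to absorb the advice, $P\le 2^{n^c}$ to invoke \Cref{thm:YZ}, and a $2^{-\Omega(n)}$ residual error --- hold simultaneously for $S=2^{n^c}/n$ is the delicate bookkeeping at the heart of the argument.
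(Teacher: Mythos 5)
Your second bullet (quantum advice) is essentially the paper's own proof: store the YZ algorithm's challenge-independent query state as the $\tilde O(n)$-qubit advice and run the $y$-dependent post-processing online; the qubit count works out because the algorithm makes $\tilde O(n)$ queries, each on an $O(\log n)$-qubit register. The genuine gap is in the classical-advice bullet, at the ``folding'' step. The bit-fixing security that any presampling theorem hands you --- and the quantity consumed by \Cref{thm:classical_bf_to_nonuniform} --- is the \emph{rejection-sampled (conditional)} one: the game is played with $H$ conditioned on $f^H=0$, so the online adversary effectively knows the $\le P$ lazily-fixed coordinates for free (and receives $f$'s state $\tau$). That conditional quantity is \emph{not} upper bounded by the success probability of a uniform $P$-query algorithm, so \Cref{thm:YZ} cannot be invoked as a black box. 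Concrete counterexample to your folding pattern: take the game ``output a preimage of $0^n$'' ($\samp$ trivial, $T_\ver=1$). Its conditional $P$-BF-ROM security with $P=1$, $T=0$ is $1$ (let $f$ fix $H(1)=0^n$; the online adversary outputs $1$), while every uniform $P$-query algorithm wins with probability $O(P/2^n)$; correspondingly, $n$ bits of advice invert with constant probability, even though your chain ``AI $\le$ BF $\le$ unconditional $P$-query'' would certify $2^{-\Omega(n)}$ security for that game. The joint-probability formula written in the paper's definition, on which your folding explicitly relies, cannot be the operative one: under the joint reading, $P$-BF security collapses to plain $(P+T)$-query security (since $f$ can forward its query answers through $\tau$), which contradicts the cited lemma that OWF has $\nu(P,T)=O\left((P+T^2)/\min\{N,M\}\right)$ in the $P$-BF-QROM and breaks the proof of \Cref{lem:mis_to_bf}, which bounds a \emph{conditional} probability by $\nu(P,T)$.

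What the missing step requires --- and what the paper actually does --- is a game-specific analysis inside the conditional BF-ROM. Lazily sample $H$, so that conditioning on $f^H=0$ determines only $P$ coordinates; then invoke the $(\zeta,\ell,L)$ list recoverability of the YZ code with $P=2^{n^c}<\ell$, $\zeta=\Omega(1)$, $L=2^{n^{c'}}$: at most $L$ ``good'' codewords have more than a $(1-\zeta)$-fraction of their coordinates among the fixed positions, so either the uniformly random challenge $y$ equals the image of a good codeword (probability at most $L/2^n$), or any answer the adversary outputs has at least $\zeta n$ coordinates whose oracle values are still unsampled uniform bits, so it wins with probability at most $2^{-\zeta n}$; the total is $2^{-\Omega(n)}$. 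Your third paragraph sketches exactly this box bound, but the claim that it ``is precisely the content of the query lower bound of \Cref{thm:YZ}'' is where the argument cheats: the box bound is a combinatorial property of the code that must be invoked separately (it is one of the cited lemmas of \cite{yamakawa2022verifiable}, not their query lower bound), and it does not follow from the black-box statement that $2^{n^c}$-query uniform algorithms fail. Your presampling bookkeeping (getting $P\approx S\cdot T_\ver = 2^{n^c}$ with residual error $2^{-\Omega(n)}$) is fine --- the paper's \Cref{thm:classical_bf_to_nonuniform} even gives it with no additive error --- but the second link of your chain must be replaced by the list-recoverability case analysis.
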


\ifllncs
We refer readers to a detailed proof in the appendix. 
\else 
\begin{proof}
We first show the second bullet point. 
Let the quantum algorithm in \Cref{thm:YZ} be $\Bs$. 
In the non-uniform quantum adversary, quantum advice is the non-adaptive queries made by $\Bs$ and the online stage is the post-processing by $\Bs$. It is straightforward that the non-uniform algorithm achieves the same probability as $\Bs$, which is $1 - \negl(n)$. Since each query has $O(\log n)$ qubits and $\Bs$ makes $\tilde{O}(n)$ queries, the total size of the quantum advice is still $\tilde{O}(n)$.

Next, we show the first bullet point. In the first bullet point of this theorem, we do not distinguish between non-uniform quantum adversaries with classical advice and non-uniform classical adversaries. The reason is that the online algorithm does not make any query, i.e., $T = 0$. These two types of algorithms are equivalent when $T = 0$. 

Thus, we consider success probabilities of non-uniform classical adversaries. By a classical analog of our main theorem \Cref{thm:bftononuniform} (\Cref{thm:classical_bf_to_nonuniform}), we only need to show its success probability in the $P$-BF-ROM (\Cref{def:pbfrom}) where $P = S (T + T_\samp + T_\ver) = S T_\ver = 2^{n^c}$. 

Assume a random oracle is lazily sampled. In other words, an outcome of the random oracle on $x$ is sampled only if the outcome is queried by an algorithm; otherwise, the outcome is marked as ``not sampled''.  
Conditioned on any $P$-query $f$ outputs $0$, the random oracle is only fixed on $P$ positions and the rest of its outputs are still not sampled. The error correcting code $C$ used in \cite{yamakawa2022verifiable} satisfies a property called $(\zeta, \ell, L)$ list recoverability: 
\begin{itemize}
    \item For any subset $S_i \subseteq \Sigma$ such that $|S_i| \leq \ell$ for every $i \in [n]$, we have 
    \begin{align*}
        |{\sf Good}| = \left| \left\{ (x_1, \cdots, x_n) \in C : |\{ i \in [n]: x_i \in S_i\}| \geq (1-\zeta) n \right\}  \right| \leq L.
    \end{align*}
    In other words, the total number of codewords in $C$ with hamming distance to $S_1 \times S_2 \times \cdots \times S_n$ smaller than $\zeta n$ is bounded by $L$. Here hamming distance to $S_1 \times S_2 \times \cdots \times S_n$ is defined as the number of coordinates $i$ whose $x_i$ is not in the corresponding $S_i$.
    
    We call this set of codewords ${\sf Good}$.
    
    \item $P = 2^{n^c} < \ell$, $\zeta = \Omega(1)$ and $L = 2^{n^{c'}}$ for some $0 < c' < 1$. 
\end{itemize}

In $G_{\sf YZ}$, when a challenge $y$ is sampled uniformly at random from $\{0,1\}^n$, there are two cases:
\begin{itemize}
    \item \textbf{Case 1}: there exists a codeword $c$ in ${\sf Good}$, such that $y = f^H_C(c)$. This case happens with probability at most $|{\sf Good}|/2^n \leq L / 2^n$.
    \item \textbf{Case 2}: complement of Case 1. In this case, an adversary wins only if it outputs a codeword that is not in ${\sf Good}$. 
    
    For every codeword $c = (x_1, x_2, \cdots, x_n) \not\in {\sf Good}$, there are at least $\zeta n$ coordinates whose random oracle outputs (i.e., $H(i,x_i)$) have not been sampled yet in the lazily sampled random oracle. For any $c \not \in {\sf Good}$, $\Pr[f^H_C(c) = y] \leq 2^{-\zeta n}$. Therefore, regardless of the algorithm's output, the success probability is at most $2^{-\zeta n}$. 
\end{itemize}

The overall winning probability is bounded by $L/2^n + 2^{-\zeta n} = 2^{-\Omega(n)}$. We conclude the first bullet point of the theorem. 
\end{proof}
\fi

\ifllncs
  \bibliographystyle{alpha}
  \bibliography{bib}
\else
  \printbibliography
\fi

\appendix

\ifllncs 

\else 
\fi 

\ifllncs 
\input{appendix_helpfullemma}

\section{\texorpdfstring{Characterization of Alternating Measurements and Proof of \Cref{thm:misprob}}{Characterization of Alternating Measurements}}

\fi

\section{Classical Version of Our Main Theorem}
\label{sec:classcai_analog}
The following theorem is a classical version of our main theorem (\Cref{thm:bftononuniform}), improved from Theorem 1 in \cite{guo2021unifying}. 

\begin{theorem}\label{thm:classical_bf_to_nonuniform}
Let $G$ be any game with $T_\samp, T_\ver$ being the number of queries made by $\samp$ and $\ver$. 
For any $S, T$, let $P = S (T + T_\ver + T_\samp)$. 

If $G$ has security $\nu(P, T)$ in the $P$-BF-ROM, then it has security $\delta(S, T) \leq 2 \cdot \nu(P, T)$ against $(S, T)$ non-uniform classical algorithms with classical advice. 
\end{theorem}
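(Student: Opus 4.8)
The plan is to mirror the proof of the quantum statement \Cref{thm:bftononuniform}, but to exploit the fact that classical advice can be copied and a classical algorithm can be rerun from scratch, so that the delicate alternating-measurement machinery collapses to an elementary sequential multi-instance argument. Fix an $(S,T)$ non-uniform classical algorithm $\As$ with classical advice $\{s_H\}_H$, and for each oracle $H$ let $\delta_H$ denote its per-oracle success probability, so its overall advantage is $\delta=\E_H[\delta_H]$. For an integer $g\ge 1$ I would consider the sequential $g$-fold game in which $g$ independent challenges are drawn from the same oracle $H$ and the adversary must answer all of them. Because the advice is classical, $\As$ can reuse the same $s_H$ and be reset between rounds with fresh coins, so for each fixed $H$ the $g$ rounds are independent and the multi-instance success is exactly $\delta_H^g$; averaging gives $\E_H[\delta_H^g]$. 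By Jensen's inequality (\Cref{lem:jesen}), $\delta \le (\E_H[\delta_H^g])^{1/g}$ for every $g$.

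The next step removes the advice. Following the random-guessing argument (the classical counterpart of the maximally-mixed-state trick used in the proof of \Cref{thm:bftononuniform}), a uniform algorithm that samples a guess $\hat s\in\{0,1\}^S$ uniformly, keeps it across all rounds, and otherwise runs $\As$'s online procedure wins the $g$-fold game with probability at least $2^{-S}\E_H[\delta_H^g]$, since the guess equals $s_H$ with probability exactly $2^{-S}$. It therefore suffices to upper bound the maximal success probability of any uniform multi-instance algorithm by $\nu(P,T)^g$. Here I would telescope: writing $c_t$ for the probability that round $t$ is won conditioned on rounds $1,\dots,t-1$ being won, the total success is $\prod_{t=1}^g c_t$. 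Each factor $c_t$ is bounded directly in the $P$-BF-ROM of \Cref{def:pbfrom} by taking the preprocessing $f$ to simulate the first $t-1$ rounds (passing its whole classical state $\tau$, including the guess and the accumulated oracle answers) and to output $0$ iff all of them are won, and taking the online algorithm to play round $t$. Since $f$ makes at most $(t-1)(T+T_\samp+T_\ver)\le(g-1)(T+T_\samp+T_\ver)$ queries and rejection sampling exposes exactly the conditional online success, monotonicity of $\nu$ in $P$ gives $c_t\le\nu\bigl((g-1)(T+T_\samp+T_\ver),T\bigr)\le\nu(P,T)$ for $P=S(T+T_\samp+T_\ver)$. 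Multiplying the $g$ factors yields $\E_H[\delta_H^g]\le 2^S\,\nu(P,T)^g$, and setting $g=S$ gives $\delta\le\bigl(2^S\nu(P,T)^S\bigr)^{1/S}=2\,\nu(P,T)$.

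The step I expect to require the most care is the claim that the $P$-BF-ROM security bounds the \emph{conditional} round-$t$ success $c_t$ rather than the joint probability of winning rounds $1,\dots,t$: this is what makes the product telescope into a full $g$-th power of $\nu$ (and hence lets the $2^{-S}$ advice loss be absorbed by choosing $g=S$), and it hinges on the fact that the rejection-sampling stage of \Cref{def:pbfrom} conditions $H$ on $f$ accepting and hands the online algorithm the renormalized posterior. Verifying this, together with the bookkeeping that $f$'s query count is $(t-1)(T+T_\samp+T_\ver)$, is the crux. Notably, the classical argument is strictly simpler than its quantum analogue: every round of the sequential game is a genuine instance of $G$ whose conditional success maps to an honest BF-ROM attack, so unlike the alternating-measurement proof (which must handle the physically meaningless even rounds via the monotonicity of \Cref{cor:monotone_mis_bound}) no monotonicity lemma is needed here.
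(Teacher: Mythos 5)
Your proof is correct and follows exactly the route the paper intends for this appendix theorem (which it states without an explicit proof, as the classical analog of \Cref{thm:bftononuniform} and of the classical-advice reduction sketched in \Cref{fig:reduction_OWF}): the sequential multi-instance game with re-runnable classical advice, the $2^{-S}$ advice-guessing step, and the per-round reduction of the conditional success probability $c_t$ to the $P$-BF-ROM via a preprocessing $f$ that simulates the earlier rounds, with $g=S$ turning the $2^{S}$ loss into the factor $2$ and eliminating the additive $\gamma$ of Theorem 1 in \cite{guo2021unifying}. Your closing observation — that no monotonicity lemma is needed because every round of the classical sequential game is a genuine instance of $G$, unlike the physically meaningless even rounds of the alternating-measurement game — is precisely why the classical argument is simpler, and your use of the conditional (rejection-sampling) interpretation of BF-ROM security is consistent with how the paper itself uses $\nu(P,T)$ in the proof of \Cref{lem:mis_to_bf}.
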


In Theorem 1 in \cite{guo2021unifying}, $P =  (S + \log \gamma^{-1}) (T + T_\ver + T_\samp)$ and there is an extra additive term $\gamma$ for $\delta(S, T)$. 
\begin{theorem}[Theorem 1 in \cite{guo2021unifying}]
Let $G$ be any game with $T_\samp, T_\ver$ being the number of queries made by $\samp$ and $\ver$. 
For any $S, T, \gamma > 0$, let $P =  (S + \log \gamma^{-1}) (T + T_\ver + T_\samp)$. 

If $G$ has security $\nu(P, T)$ in the $P$-BF-ROM, then it has security $\delta(S, T) \leq 2 \cdot \nu(P, T) + \gamma$ against $(S, T)$ non-uniform classical algorithms with classical advice. 
\end{theorem}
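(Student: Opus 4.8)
The plan is to transcribe the three-step reduction behind \Cref{thm:bftononuniform} into the classical world, where it becomes considerably simpler: classical advice can be freely copied and reused, so the ``alternating measurement game'' collapses to an ordinary $k$-fold independent repetition, and no eigenvalue decomposition, no $\IsUniform$ measurement, and no monotonicity argument are needed. Fix a game $G$ and an $(S,T)$ non-uniform classical algorithm $\As = (\{s_H\}_H, \{A_\inp\}_\inp)$, and for each $H$ let $\delta_H$ denote its winning probability on oracle $H$, so that its overall advantage is $\delta = \mathbb{E}_H[\delta_H]$. Define the classical multi-instance game $G^{\otimes k}$ in which the challenger samples one $H$ and the algorithm plays $k$ independent rounds of $G$ (fresh random coins $r$ each round) while holding the \emph{same} advice throughout, winning iff all $k$ rounds are won. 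Since the advice $s_H$ is fixed once $H$ is fixed and the rounds then use independent coins, the winning probability on a fixed $H$ is exactly $\delta_H^k$, hence $\mathbb{E}_H[\delta_H^k]$ overall.

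First I would remove the advice exactly as in Step~2 of the quantum proof: a uniform algorithm that guesses a string $s \in \{0,1\}^S$ uniformly at random and runs the online part with $s$ agrees with the true advice $s_H$ with probability $2^{-S}$, so it wins $G^{\otimes k}$ with probability at least $2^{-S}\mathbb{E}_H[\delta_H^k]$. It therefore suffices to upper bound the winning probability of an arbitrary \emph{uniform} algorithm in $G^{\otimes k}$.

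For that bound I would mimic \Cref{lem:mis_to_bf}. Factor the uniform multi-instance probability as a telescoping product $\prod_{t=1}^{k}\epsilon^{(t)}$, where $\epsilon^{(t)}$ is the probability of winning the $t$-th round conditioned on winning the first $t-1$. Each factor is handled by the classical $P$-BF-ROM reduction: take $f$ to be the classical procedure that runs the first $t-1$ rounds, outputs $0$ iff all of them are won, and hands the accumulated transcript $\tau$ to the online algorithm $\Bs$ that plays the $t$-th round (the classical BF-ROM of \Cref{def:pbfrom}, like its quantum counterpart, permits a shared state $\tau$). Since each round costs $T_\samp + T + T_\ver$ queries, $f$ makes at most $(t-1)(T + T_\samp + T_\ver)$ queries, so conditioning on $f^H = 0$ and invoking $\nu$-security in the BF-ROM gives $\epsilon^{(t)} \leq \nu\big((t-1)(T+T_\samp+T_\ver),\,T\big) \leq \nu(P,T)$ for all $t \leq k$, using monotonicity of $\nu$ in its first argument together with $P = S(T+T_\ver+T_\samp) \geq (k-1)(T+T_\ver+T_\samp)$. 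Multiplying the $k$ factors yields uniform multi-instance security at most $\nu(P,T)^k$.

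Combining, $2^{-S}\mathbb{E}_H[\delta_H^k] \leq \nu(P,T)^k$, so $\mathbb{E}_H[\delta_H^k] \leq \big(2^{1/k}\nu(P,T)\big)^k$; choosing $k=S$ makes $2^{1/k \cdot S} = 2$, giving $\mathbb{E}_H[\delta_H^S] \leq (2\nu(P,T))^S$, and then \Cref{lem:jesen} (applied with weights uniform over $H$ and values $\delta_H$) yields $\delta = \mathbb{E}_H[\delta_H] \leq 2\nu(P,T)$. This already improves \cite{guo2021unifying}: the clean moment/Jensen argument replaces their presampling-with-slack analysis, removing both the additive $\gamma$ term and the $\log\gamma^{-1}$ overhead in $P$. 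I expect the only genuinely delicate step to be the BF-ROM reduction bounding $\epsilon^{(t)}$ --- in particular, checking that conditioning on the success of the first $t-1$ rounds is faithfully modelled by the rejection-sampling distribution of $H$ in \Cref{def:pbfrom} and that the transcript $\tau$ carries exactly the information $\Bs$ needs; the remaining manipulations are routine.
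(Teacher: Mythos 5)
Your proof is correct, but note that the paper itself never proves this statement: it is imported verbatim from \cite{guo2021unifying} and stated only as a foil for the paper's own improved classical bound, \Cref{thm:classical_bf_to_nonuniform}, which the paper likewise asserts without writing out a proof. What you have written is in effect a proof of that \emph{stronger} theorem --- $P = S(T+T_\samp+T_\ver)$ and $\delta(S,T) \leq 2\nu(P,T)$ with no additive $\gamma$ --- obtained by specializing the paper's three-step quantum reduction (\Cref{thm:bftononuniform}, \Cref{fig:recap}) to classical advice, where the alternating-measurement game degenerates into ordinary $k$-fold repetition with reused advice; the cited statement then follows immediately since $\nu(P,T)$ is monotone in $P$ and the bound $2\nu+\gamma$ is weaker. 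This is a genuinely different route from the one taken in \cite{guo2021unifying} itself: there, the conditional oracle distribution (conditioned on the advice and on earlier successes) is approximated by a convex combination of bit-fixing sources via concentration bounds, and the approximation error of that step is precisely what produces the additive $\gamma$ and the $\log\gamma^{-1}$ blow-up in $P$. Your per-round reduction--plus--Jensen argument avoids that approximation entirely, which is exactly the improvement the paper advertises. On the step you flag as delicate: your reading is the right one --- the $P$-BF-ROM bound must be applied to the \emph{conditional} winning probability under the rejection-sampling distribution of \Cref{def:pbfrom}, with the transcript $\tau$ handed to the online stage; this is how the paper itself invokes $\nu(P,T)$ in the proof of \Cref{lem:mis_to_bf}, and the footnote in \Cref{def:pbfqrom} explicitly licenses the shared state $\tau$ (even though the displayed definition of $\nu$ is phrased as a joint probability, the paper's usage, the cited BF-ROM lemmas, and your proof are all consistent with the conditional reading). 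No gap.
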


\ifllncs 
\input{appendix_seperation_proof}
\fi 

\end{document}